\DeclareMathOperator{\prefto}{\succcurlyeq}
\DeclareMathOperator{\sprefto}{\succ}
\DeclareMathOperator*{\argmax}{arg\,max}
\newcommand{\CCP}{\mathcal C}
\newcommand{\policyspace}{X}
\newcommandx{\alitodo}[2][1=]{\todo[linecolor=blue,backgroundcolor=blue!25,bordercolor=blue,#1]{#2}}
\renewcommand{\equiv}{:=}
\newcommand{\asfav}{X_A^*}
\renewcommand{\Re}{\mathbb{R}}
\newcommand\smallO{
  \mathchoice
    {{\scriptstyle\mathcal{O}}}
    {{\scriptstyle\mathcal{O}}}
    {{\scriptscriptstyle\mathcal{O}}}
    {\scalebox{.7}{$\scriptscriptstyle\mathcal{O}$}}
  }
\providecommand{\customgenericname}{}
\newcommand{\newcustomtheorem}[2]{%
  \newenvironment{#1}[1]
  {%
   \renewcommand\customgenericname{#2}%
   \renewcommand\theinnercustomgeneric{##1}%
   \innercustomgeneric
  }
  {\endinnercustomgeneric}
}
\theoremstyle{plain}
\newtheorem{theorem}{Theorem}
\newtheorem{claim}{Claim}
\newtheorem{fact}{Fact}
\newtheorem{corollary}{Corollary}
\newtheorem{lemma}{Lemma}
\newtheorem{definition}{Definition}
\newtheorem{example}{Example}
\theoremstyle{definition}
\theoremstyle{remark}
\crefname{claim}{claim}{claims}
\crefname{fact}{fact}{facts}
\newcommand\posscite[1]{\citeauthor{#1}'s (\citeyear{#1})}
\begin{document}

\begin{titlepage}
\title{Who Controls the Agenda Controls the Polity\thanks{We thank Renee Bowen, Navin Kartik, David Levine, Elliot Lipnowski, Jay Lu, Steve Matthews, Ludvig Sinander, Alex Wolitzky, and various conference and seminar audiences for useful comments. We would also like to thank George Orwell, who recognized the importance of agenda control (albeit in a somewhat different setting), for inspiring our title. We gratefully acknowledge financial support from NSF Grants SES-0137129 (Bernheim) and SES-1530639 (Ali \& Bernheim). Bloedel gratefully acknowledges the hospitality and financial support of Caltech and Stanford, where parts of this research were completed.}\vspace{.2in}
}  

\author{
{ S. Nageeb Ali}\thanks{Department of Economics, Pennsylvania State University. Email: \href{mailto:nageeb@psu.edu}{nageeb@psu.edu}.}
\and 
{B. Douglas Bernheim}\thanks{Department of Economics, Stanford University. Email: \href{mailto:bernheim@stanford.edu}{bernheim@stanford.edu}.}
\and 
{Alexander W. Bloedel}\thanks{Department of Economics, UCLA. Email: \href{mailto:abloedel@econ.ucla.edu}{abloedel@econ.ucla.edu}.}
\and  
{Silvia Console Battilana}\thanks{Auctionomics. Email: \href{mailto:silviacb@auctionomics.com}{silviacb@auctionomics.com}.}
\vspace{.2in}}
\date{December 1, 2022}
\maketitle

\begin{abstract}

This paper models legislative decision-making with an agenda setter who can propose policies sequentially, tailoring each proposal to the status quo that prevails after prior votes. Voters are sophisticated and the agenda setter cannot commit to her future proposals. Nevertheless, the agenda setter obtains her favorite outcome in every equilibrium regardless of the initial default policy. Central to our results is a new condition on preferences, \emph{manipulability}, that holds in rich policy spaces, including spatial settings and distribution problems. Our results overturn the conventional wisdom that voter sophistication alone constrains an agenda setter's power.

\end{abstract}

\thispagestyle{empty} 
\vspace{0.4in}

JEL Codes: D72, C78\medskip

Keywords: Agenda-setting, concentration of power.
\end{titlepage}

\setcounter{page}{1}

\section{Introduction}\label{Section-Introduction}

A central goal of democratic institutions is to balance competing interests by distributing political power evenly among society's members. A recurring concern is that power is often highly concentrated in the hands of a select few leaders, who steer policy in their favor by manipulating institutional procedures. Our objective is to understand how seemingly pro-democratic institutions can yield this concentration of power.

Legislative institutions merit particular attention in this regard because of the central role they typically play in determining policy. Although legislative procedures vary, they often involve (i) a single agenda setter (e.g., committee chair or party leader) who acts as a gatekeeper for proposals, and (ii) a group of legislators who vote to approve or reject proposals. While the agenda setter controls which policies come up for a vote, she cannot unilaterally dictate policy because passage of her proposals requires majority support. Even so, the degree to which voting constrains the agenda setter is unclear. 

An important literature in political economy seeks to understand the power that flows from agenda control. In a seminal contribution, \cite{mckelvey1976intransitivities} observes that, in rich policy settings, the agenda setter can exploit cycles in the majority relation to obtain \emph{any} desired outcome by appropriately sequencing proposals, and is therefore unconstrained by the majority's will. However, this striking conclusion assumes legislators are non-strategic and vote myopically, without accounting for subsequent modifications of the policy. Sophisticated legislators, who have learned from experience, anticipate the paths of proposals and votes, and accept or reject proposals based on the final outcomes to which those paths lead. Subsequent research concludes that the requirement of majority approval constrains the agenda setter's power when voters are sophisticated. Specifically, \cite{shepsle1984uncovered} show that the agenda setter can only achieve policies that are not \emph{covered} by the initial default option.\footnote{A policy $x$ \emph{covers} $y$ if a majority of voters strictly prefer $x$ to $y$, and every policy that is majority-preferred to $x$ is also majority preferred to $y$. \label{footnote-covering}} In rich policy settings, this constraint often is stringent \citep{mckelvey1986covering}. 

 The prior literature generally assumes the agenda setter must commit to a sequence of proposals, which is set in stone regardless of which proposals pass. And yet, in many settings, nothing prevents the agenda setter from bringing different proposals to the floor depending on how previous votes turned out. The impact of the fixed-agenda assumption is complex. On the one hand, it attenuates the power of agenda setters because it precludes them from tailoring their proposals to the prevailing circumstances. On the other hand, it magnifies their power because it endows them with the ability to commit to proposals they might not want to make when the time arrives. 
 Relative to this benchmark, it is unclear how the benefits that real-world agenda setters accrue from being able to flexibly tailor proposals stack up against the costs they incur from being unable to commit.
 This is the question we address.

We study a novel model of \emph{real-time agenda control}. A single agenda setter and a group of voters choose a policy using the following procedure: for each of finitely many rounds, (i) the agenda setter can propose an alternative to the contemporaneous default policy (where the first-round default policy is given exogenously), and (ii) a vote is taken to determine whether the current proposal or the contemporaneous default policy will serve as the default policy in the next round. The policy that prevails in the final vote (for the terminal round) is implemented and determines payoffs. We investigate the subgame perfect equilibria of this game, assuming neither voters nor the agenda setter can commit to their future decisions. 

Real-time agenda control has stark implications for collective choice problems that satisfy the following property: a collective choice problem is \emph{manipulable} if, for every policy $x$ other than the agenda setter's favorite, there is an alternative policy $y$ that both she and a majority of voters strictly prefer to $x$. We show that canonical formulations of spatial and distributive politics satisfy manipulability. More precisely, the standard spatial model almost-surely satisfies manipulability if the policy space has three or more dimensions. We also define a broad class of ``distribution problems'' that satisfy manipulability. This class includes divide-the-dollar problems, as well as problems that mix non-zero-sum policies with transfers. Indeed, augmenting any collective choice problem with pork or transfers makes that problem manipulable. Our favored interpretation of manipulability, and its prevalence, is that the discordance of majority will in multidimensional problems inevitably creates opportunities for the agenda setter. 

We establish the potency of real-time agenda control when the environment is manipulable. Our main finding, stated informally, is as follows:
\begin{quote}
\textbf{Main Result.} \emph{If there are sufficiently many rounds, the agenda setter obtains her favorite policy in every equilibrium regardless of the initial default policy if and only if the collective choice problem is manipulable.}
\end{quote}
Thus, for a wide range of collective choice problems, the agenda setter effectively dictates policy, despite voters' sophistication and her lack of commitment. Manipulability is necessary and sufficient for this conclusion; absent manipulability, we show that equilibrium outcomes sometimes remain bounded away from the agenda setter's favorite. 

\Cref{Theorem-GenericFiniteAlternatives,Theorem-epsilongrid,Theorem-Capricious} formalize our main conclusion under a range of technical conditions, including for finite and continuous policy spaces, and clarify how many rounds are ``sufficient.'' The core argument involves a simple observation concerning ``one-step'' improvements: if the default option going into the terminal round is not the agenda setter's favorite, she will propose her favorite policy among those that both she and a majority of voters prefer to the default. Manipulability guarantees that such an improvement exists. Applying this logic iteratively implies that, if she can make proposals for $t$ rounds, she can obtain the outcome generated by the $t$-fold iteration of this ``favorite improvement'' operator. At each stage, voters pass proposals that lead to this outcome because a majority prefer it to the outcome that would emerge otherwise. When $t$ is sufficiently large, this iterative process yields an policy arbitrarily close to (if not exactly the same as) the agenda setter's favorite.

Though simple, this logic is extremely general. It applies for voting rules other than simple majority, so long as the analog of manipulability holds. The same conclusion also holds for other widely studied legislative procedures, such as the \emph{closed-rule} or \emph{successive procedure} as well as \emph{open-rule} bargaining. More broadly, we obtain a \emph{protocol-equivalence} result for the class of ``generalized amendment'' protocols: fixing a preference profile and voting rule, all of these protocols (and others) yield the same equilibrium outcome under real-time agenda control.

Our findings thereby illuminate the forces that contribute to the concentration of political power, and explain why voter sophistication may not be an effective safeguard against agenda control. Recent empirical findings highlight similar themes: \cite{berry2016cardinals,berry2018congressional} observe that chairs of congressional committees have disproportionate influence on policymaking, and \cite{fouirnaies2018agenda} finds that special interest groups make greater campaign contributions to legislators endowed with procedural authority. 

Our analysis has the additional implication that agenda setters benefit from bundling policy choices with transfers and pork. Augmenting any collective choice problem with transfers renders it manipulable not only for simple majority rule, but also for any voting rule that does not provide any individual with veto power. Our main results then imply that the bundling strategy allows the agenda setter to obtain her favorite policy for any ``veto-proof'' voting rule.\footnote{Previous studies have highlighted the detrimental effects of pork on legislative and democratic politics \citep[e.g.][]{lizzeri2001provision,battaglini2008dynamic,maskin2019pandering}. Our analysis shows that the mere ability to use pork or transfers yields dictatorial power; in equilibrium, the agenda setter does not actually transfer benefits to any party.} Moreover, we show that in these settings, the dictatorship result holds even when the process involves a relatively small number of rounds: for simple majority rule, three rounds suffice, and for general ``veto-proof'' voting rules, the number of rounds need not exceed the number of voters.
Analogously, we find that the agenda setter may benefit from linking policy issues in order to make bargaining more multidimensional. Specifically, our analysis of spatial politics shows that the collective choice problem is generically manipulable if the policy space has three or more dimensions, but typically fails to be so otherwise. Thus, if the current legislative debate concerns a one- or two-dimensional policy decision, the agenda setter benefits from bundling that decision with other policy issues---even ``settled'' ones for which the default option already coincides with her favorite policy---because the overall problem thereby becomes manipulable.

To isolate the role of sequential rationality constraints in real-time agenda control, we compare our model to the commitment benchmark in which the agenda setter can commit to any strategy in the dynamic game. Therein, we find that the agenda setter can obtain her favorite policy among those that are reachable from the initial default option through a finite chain of majority improvements, mirroring \posscite{miller1977graph} classic characterization of outcomes achievable through general binary  voting trees. Relative to this benchmark, our main results show that manipulability not only enables the agenda setter to attain her commitment payoff without having to commit, but also guarantees that the commitment outcome coincides with her favorite policy. Absent manipulability, sequential rationality typically precludes her from achieving the commitment benchmark. In such cases, even a commitment to a fixed agenda, as in \cite{shepsle1984uncovered}, may leave her better off.

As in the prior literature, we assume the agenda process involves a finite number of rounds. This assumption is appropriate when the purpose of negotiation is to solve a time-indexed collective choice problem, i.e., to select the policy that prevails at a given point in time. Problems of this form are ubiquitous. For example, when a legislature negotiates over the budget for a given fiscal year, it cannot continue those negotiations into the subsequent fiscal year.
In such cases, there is both a \emph{deadline} for meaningful deliberations (e.g., 11:59pm on December 31), and an inherent constraint on the speed at which the legislature can consider a proposal. In combination, these considerations imply that the number of rounds is necessarily bounded. 
We assume, for the sake of tractability, that this number is known in advance, but our results plainly extend to settings in which an initially unknown termination point becomes evident during the course of negotiations. 

While we think it is reasonable to assume the existence of a deadline---insofar as negotiations over time-indexed actions are widespread---one could alternatively consider processes that allow negotiations to continue indefinitely. \cite{diermeier2011legislative} and \cite{anesi2014bargaining} adopt that approach, analyzing infinite-horizon counterparts of our baseline model. They show that the limitless potential for reconsideration severely constrains agenda power. Read in the context of that work, our findings establish the critical importance of deadlines. 
We show, in effect, that a simple commitment to termination at a fixed point in time allows an agenda setter to achieve her favorite outcome (in manipulable environments).
More broadly, we find that an agenda setter's preference over the length of negotiations is non-monotonic: she prefers a moderate number of proposal rounds both to a single round and to an an open-ended process with no limit on duration. Thus, our analysis implies that even if there is no natural deadline, a strategic agenda setter benefits from inventing excuses to establish one. 

 We are not the first to show that certain collective choice processes can produce dictatorial outcomes. \cite{kalandrakis2004three} finds that bargaining with an endogenous status quo and changing proposers yields such results in a divide-the-dollar setting. \cite{bernheim2006power} analyze a model of pork-barrel politics with changing proposers, and show that the final proposer is effectively the dictator. The endogenous evolution of the default option is an essential feature of those frameworks. A few papers conclude that dictatorial power prevails under the opposite assumption of closed-rule negotiations, where accepting an offer results in its immediate implementation. \cite{ali2019predictability} find that a modest form of predictability about future bargaining power results in the first proposer obtaining the entire surplus in a closed-rule divide-the-dollar setting; \cite{duggan2018extreme} consider settings in which a single agent makes all proposals, and show that she has approximate dictatorial power.

These prior studies obtain results for specific policy spaces and legislative procedures, building primarily on the legislative bargaining literature. Our work differs in several important respects. It is instead rooted in the classical literature on agenda setting, to which we contribute by investigating the implications of real-time agenda control without commitment. Instead of focusing on a particular policy space, we identify manipulability---a property that any given space may or may not satisfy---as the necessary and sufficient condition for dictatorial power, and we also show that canonical models of spatial and distributive politics have this property. Moreover, we demonstrate that the strategic logic behind our main result is robust, in that it applies to a wide range of legislative procedures, allowing for either evolving or fixed default options.

\Cref{Section-Examples} illustrates the core logic of our results through a simple example. \Cref{Section-Model} describes the general model, and \Cref{Section-MainResults} contains our main results. \Cref{Section-ManipulableCCPs} explains why manipulability holds in spatial and distributive politics. \Cref{Section-Robustness} describes the commitment benchmark, investigates the implications of real-time agenda control for other legislative procedures, and elaborates on the role of deadlines.  \Cref{Section-Conclusion} concludes. All omitted proofs are in Appendices.

\section{An Example}\label{Section-Examples}

A legislature, comprised of an agenda setter and $n$ voters (where $n$ is odd), chooses a policy from $\{w,x,y,z\}$. The agenda setter has a strict preference relation that coincides with the listed order. Each voter has a complete and transitive preference relation that is also strict, but the profile of voter preferences results in a strict majority relation $\succ_M$ that cycles.\footnote{This majority relation can arise whenever $n\geq 3$. For the $3$-voter case, suppose $z\succ_1 w\succ_1 x \succ_1 y$, $x\succ_2 y\succ_2 z \succ_2 w$, and $y\succ_3 z\succ_3 w \succ_3 x$.\label{Footnote-VoterPref}} We depict the agenda setter's preferences and the majority relation in \Cref{Figure-IntroFiniteExample}. 

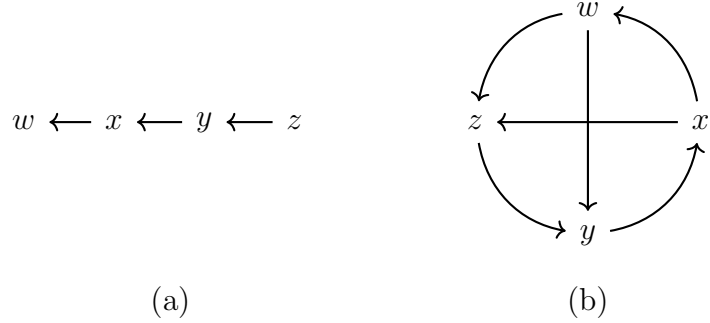
\begin{figure}[t]
	\centering  
	\begin{tikzpicture}[domain=0:3, scale=3,thick,shorten > = 1pt, shorten <= 1pt]
		\path (0.0,1) node (w1){$w$};
		\path (0.4,1) node (x1){$x$};
		\path (0.8,1) node (y1){$y$};
		\path (1.2,1) node (z1){$z$};
		
		\draw[<-] (w1)->(x1);
		\draw[<-] (x1)--(y1);
		\draw[<-] (y1)--(z1);
		\path (2.5,1.5) node (w2){$w$};
		\path (3,1) node (x2){$x$};
		\path (2.5,0.5) node (y2){$y$};
		\path (2,1) node (z2){$z$};
		\draw[<-] (z2) to[out=80,in=190] (w2);
		\draw[<-] (w2) to[out=-10,in=100] (x2);
		\draw[<-] (x2) to[out=260,in=10] (y2);
		\draw[<-] (y2) to[out=170,in=280] (z2);
		\draw[->] (w2) to (y2);
		\draw[->] (x2) to (z2);
		\draw (0.65,0.2)node{(a)};
		\draw (2.5,0.2)node{(b)};
	\end{tikzpicture}  
	\caption{Panel (a) shows the agenda setter's preferences. Panel (b) shows the majority relation. In each case, an arrow from policy $p$ to $p'$ denotes $p'\succ p$. }\label{Figure-IntroFiniteExample}  
\end{figure}

The legislature selects a policy through what is known in the literature as the \emph{amendment procedure}, which operates as follows:
\begin{enumerate}[noitemsep,label={\normalfont (\roman*)}]
    \item There is an initial default option;
    \item In each of finitely many rounds, the agenda setter proposes a new policy (the \emph{proposal}), 
    which is put to a vote against the prevailing default;
    \item In each non-terminal round, the policy that obtains a majority of the votes becomes the default for the subsequent round;
    \item The policy that obtains a majority of the votes in the final round is implemented.
\end{enumerate}

The amendment procedure features prominently in practice and is the focus of considerable prior work.
In much of the literature, the agenda setter lays out the sequence of proposals in advance, prior to any voting. We call this procedure a \emph{fixed agenda} protocol because it does not permit the agenda setter to vary her proposals based on the concurrent default or prior votes. Our analysis contrasts this protocol with \emph{real-time agenda control}, which allows the agenda setter to tailor proposals to the circumstances that arise, but does not endow her with any commitment power. We use this example to illustrate the distinct implications of fixed agenda protocols and real-time agenda control.

Suppose the initial default is $z$, the agenda setter's least favorite policy. \cite{mckelvey1976intransitivities} points out that if voters are myopic, the agenda setter can obtain her favorite policy $w$ by exploiting the cycles in the majority relation: she uses a fixed agenda where $y$ is the first proposal, $x$ is the second proposal, and $w$ is the third and final proposal. Because voters are myopic, in each instance they anticipate no further revisions, so each proposal passes, and the process selects $w$. \cite{shepsle1984uncovered} show that this conclusion does not hold if voters are sophisticated. Instead, the agenda setter can obtain only those policies that are \emph{not covered} by the initial default option (as defined in \Cref{footnote-covering}). In our example, the agenda setter's favorite uncovered policy is $x$, which she can obtain with the following fixed agenda: propose $x$ in the first round and $y$ is the second.\footnote{Although a majority of voters prefer $z$ to $x$, sophisticated voters anticipate that rejection of $x$ in the first round would lead to a final outcome of $y$, as $y$ is majority preferred to $z$.}
Voter sophistication would therefore appear to limit the power of agenda control.

Our central insight is that giving the agenda setter the flexibility to make proposals in real-time, so that she can tailor each proposal to the prevailing default option, unleashes the full power of agenda control and allows her to obtain her favorite policy even if voters are sophisticated.  While it is intuitive that the agenda setter benefits from greater flexibility, note that we simultaneously remove her ability to commit, which could in principle limit her power by introducing sequential rationality constraints. 

To illustrate how the agenda setter can exploit real-time agenda control, we construct an equilibrium for a $3$-round game that selects policy $w$. Consider the following strategy for the agenda setter: if the default option in any round is policy $p$, she proposes her \emph{favorite improvement} to $p$---in other words, her favorite policy among those the majority prefers to $p$. %
We use $\phi(p)$ to denote this policy. On the equilibrium path (starting from an initial default of $z$), this strategy prescribes proposing $y$ first, then $x$, and then $w$. Notice that this sequence coincides with the optimal agenda for myopic voters. But in this instance, voters approve each policy not out of myopia, but rather because they (correctly) anticipate future play. We can verify this claim through backward induction:
\begin{description}
    \item[{$\mathbf{t=3}$}:] If the default option is $p$, the agenda setter proposes $\phi(p)$, which, by construction, results in $\phi(p)$. 
    \item[{$\mathbf{t=2}$}:] If the default option is $p$, then the proposer proposes $\phi(p)$. Anticipating the behavior at $t=3$, voters understand that approving this policy today ultimately results in $\phi^2(p)$---the two-fold iteration of the $\phi$ operator---whereas rejecting this policy results in $\phi(p)$. Since a majority of voters prefer $\phi^2(p)$ to $\phi(p)$, the proposal passes.
    \item[{$\mathbf{t=1}$}:] Analogously, in the first period, the agenda setter proposes $y=\phi(z)$. Voters anticipate that approving this proposal ultimately results in $w=\phi^2(y)$, whereas rejecting it ultimately results in $x=\phi^2(z)$. Since a majority favor $w$ over $x$, the proposal passes. 
\end{description}
Thus, a majority of voters always finds it sequentially rational at each stage to approve the proposal this strategy prescribes.

Because the agenda setter cannot make commitments, her behavior must also be sequentially rational. Indeed, in the final round, she proposes her favorite option among those that will pass. Given the equilibrium for the final round, her second-round proposal always achieves her favorite outcome among the feasible alternatives. Likewise, given the equilibrium for the last two rounds, she cannot improve on her prescribed first-round proposal. Therefore, no deviation can make her strictly better off. 

Thus, there is a subgame perfect equilibrium in which the agenda setter obtains $w$. Our main result (\cref{Theorem-GenericFiniteAlternatives}) reaches a stronger conclusion: even though there are multiple equilibria, $w$ is the \emph{unique} subgame perfect equilibrium outcome regardless of the initial default so long as there are three or more rounds.\footnote{We impose the standard refinement that voters vote as if they are pivotal. } Real-time agenda control therefore guarantees that the group will select the agenda setter's favorite policy. 

In this example, the agenda setter's preferences and the majority relation jointly satisfy a condition we call \emph{manipulability}: for every policy $p$ other than the agenda setter's favorite, there is a policy $p'$ that both she and a majority of voters strictly prefer to $p$. Our main results show that the agenda setter exercises dictatorial power if and only if this condition is satisfied: when it fails, then for some initial default options, the agenda setter cannot obtain her favorite policy in any equilibrium. 

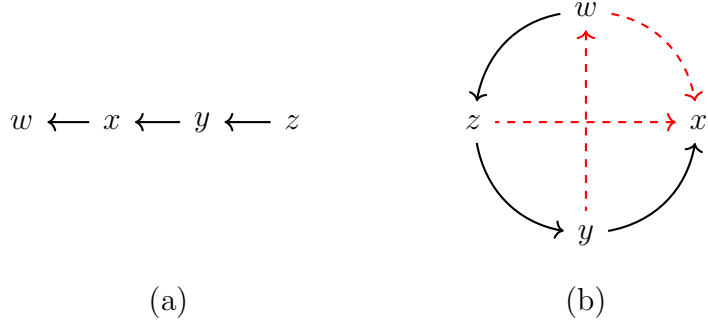
\begin{figure}[t]
	\centering  
	\begin{tikzpicture}[domain=0:3, scale=3,thick,shorten > = 1pt, shorten <= 1pt]
		\path (0.0,1) node (w1){$w$};
		\path (0.4,1) node (x1){$x$};
		\path (0.8,1) node (y1){$y$};
		\path (1.2,1) node (z1){$z$};
		
		\draw[<-] (w1)->(x1);
		\draw[<-] (x1)--(y1);
		\draw[<-] (y1)--(z1);
		\path (2.5,1.5) node (w2){$w$};
		\path (3,1) node (x2){$x$};
		\path (2.5,0.5) node (y2){$y$};
		\path (2,1) node (z2){$z$};
		\draw[<-] (z2) to[out=80,in=190] (w2);
		\draw[->,dashed,red] (w2) to[out=-10,in=100] (x2);
		\draw[<-] (x2) to[out=260,in=10] (y2);
		\draw[<-] (y2) to[out=170,in=280] (z2);
		\draw[<-,dashed,red] (w2) to (y2);
		\draw[<-,dashed,red] (x2) to (z2);
		\draw (0.65,0.2)node{(a)};
		\draw (2.5,0.2)node{(b)};
	\end{tikzpicture}  
	\caption{Panel (a) shows the agenda setter's preferences. Panel (b) shows the majority relation, with red dashed arrows denoting differences from that in \Cref{Figure-IntroFiniteExample}. }\label{Figure-FiniteExample-2} 
\end{figure}
To understand why manipulability is necessary, consider the majority relation in \cref{Figure-FiniteExample-2}. The solid black arrows are the same as before, but the red dashed arrows are different. Policy $x$ is now \emph{unimprovable}: there is no other policy that the agenda setter and a majority of voters all prefer to $x$. As $x$ is not the agenda setter's favorite option, manipulability fails. Our characterization result (\Cref{lemma:structural}) implies that if the initial default option is $z$, the agenda setter necessarily obtains $x$ in every equilibrium regardless of the horizon. Intuitively, voters anticipate that if $z$ remains the default option in the terminal round, sequential rationality will compel the agenda setter to propose $x$, because $x$ is her favorite policy among the options that will pass. But then $x$ must also be the outcome of a two stage game starting with a default of $z$: rejecting the first proposal leads to $x$, and $x$ is unimprovable, so the majority will not support any proposal leading to an option the agenda setter would prefer. The same argument applies, recursively, to games of any length.

In this example, the agenda setter is stymied by her inability to commit: were she able to lock in a fixed agenda, as in \cite{shepsle1984uncovered}, she could achieve $w$ by proposing $w$ in the first round and $y$ in the second. A majority of voters would then approve $w$ in the first round because rejection would yield $y$. In our setting, the agenda setter cannot achieve this outcome because proposing $y$ in the second round is not sequentially rational: if $z$ remains the default option in the second round, she would instead propose $x$, and anticipating that behavior, voters would be unwilling to approve $w$ in the first round. 

These examples illustrate the role of manipulability in empowering the agenda setter to obtain her favorite policy without the need for commitment. Although manipulability may appear restrictive, we show in \Cref{Section-ManipulableCCPs} that it is satisfied in standard models of spatial and distributive politics.

\section{Model}\label{Section-Model}

Our model consists of two components: (i) a (static) collective choice problem comprising the set of feasible policies and agents' preferences over them, and (ii) a dynamic procedure for selecting a policy. We describe each in turn.

 \paragraph{Collective Choice Problem.} A group $N \equiv \left\{1, \dots, n\right\}$ of \emph{voters} (where $n$ is odd) and a single non-voting \emph{agenda setter} ($A$) choose a policy from \emph{policy space} $X$. This space is compact and metrizable; in most of our examples, it is either finite or a subset of a finite-dimensional Euclidean space. For each $i\in \{1,\ldots,n,A\}$, $\prefto_i$ denotes player $i$'s preference relation over policies. Each relation is continuous and has a continuous utility representation $u_i: X \to \mathbb{R}$. If a majority of voters weakly (resp. strictly) prefers $x$ to $y$, we say that $x \prefto_M y$ (resp., $x \succ_M y$).  We use $X_A^*\equiv \argmax_{x\in X} u_A(x)$
 to denote the set of the agenda setter's favorite policies. Together, the policy space and preference profile constitute a \emph{Collective Choice Problem}, $\CCP:=(X,\{\prefto_i\}_{i=1,\ldots,n,A})$.

\paragraph{Legislative Procedure.} Our baseline analysis focuses on what the literature calls the \emph{amendment procedure}. Voting takes place in rounds $t \in \left\{1, \dots, T \right\}$, where $T$ is finite. Activity prior to round $t$ determines a \emph{default} policy $x^{t-1}$. The initial default, $x^0$, is exogenous. In each round $t$, the agenda setter proposes a policy (the \emph{proposal}) denoted $a^t \in X$, which can coincide with the existing default policy. The proposal $a^t$ is then put to a vote against the default $x^{t-1}$. If a majority of voters vote in favor of the proposal (i.e., it ``passes''), then it becomes the new default for the subsequent round: $x^{t} = a^t$. If the proposal does not pass, the default remains unchanged: $x^{t} = x^{t-1}$. The policy $x^{T}$ that prevails (after voting) in round $T$ determines payoffs.\footnote{Formally, in the pertinent literature \citep[e.g.][]{shepsle1984uncovered}, this bargaining framework is known as a ``forward agenda'' amendment procedure. The literature also considers ``backward agenda'' procedures wherein, after all amendments have been incorporated, the amended bill is put to a final up-or-down vote against the original default. Our analysis also applies to settings in which the legislature can consider a sequence of bills, each amendable through a backward agenda procedure, prior to the deadline (i.e., the date at which the policy that the bills concern is to take effect). Under this interpretation, each ``round'' of the procedure we study in this paper pertains to a distinct bill that, if passed, will be implemented unless it is subsequently supplanted by passage of another bill.}

\paragraph{Solution Concept.} All players can condition their actions, both proposals and votes, on the history of prior actions. This assumption captures the idea that people take actions---and, in particular, set the agenda---in real time. A \emph{history} $h^t$ as of the beginning of round $t$ records the initial default policy $x^0$, the sequence of proposals $(a^1,\ldots,a^{t-1})$, and the sequence of voting profiles $(v^1,\ldots,v^{t-1})$ in all prior rounds. It therefore identifies the default $x^{t-1}$ prevailing at the beginning of round $t$. $\mathcal{H}^t$  denotes the space of all round-$t$ histories. A strategy for the agenda setter is a mapping $\sigma_A : \cup_{t=1}^{T} \mathcal{H}^t  \to \Delta(X)$ specifying, for each history $h^t$, a distribution $\sigma_A(h^t) \in \Delta(X)$ over proposals $a^t$. 
A strategy for voter $i$ is a mapping $\sigma_i : \cup_{t=1}^{T} \mathcal{H}^t \times X \to \Delta\left(\left\{y, n \right\}\right)$ specifying, for each history $h^t$, a distribution over \emph{y}es or \emph{n}o votes for each potential proposal $a^t$. 

We study subgame perfect equilibria of this game. 
We also assume ``as-if pivotal'' voting: if passage (resp. rejection) of the current proposal ultimately leads to continuation outcome $x$ (resp. $y$), then anyone who has a \emph{strict} preference for $x$ votes for the option that leads to $x$, and similarly for $y$.\footnote{There is no restriction on the behavior of voters who are indifferent between $x$ and $y$. This definition applies only at histories where continuation outcomes do not depend on the composition of the current vote, conditional on which proposal prevails. As will become apparent, our analysis either assumes strict preferences (\cref{Theorem-GenericFiniteAlternatives,Theorem-epsilongrid}) or allows for indifference while imposing a mild refinement (\cref{Theorem-Capricious}), in each case thereby ensuring that such equilibria exist.} 
This standard assumption rules out unreasonable equilibria in which nonpivotal voters, who are technically indifferent because they cannot affect the outcome, vote contrary to their preferences.\footnote{When voters have strict preferences, subgame perfection with ``as-if pivotal'' (simultaneous) voting is outcome-equivalent to both (i) iterated deletion of weakly dominated strategies under simultaneous voting and (ii) mere subgame perfection when voting in each round occurs via ``roll call'' in a fixed sequential order (see Chapter 4 of \citealt{austen2005positive} and references therein).} Henceforth, we use the term \emph{equilibrium} to denote this solution concept. 

\section{The Power of Real-Time Agenda Control}\label{Section-MainResults}

We now turn to our main results concerning the agenda setter's power. \Cref{subsection:weakalignment} defines what it means for a collective choice problem to be manipulable. Sections \ref{subsection:finite} and \Cref{subsection:infinite} consider finite and general policy spaces, respectively. 

\subsection{Improvability and Manipulability}\label{subsection:weakalignment}

We begin by describing policies that the agenda setter can improve upon with a single proposal round. 
\begin{definition}\label{defi:Improv}
Policy $x$ is {\textcolor{DarkBlue}{Improvable}} if there exists a policy $y$ such that $y\succ_A x$ and $y\succ_M x$; 
if no such policy exists, then policy $x$ is {\textcolor{DarkBlue}{Unimprovable}}. 
\end{definition}
One can view the set of unimprovable policies as the \emph{core} of a suitably defined cooperative game in which all decisive coalitions contain both the agenda setter and at least a simple majority of voters.

The set of unimprovable policies must include all of the agenda setter's favorites, $\asfav$. For an important class of collective choice problems, everything else is improvable.
\begin{definition}\label{defi:Manip}\hypertarget{Definition:Manipulable}
A collective choice problem $\CCP$ is {\hyperlink{Definition:Manipulable}{Manipulable}} if every $x \not\in X^*_A$ is improvable.
\end{definition}
Manipulability is connected to intransitivity of the majority relation. If that relation is transitive, 
a Condorcet winner exists, and the choice problem is manipulable if and only if that policy is the agenda setter's favorite. Intransitivities make it easier for the agenda setter to find improvements that the majority will accept.\footnote{Manipulability is, however, distinct from the notion of global intransitivity in the majority relation (or ``chaos'') studied by the classical literature \citep[e.g.][]{mckelvey1976intransitivities}. Global intransitivity stipulates that for any two policies $x$ and $y$, there is a majority chain $\{a^k\}_{k=0}^{k=K}$ such that $x=a^0$, $y=a^K$, and $a^k\succ_M a^{k-1}$ for all $k\in \{1,\ldots,K\}$. However, manipulability plainly does not require global intransitivity and, unlike manipulability, global intransivity does not depend on the agenda setter's preferences. \label{Footnote-GlobalIntransitivity}} 

In the rest of this section, we identify the implications of real-time agenda control for manipulable and non-manipulable collective choice problems. In \Cref{Section-ManipulableCCPs}, we demonstrate that collective choice problems belonging to some familiar and important classes are manipulable.

\subsection{Dictatorial Power with Finite Alternatives}\label{subsection:structural}\label{subsection:finite}

To convey the logic of agenda-setting power most transparently, we start with finite policy spaces and make the ``generic'' assumption that all players have strict preferences.
\begin{definition}\label{ass:GFA}\hypertarget{GFA}
A collective choice problem $\CCP$ satisfies {\hyperlink{GFA}{Generic Finite Alternatives}} if $X$ is finite, and each $\prefto_i$ and $\prefto_A$ are antisymmetric.
\end{definition}
In such settings, the agenda setter generally obtains her favorite policy:
\begin{theorem}\label{Theorem-GenericFiniteAlternatives}
Suppose the collective choice problem $\CCP$ satisfies \hyperlink{GFA}{Generic Finite Alternatives}. For any game with at least $|X|-1$ rounds, the agenda setter obtains her favorite policy in every equilibrium regardless of the initial default if and only if $\CCP$ is \hyperlink{Definition:Manipulable}{Manipulable}. 
\end{theorem}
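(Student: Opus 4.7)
My approach formalizes the ``favorite improvement'' operator sketched in the example. Under Manipulability, define $\phi\colon X\to X$ by $\phi(x) = x$ if $x\in X_A^*$ and $\phi(x) = \argmax_{y\in X:\,y\succ_A x,\,y\succ_M x} u_A(y)$ otherwise; the constraint set is non-empty by Manipulability, and antisymmetry of $\succ_A$ on the finite set $X$ makes this argmax a single policy. Two observations do all the work: (i) $u_A(\phi(x)) > u_A(x)$ whenever $x\notin X_A^*$, so the orbit $\{\phi^k(x)\}_{k\geq 0}$ is strictly $u_A$-increasing until absorbed in $X_A^*$, and since $u_A$ attains at most $|X|$ distinct values, $\phi^{|X|-1}(x)\in X_A^*$ for every $x\in X$; (ii) when $x\notin X_A^*$, $\phi(x)$ maximizes $u_A$ not merely over $\{y: y\succ_M x,\, y\succ_A x\}$ but over the full $\succ_M$-upper contour $\{y: y\succ_M x\}$, because any $y$ in the latter but not the former satisfies $y\prec_A x\prec_A\phi(x)$ by antisymmetry.

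For the ``if'' direction, I would prove by induction on $t\geq 1$ that every equilibrium of the $t$-round subgame from default $y$ has outcome $\phi^t(y)$. The base case $t=1$ is immediate: the agenda setter's feasible terminal-round outcomes are $\{z\in X: z\succ_M y\}\cup\{y\}$, uniquely maximized at $\phi(y)$. For the inductive step, subgame perfection plus the IH implies that from any post-round-$1$ default $z$ the continuation yields $\phi^{t-1}(z)$; hence in round $1$ a proposal $a$ induces final outcome $\phi^{t-1}(a)$ if it passes (equivalently, if $\phi^{t-1}(a)\succ_M \phi^{t-1}(y)$ under as-if-pivotal voting) and $\phi^{t-1}(y)$ otherwise. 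Applying observation (ii) at the policy $\phi^{t-1}(y)$, the agenda setter's $u_A$-maximum over her achievable set is $\phi(\phi^{t-1}(y))=\phi^t(y)$, attained by proposing $a=\phi(y)$ (the degenerate case $\phi^{t-1}(y)\in X_A^*$ is trivial since then $\phi^t(y)=\phi^{t-1}(y)$), and antisymmetry rules out any other equilibrium outcome. Taking $t=T\geq |X|-1$, observation (i) gives $\phi^T(y)\in X_A^*$ for every initial default $y$.

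For the ``only if'' direction, suppose $\CCP$ fails Manipulability, witnessed by an unimprovable $x\notin X_A^*$, and take initial default $x^0=x$. I would prove by induction on $t$ that every equilibrium of the $t$-round game from $x$ has outcome $x$. The base case follows directly from unimprovability. For the inductive step, rejection of any round-$1$ proposal induces a continuation that is itself a $(t-1)$-round game from default $x$, which by IH has outcome $x$; acceptance of proposal $a$ yields some policy $V_{t-1}(a)\in X$, and voters pivotally accept iff $V_{t-1}(a)\succ_M x$. Unimprovability of $x$ rules out any $z\in X$ with both $z\succ_M x$ and $z\succ_A x$, so any passing proposal yields $V_{t-1}(a)\prec_A x$ by antisymmetry; hence no acceptance strictly improves the agenda setter's payoff over the rejection fallback $x$, and her equilibrium outcome is $x\neq x_A^*$. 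The main obstacle is the upper-bound half of the ``if''-direction inductive step—ruling out some off-path proposal whose continuation both passes voters and $u_A$-dominates $\phi^t(y)$—which is handled precisely by observation (ii): $\phi(\phi^{t-1}(y))$ is the $u_A$-maximum over the entire majority-upper contour of $\phi^{t-1}(y)$, not merely over the two-sided constraint set; everything else is clean backward induction enabled by strict preferences and as-if-pivotal voting.
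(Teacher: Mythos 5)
Your proposal is correct and takes essentially the same route as the paper: you define the favorite-improvement operator $\phi$, prove by backward induction on the number of rounds that the unique equilibrium outcome of the $t$-round game from default $y$ is $\phi^t(y)$ (this is exactly \Cref{lemma:structural}, with your observation (ii) reconciling your manipulability-restricted definition of $\phi$ with the paper's definition over the full majority upper contour $M(\cdot)$), and then conclude from absorption of the $\phi$-orbit into the unimprovable set after at most $|X|-1$ steps, together with the fact that an unimprovable non-favorite default is itself absorbing. The one ingredient you leave implicit, which the paper supplies via the explicit construction in \Cref{lemma:structural}(a), is equilibrium existence, needed to make the ``only if'' direction non-vacuous; under \hyperlink{GFA}{Generic Finite Alternatives} this is routine backward induction with as-if-pivotal tie-breaking, but it deserves a sentence.
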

\Cref{Theorem-GenericFiniteAlternatives} articulates the power of real-time agenda control: with a manipulable policy space, the agenda setter always obtains her favorite policy. On its own, manipulability merely ensures that the agenda setter can find some improvement pallatable to a majority. Indeed, when agendas are fixed in advance as in \citet{shepsle1984uncovered} (so that proposals are not conditional on prior votes), the agenda setter can do no better than her favorite policy among those uncovered by the initial default option, even if the policy space is manipulable. It is therefore the combination of manipulability and real-time agenda control that yields dictatorial power.

The argument for \Cref{Theorem-GenericFiniteAlternatives} is elementary. Denote the set of policies that are majority preferred to $x$ by $M(x) := \left\{y \in X : y\succ_M x \text{ or } y=x \right\}$. We define the \emph{agenda setter's favorite improvement mapping} $\phi : X \to X$ by
\begin{align}\label{Equation-FavoriteImprovement}
    \left\{\phi(x)\right\} := \argmax_{y\in M(x)} u_A(y).
\end{align}
Given the agenda setter's strict preferences, $\phi(\cdot)$ is well-defined. We denote the fixed points of this mapping by $E := \left\{ x \in X : \ x = \phi (x) \right\}$. Note that a policy $x$ is unimprovable if and only if $x \in E$. We write the $t$-fold iteration of $\phi$ for any initial default option $x^0$ as $\phi^t(x^0)$. By definition of $\phi$, for every default $x^0$, (i) $\phi^{t+1}(x^0)\prefto_A \phi^t(x^0)$, and (ii) if $T\geq |\policyspace|-1$, $\phi^T(x^0)$ is an element of $E$ (i.e., unimprovable).

We prove \Cref{Theorem-GenericFiniteAlternatives} by showing that equilibrium outcomes are characterized by iterations of the $\phi$ mapping, regardless of whether manipulability holds. Define the \emph{equilibrium outcome correspondence} for a $T$-round game as $f_T : X \rightrightarrows X$, where $f_T(x^0)$ is the set of policies chosen with positive probability in any equilibrium given an initial default of $x^0$.

\begin{lemma}\label{lemma:structural}
Suppose the collective choice problem $\CCP$ satisfies \hyperlink{GFA}{Generic Finite Alternatives}. For any game with $T$ rounds and initial default policy $x^0$, the equilibrium outcome correspondence satisfies $f_T(x^0) = \left\{ \phi^T (x^0 ) \right\}$. Moreover: 
\begin{enumerate}[label={\normalfont (\alph*)}]
    \item \label{bullet-simpleequilibrium}There exists a pure-strategy equilibrium in which (i) the agenda setter always proposes $\phi(x)$ when the current default is $x$ and (ii) each voter $i$ votes to approve proposal $y$ in round $t$ if and only if $\phi^{T-t}(y) \prefto_i \phi^{T-t}(x^{t-1})$. 
    \item For an initial default $x^0$, $f_T (x^0) = \left\{x^0 \right\}$ if and only if $x^0 \in E$.  
    \item If $T \geq |X|-1$, then $\bigcup_{x^0 \in X} f_T (x^0) = E$.
\end{enumerate}
\end{lemma}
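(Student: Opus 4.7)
The plan is to prove by induction on $T$ that $f_T(x^0) = \{\phi^T(x^0)\}$, and then to read off parts (a)--(c) either from the strategy profile constructed during the induction or as immediate corollaries of this identity.

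For the base case $T=1$, as-if pivotal voting makes a proposal $y$ pass against default $x^0$ precisely when $y \in M(x^0)$, so the agenda setter's optimum is $\phi(x^0)$ by definition, giving $f_1(x^0) = \{\phi(x^0)\}$. For the inductive step, fix $T \geq 2$ and assume $f_{T-1}(y) = \{\phi^{T-1}(y)\}$ for every $y$. In the first round of a $T$-round game with default $x^0$, if the agenda setter proposes $y$, sophisticated voters anticipate final outcome $\phi^{T-1}(y)$ upon approval and $\phi^{T-1}(x^0)$ upon rejection, so as-if pivotal behavior makes $y$ pass iff $\phi^{T-1}(y) \in M(\phi^{T-1}(x^0))$. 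The agenda setter's problem thus reduces to maximizing $u_A(z)$ over $z$ in the image of $\phi^{T-1}$ that also lie in $M(\phi^{T-1}(x^0))$. The tautology $\phi^T(x^0) = \phi^{T-1}(\phi(x^0))$ places $\phi^T(x^0)$ in that image, and $\phi^T(x^0) \in M(\phi^{T-1}(x^0))$ follows directly from the definition of $\phi$. Since $\phi^T(x^0) = \argmax_{z \in M(\phi^{T-1}(x^0))} u_A(z)$ is optimal over the \emph{larger} unconstrained set $M(\phi^{T-1}(x^0))$, it is a fortiori optimal over the constrained subset, and antisymmetry of $\prefto_A$ makes the maximizer unique. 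Thus $f_T(x^0) = \{\phi^T(x^0)\}$, and the profile just described — the agenda setter proposing $\phi(x^{t-1})$ in round $t$, each voter $i$ approving $y$ iff $\phi^{T-t}(y) \prefto_i \phi^{T-t}(x^{t-1})$ — is sequentially rational at every history, which is the SPE claimed in (a).

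Parts (b) and (c) are direct consequences. For (b), observe that if $x \notin E$ then there exists $y$ with $y \succ_A x$ and $y \succ_M x$, so $\phi(x) \succ_A x$; hence $u_A$ strictly increases along $\phi$-orbits until they enter $E$, and $\phi^T(x^0) = x^0$ iff $x^0 \in E$. For (c), the same strict-increase property combined with $u_A$ taking at most $|X|$ distinct values (by antisymmetry) forces every orbit into $E$ within at most $|X|-1$ iterations; together with $\phi^T(x) = x$ for $x \in E$, this yields $\bigcup_{x^0 \in X} f_T(x^0) = E$ whenever $T \geq |X|-1$. The main subtle step throughout is the simultaneous reachability and optimality of $\phi^T(x^0)$ in the induction: reachability uses the tautology $\phi^T(x^0) = \phi^{T-1}(\phi(x^0))$, while optimality on the constrained image is inherited from the unconstrained optimality that defines $\phi$ — this is what makes the recursion collapse cleanly to one iteration of $\phi$ per round.
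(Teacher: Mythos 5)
Your proof is correct and follows essentially the same route as the paper's: induction on $T$, using the inductive hypothesis to reduce the first-round problem to choosing a continuation outcome in the image of $\phi^{T-1}$ intersected with $M(\phi^{T-1}(x^0))$, whose unique optimum (by antisymmetry of $\prefto_A$ and unconstrained optimality of $\phi$) is $\phi^T(x^0)$; your constrained-vs-unconstrained optimality observation simply replaces the paper's explicit case split on whether $\phi^{T-1}(x^0)$ is improvable. The only nuance worth flagging is that a proposal $y$ with $\phi^{T-1}(y)=\phi^{T-1}(x^0)$ need not pass (voters are indifferent), but since both acceptance and rejection yield the same continuation outcome this does not affect the achievable-outcome set or the conclusion.
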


\Cref{lemma:structural} states that the equilibrium outcome correspondence with $T$ rounds coincides with the $T$-fold iteration of the agenda setter's favorite improvement mapping, implying that all equilibria are outcome-equivalent. It also asserts the existence of a simple equilibrium in which the agenda setter follows a ``greedy'' strategy, always acting as if the current round is the last one.\footnote{In this equilibrium, voters break ties in favor of the agenda setter's proposals. Under \hyperlink{GFA}{Generic Finite Alternatives}, voters are indifferent between accepting and rejecting a proposal if and only if both choices lead to the same continuation outcome.} Finally, it records some useful implications: (b) the fixed points of the equilibrium outcome correspondence are the unimprovable policies, and (c) given sufficiently many rounds, every equilibrium outcome is unimprovable. 

Perhaps surprisingly, the agenda setter's strategy in the simple equilibrium above would implement the same outcome if voters were myopic as in \cite{mckelvey1976intransitivities}---that is, if they ignored the possibility of further amendments. But in our setting, voters approve each proposal \emph{precisely} because they anticipate future revisions and prefer the continuation path associated with the proposal. More specifically, the group of voters who approve each proposal along the equilibrium-path are those who favor $\phi^T(x)$ to $\phi^{T-1}(x)$. Because the continuation outcomes for acceptance and rejection of the current proposal do not vary along the equilibrium path, \emph{the same coalition of voters} supports each on-path proposal.

\Cref{Theorem-GenericFiniteAlternatives} is an immediate corollary of \Cref{lemma:structural}(c): the set of unimprovable policies $E$ coincides with $\asfav$ if and only if $\CCP$ is \hyperlink{Definition:Manipulable}{Manipulable}. We therefore sketch the proof of \cref{lemma:structural} here (the full proof is in the Appendix): 
\begin{enumerate}[label={\normalfont (\roman*)}]
    \item With a single round, an equilibrium policy is an element of $\phi(X)$, where $\phi(X)$ is the image of $X$ under $\phi$: if the default option $x^0$ is improvable, then in equilibrium, the agenda setter proposes her favorite improvement $\phi(x^0)$, which passes. 
    \item With two rounds, an equilibrium policy is an element of $\phi^2(X)$. If the initial default option $x^0$ prevails at the end of the first round, then by (i), the resulting policy is $\phi(x^0)$. If the latter policy is improvable, then there exist policies $y$ such that $\phi(y) \succ_M \phi(x^0)$ (for example, $y=\phi(x^0)$). Crucially, in equilibrium, the agenda setter is guaranteed passage of any such proposal in the first round because voters anticipate, by step (i) above, that accepting $y$ would lead to final outcome $\phi(y)$ while rejecting it would lead to $\phi(x^0)$. By definition, the agenda setter's favorite improvement over $\phi(x^0)$ is $\phi^2(x^0)$, so any proposal $y$ for which $\phi(y) = \phi^2(x^0)$ is optimal for her. As described in \Cref{lemma:structural}(a), one such first-round proposal is $y = \phi(x^0)$.
  
    \item By induction, with $T$ rounds, an equilibrium policy is an element of $\phi^T(X)$. As noted before, $\phi^T(X)$ must coincide with $E$ if $T\geq |\policyspace|-1$. 
\end{enumerate}

While the default evolves gradually in the simple equilibrium of \Cref{lemma:structural}(a), there are other equilibria with sudden transitions. Specifically, if $\phi^T(x^0)$ is unimprovable, there are equilibria where the agenda setter proposes it in the first round and it passes.\footnote{However, if $\phi^T(x^0)$ is improvable, then subgame perfection requires gradualism: were voters to accept $\phi^T(x^0)$ in the first round, they would expect the agenda setter to further amend the policy to obtain additional gains for herself, contrary to the majority's interests.} Thus, if the policy space is manipulable and $T\geq |\policyspace|-1$, the group may adopt the agenda setter's favorite policy immediately even though a majority does not prefer it to the initial default.

Note that our explanation for \Cref{lemma:structural} did not invoke any properties of majority rule. Consequently, with appropriate adjustments to the notions of \emph{favorite improvement} and \emph{manipulability}, these results generalize to arbitrary voting rules. In \Cref{Section-OtherProcedures}, we obtain similar results for legislative procedures that feature adjournment clauses that terminate deliberation, such as the successive procedure / closed-rule bargaining and open-rule bargaining. Therefore, the simple recursive logic applies to a broad range of legislative institutions.

Two caveats are in order. First, although manipulability is generic in rich multidimensional collective choice problems (see \Cref{Section-ManipulableCCPs}), the same statement does not hold under \hyperlink{GFA}{Generic Finite Alternatives}: for any finite policy space $\policyspace$, the set of utility profiles $(u_1, \dots, u_n, u_A)$ for which manipulability holds has strictly positive, but not full, Lebesgue measure in $\mathbb{R}^{|\policyspace| \times (n+1)}$. Second, as the cardinality of $X$ increases, the above results require the number of rounds to increase without bound. We address both issues below.

\subsection{Near-Dictatorial Power with Continuous Policy Spaces}\label{subsection:main-cts}\label{subsection:infinite}

Next, we extend our analysis to settings with continuous policy spaces using two distinct approaches. For the first, we take the view 
that the typical real-world collective choice problem offers an extremely large but finite number of alternatives, and that the assumption of continuity is usually a convenient analytic approximation (e.g., for budgets, pennies are indivisible). Instead of studying the continuous case that approximates the settings of interest, we study the discrete settings that the continuous case approximates (i.e., those with large numbers of alternatives). For the second approach, we study continuous policy spaces directly but impose a mild equilibrium refinement. Both approaches yield the same conclusion: regardless of the initial default option, with sufficiently many rounds, the agenda setter's payoff is arbitrarily close to its maximum. 

\paragraph{Discretized Settings.} 
Consider a collective choice problem $\CCP:=(X,\{\prefto_i\}_{i=1,\ldots,n,A})$ that need not satisfy \hyperlink{GFA}{Generic Finite Alternatives}. Let $d(x,y)$ denote a metric on $X$; for a subset $A \subseteq X$, $d(x,A) := \inf_{y \in Y} d(x,y)$ denotes the distance of $x$ from $A$. A \textbf{generic $\epsilon$-grid} is a finite subset $X_\epsilon \subseteq X$ for which {\normalfont (a)} $\max_{x \in X} d(x, X_\epsilon) < \epsilon$, and {\normalfont (b)} the preferences of voters and the agenda setter are antisymmetric within $\policyspace_\epsilon$. We study ``ambient'' collective choice problems that admit generic $\epsilon$-grids for \emph{every} $\epsilon>0$. As we establish in the Appendix (\Cref{Lemma-FA-TI-equiv}), such problems are characterized by the condition that all players have ``thin'' indifference curves. Formally, let $I_i(x) := \left\{y \in X : y \sim_i x \right\}$ denote player $i$'s indifference curve through policy $x$.

\begin{definition}\label{defi:TI}\hypertarget{TI}
A collective choice problem $\CCP$ satisfies \hyperlink{TI}{Thin Individual Indifference} if $I_i(x) \backslash\{x\}$ has empty interior for every player $i$ and $x \in X$.
\end{definition}

\hyperlink{TI}{Thin Individual Indifference} holds in most applications with continuous policies, including divide-the-dollar problems and any setting with strictly convex preferences. The assumption also features in \cite{mckelvey1979general} and \cite{shepsle1984uncovered}, who further assume that the policy space has no isolated points. Our definition generalizes their condition, and for finite policy spaces is equivalent to \hyperlink{GFA}{Generic Finite Alternatives}.\footnote{\hyperlink{GFA}{Generic Finite Alternatives} implies that $x \in X$ is isolated and $I_i(x) \backslash\{x\} = \emptyset$.}

Loosely, we show that, under this assumption, in games with large numbers of rounds and options, the agenda setter ``nearly'' exercises dictatorial power in all equilibria if and only if the ambient collective choice problem is manipulable. Formally, defining $u^*_A \equiv \max_{x \in X}u_A(x)$ for any (continuous) utility representation $u_A$ of $\prefto_A$, we have:\footnote{The statement of \Cref{Theorem-epsilongrid} uses a cardinal measure of near-dictatorial power, but we can restate it in ordinal terms: if and only if $\CCP$ is manipulable, the final policy itself must be close to the agenda setter's favorite policies, $X_A^*$. That is, for every $\delta>0$, there exist $\epsilon_{\delta}>0$ and $T_{\delta}\in\mathbb{N}$ such that, for any sufficiently fine grid and long horizon, and for any initial default, all equilibrium outcomes $y$ are close to the agenda setter's favorite policies, in the sense that  $d(y, X^*_A) < \delta$.}

\begin{theorem}\label{Theorem-epsilongrid}
Suppose the collective choice problem $\CCP$ satisfies \hyperlink{TI}{Thin Individual Indifference}. The following holds if and only if $\CCP$ is \hyperlink{Definition:Manipulable}{Manipulable}:
\begin{quote}
For every $\delta>0$, there exist ${\epsilon}_\delta>0$ and $T_\delta \in \mathbb{N}$ such that, if the policy space is restricted to any generic $\epsilon$-grid $\policyspace_\epsilon$ where $\epsilon<\epsilon_\delta$, and the game has at least $T_\delta$ rounds, then given any initial default in $\policyspace_\epsilon$, the agenda setter's payoff is no lower than $u^*_A-\delta$ in any equilibrium.
\end{quote}
\end{theorem}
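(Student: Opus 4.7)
The plan is to combine \Cref{lemma:structural}, applied to each generic $\epsilon$-grid $X_\epsilon$, with a uniform continuity argument. On any such grid, \hyperlink{GFA}{Generic Finite Alternatives} holds by construction, so \Cref{lemma:structural} applies: equilibrium outcomes after $T$ rounds from initial default $x^0$ equal $\phi_\epsilon^T(x^0)$, where $\phi_\epsilon : X_\epsilon \to X_\epsilon$ is the grid's favorite-improvement map. Since $u_A \circ \phi_\epsilon \geq u_A$ pointwise, the question reduces to controlling how quickly $u_A(\phi_\epsilon^T(x^0))$ climbs to within $\delta$ of $u_A^*$, uniformly in $\epsilon$.

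For sufficiency, the central claim is that manipulability of the ambient $\CCP$ supplies a \emph{uniform improvement gap}: for every $\delta > 0$ there exist $\bar\eta > 0$ and $\bar r > 0$ such that for every $x \in X$ with $u_A(x) \leq u_A^* - \delta$, some $y \in X$ satisfies $u_A(y) \geq u_A(x) + \bar\eta$ and $y \succ_M x$, and moreover every $x'$ within distance $\bar r$ of $x$ and every $y'$ within distance $\bar r$ of $y$ still satisfy $u_A(y') > u_A(x') + \bar\eta/2$ and $u_i(y') > u_i(x')$ for each voter $i$ in the majority strictly preferring $y$ to $x$. I would establish this by applying manipulability pointwise, using continuity of each $u_j$ to find local neighborhoods on which the strict inequalities persist, and passing to a finite subcover of the compact sublevel set $\{x \in X : u_A(x) \leq u_A^* - \delta\}$. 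Setting $\epsilon_\delta := \bar r$, any generic $\epsilon$-grid with $\epsilon < \epsilon_\delta$ contains a nearby witness $y' \in X_\epsilon$ for each low-utility grid point, so $\phi_\epsilon$ raises $u_A$ by at least $\bar\eta/2$ on $\{x \in X_\epsilon : u_A(x) \leq u_A^* - \delta\}$. Hence within at most $T_\delta := \lceil 2(u_A^* - \min_X u_A)/\bar\eta \rceil$ rounds---a bound independent of $\epsilon$---the orbit escapes that region, and weak monotonicity keeps it out.

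For necessity, suppose $\CCP$ is not manipulable, so some $x^* \in X \setminus X_A^*$ is unimprovable in $X$. Set $\delta := (u_A^* - u_A(x^*))/2 > 0$. For each $\epsilon > 0$, I would construct a generic $\epsilon$-grid $X_\epsilon$ that contains $x^*$, invoking \hyperlink{TI}{Thin Individual Indifference} (and \Cref{Lemma-FA-TI-equiv}) to perturb the remaining points so that antisymmetry holds while retaining $x^*$. Because ``unimprovable in $X$'' immediately implies ``unimprovable in any subset containing $x^*$,'' $x^*$ is unimprovable in $X_\epsilon$; \Cref{lemma:structural}(b) then gives $f_T(x^*) = \{x^*\}$ for every $T$. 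Taking the initial default $x^0 = x^*$ therefore yields equilibrium payoff $u_A(x^*) = u_A^* - 2\delta < u_A^* - \delta$ for every $\epsilon$ and every $T$, contradicting the near-dictatorship property.

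The main obstacle is the uniformity of $T_\delta$ across all sufficiently fine grids. A direct appeal to \Cref{lemma:structural}(c) would give only $T \geq |X_\epsilon| - 1$, which diverges as $\epsilon \to 0$; the compactness-plus-continuity argument above circumvents this by extracting a positive, grid-independent improvement gap from manipulability, so that $\phi_\epsilon$-orbits cannot linger long in the low-utility region regardless of grid fineness. A secondary, more technical difficulty is the construction of generic $\epsilon$-grids containing prescribed points in the necessity argument, which is precisely what \hyperlink{TI}{Thin Individual Indifference} enables.
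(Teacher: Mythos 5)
Your proposal is correct and follows essentially the same route as the paper: a uniform improvement gap $\eta_\delta>0$ on the compact sublevel set $\Gamma_\delta=\{x: u_A(x)\le u_A^*-\delta\}$ (the paper extracts it via lower semicontinuity of the maximal-improvement function, you via a finite subcover---the same compactness argument in different packaging), uniform continuity to transfer the ambient witness to a grid point with gain $\eta_\delta/2$, the grid-independent bound $T_\delta=\lceil 2(u_A^*-\min_X u_A)/\eta_\delta\rceil$ combined with \Cref{lemma:structural} applied to $\CCP_\epsilon$, and necessity via \Cref{Lemma-FA-TI-equiv} to build generic $\epsilon$-grids containing an unimprovable $x^*\notin X_A^*$, which is then an absorbing default. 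No substantive gaps.
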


We note three additional features of this result. First, it does not require the discretized collective choice problems, $\CCP_\epsilon :=(X_\epsilon,\{\prefto_i\}_{i=1,\ldots,n,A})$, to be manipulable. Second, the agenda setter achieves a payoff within $\delta>0$ of her maximum among all policies in the ambient policy space $X$, not merely those in the grid $X_\epsilon$. Third, the minimal horizon length $T_\delta$ and maximal discretization size $\epsilon_\delta$ depend on the payoff approximation $\delta$, but are uniform across both the choice of the particular grid $X_\epsilon$ and the initial default within that grid. These features distinguish \Cref{Theorem-epsilongrid} from \Cref{Theorem-GenericFiniteAlternatives}: even if $\CCP_\epsilon$ were manipulable, \Cref{Theorem-GenericFiniteAlternatives} would only establish that the agenda setter achieves her favorite option if the number of rounds is at least $|\policyspace_\epsilon|-1$, which explodes as $\epsilon\to 0$. In contrast, \Cref{Theorem-epsilongrid} shows that, with $T_\delta$ rounds, the agenda setter obtains a payoff within $\delta$ of her maximum for \emph{all} sufficiently fine grids. 

The following is a sketch of the proof. First, we show that, if the ambient collective choice problem $\CCP$ is manipulable, then policies that are \emph{unimprovable within the grid $X_\epsilon$} lie within a neighborhood of $X^*_A$ that shrinks to $X^*_A$ as $\epsilon \to 0$. Thus, even if the agenda setter cannot obtain her favorite policy in $X_\epsilon$ (let alone in $X$), she can make sequences of successful proposals that bring the policy \emph{arbitrarily close} to her favorite. The second step shows that, as long as the grid is sufficiently fine, she can achieve these gains within a fixed number of rounds that does not depend on the particular grid. The essence of the argument is that, for any $\delta>0$, there exists a minimal payoff improvement $\eta_\delta>0$ such that, whenever the agenda setter's payoff differs from that of her favorite policy by more than $\delta$, she can find another policy that improves both her payoff and the payoffs of a majority of voters by at least $\eta_\delta$. Using this observation, it is easy to determine the number of rounds that necessarily bring her payoff within $\delta$ of her maximum.\footnote{The desired conclusion follows when the number of rounds exceeds $\left[u^*_A-\min_{x\in \policyspace} u_A(x)\right]/\eta_\delta$, which allows the agenda setter to migrate the policy from her least favorite to her favorite.}

\paragraph{An Equilibrium Refinement for Continuous Settings.}

When considering settings with continuous policy spaces, we cannot assume away indifference. This inconvenient fact raises two issues. First, how do voters break ties when indifferent between two continuation paths? Second, how do we define ``as-if pivotal'' voting when future tie-breaking for other players, and hence continuation paths, may differ depending on the composition of the majority in the current round? A standard approach in the literature is to resolve both issues by restricting attention to pure strategy Markov perfect equilibria in which voters always break indifference in favor of the proposal \citep[e.g.][]{baron1989bargaining,diermeier2011legislative}.
While convenient, this tie-breaking convention potentially stacks the deck in the agenda setter's favor. We therefore consider a weaker refinement: we allow voters to break ties \emph{against} the agenda setter's proposals, as long as they always resolve the same tie (between ultimate outcomes) the same way. Formally:

\begin{definition}\label{defi:NC}\hypertarget{NonCapricious}
An equilibrium is \hyperlink{NonCapricious}{Non-Capricious} if it has the following properties:
\begin{enumerate}[label={\normalfont (\alph*)}]
    \item The induced mapping from histories to continuation outcomes is deterministic  and Markovian (it conditions on the history only through the prevailing default and number of remaining rounds).

\item For each voter $i$ and pair of distinct policies $x$ and $y $ such that $x \sim_i y$, at every history-proposal pair for which $x$ is the continuation outcome if the proposal is accepted and $y$ is the continuation outcome if the proposal is rejected,  voter $i$ either {\normalfont (i)} always votes for the proposal or {\normalfont (ii)} always votes against the proposal.    
    
\end{enumerate}

\end{definition}

Part (a) slightly weakens the standard definition of Markov perfect equilibrium by allowing strategies, but not the continuation outcomes they induce, to depend on payoff-irrelevant features of the history. Part (b) is more important because it disciplines tie-breaking across histories. Suppose voter $i$ is indifferent between policies $x$ and $y$, and that at history $h$ (resp. $h'$), accepting a proposal $a$ (resp. $a'$) leads to policy $x$, while rejecting it leads to $y$. Then if $i$ votes for (resp. against) proposal $a$ at history $h$, she must also vote for (resp. against) proposal $a'$ at history $h'$. In other words, the manner in which a voter breaks ties only depends on the resulting continuation outcomes. 
The logic of this restriction is that the particular history is ``water under the bridge,'' and consequently should not affect the voter's deliberations, even in cases of indifference.\footnote{In settings with \hyperlink{GFA}{Generic Finite Alternatives}, Non-Capriciousness is always satisfied because (i) all equilibria are outcome-equivalent to the specific pure strategy Markov perfect equilibrium from \Cref{lemma:structural}(a), and (ii) voters are never indifferent between \emph{distinct} continuation outcomes (which means the tie-breaking restriction in \Cref{defi:NC}(b) has no bite). Thus, \Cref{Theorem-Capricious} below is a proper generalization of \Cref{Theorem-GenericFiniteAlternatives}.}

We prove that a Non-Capricious equilibrium exists, and that in all such equilibria, the agenda setter has near-dictatorial power whenever the collective choice problem is manipulable. 

\begin{theorem}\label{Theorem-Capricious}
For any collective choice problem $\CCP$, the following hold:
\begin{enumerate}[label={\normalfont (\alph*)}]
    \item There exists a \hyperlink{NonCapricious}{Non-Capricious} equilibrium. 
    \item The following holds if and only if $\CCP$ is \hyperlink{Definition:Manipulable}{Manipulable}: For every $\delta>0$, there exists some $T_\delta \in \mathbb{N}$ such that if the game has $T\geq T_\delta$ rounds, then given any initial default, the agenda setter's equilibrium payoff is no lower than $u^*_A-\delta$ in any \hyperlink{NonCapricious}{Non-Capricious} equilibrium.\footnote{This result can be equivalently stated in ordinal terms: if and only if $\CCP$ is manipulable, the final policy in any non-capricious equilibrium must itself be close to the agenda setter's favorite policies, $X^*_A$.}
\end{enumerate}
\end{theorem}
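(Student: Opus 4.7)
For existence in part (a), I would construct a Non-Capricious equilibrium by backward induction on the number of rounds remaining. Fix a measurable selector $\phi: X \to X$ with $\phi(x) \in \argmax\{u_A(y) : y \prefto_M x\}$, which exists by compactness of $X$ and continuity of the utilities and the weak majority relation. Define continuation outcomes recursively by $g_0(x) := x$ and $g_t(x) := g_{t-1}(\phi(x))$, and specify voter behavior: at any history with prevailing default $x$ and $t$ rounds remaining, voter $i$ accepts proposal $y$ iff $g_{t-1}(y) \prefto_i g_{t-1}(x)$, with indifferent voters breaking ties in favor of the proposal. Both Non-Capricious clauses then hold by construction: $g_t$ is deterministic and Markovian, and tie-breaking depends only on the pair of continuation outcomes. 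Sequential rationality of the agenda setter reduces to checking that $\phi(x)$ is optimal among acceptable proposals, which follows because acceptable proposals are exactly $\{y : g_{t-1}(y) \prefto_M g_{t-1}(x)\}$, a set over which $\phi$ maximizes $u_A \circ g_{t-1}$ by construction.

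For sufficiency in (b), the key technical ingredient is a uniform \emph{gap lemma}: manipulability together with compactness of $X$ and continuity of utilities yields, for each $\delta > 0$, some $\eta_\delta > 0$ such that every $x$ with $u_A(x) \leq u^*_A - \delta$ admits a $y$ with $u_A(y) \geq u_A(x) + \eta_\delta$ and $u_i(y) \geq u_i(x) + \eta_\delta$ for at least $(n+1)/2$ voters. This follows by a finite-subcover argument on the compact sublevel set, using manipulability's pointwise improvements. Given the gap lemma, I would analyze any Non-Capricious equilibrium with continuation mapping $g_T$. Monotonicity $u_A(g_T(x)) \geq u_A(g_{T-1}(x))$ is immediate, since the agenda setter can always propose $x$ itself. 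The central claim is that whenever $u_A(g_{T-1}(x)) < u^*_A - \delta$, the agenda setter can secure a strict improvement of size at least $\eta_\delta$ by proposing the gap-lemma witness for $g_{T-1}(x)$, whose strict majority margin $\eta_\delta$ survives the transition to continuation outcomes under consistent Non-Capricious tie-breaking. Iterating at most $T_\delta := \lceil (u^*_A - \min_x u_A(x))/\eta_\delta \rceil$ rounds then closes the gap.

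For necessity in (b), suppose $\CCP$ is not manipulable: there exists an unimprovable $x^* \in X \setminus X^*_A$. I would mirror the construction of (a) but use the \emph{opposite} tie-breaking convention, in which indifferent voters reject. By induction on $T$, one can show $g_T(x^*) = x^*$: in the inductive step, a proposal $y$ strictly preferred by the agenda setter would need to induce $g_{t-1}(y) \sprefto_M x^*$ strictly in order to pass (against the reject-on-indifference rule), but the conjunction of $u_A(g_{t-1}(y)) > u_A(x^*)$ and $g_{t-1}(y) \sprefto_M x^*$ violates unimprovability of $x^*$. Setting $\delta_0 := (u^*_A - u_A(x^*))/2 > 0$ and initial default $x^*$ then yields a Non-Capricious equilibrium whose payoff remains strictly bounded away from $u^*_A$ for every horizon $T$, negating the near-dictatorial conclusion.

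The main obstacle lies in the middle step of the sufficiency argument: translating the uniform voter gap over \emph{policies} (from the gap lemma) into a uniform gap over \emph{continuation outcomes} in an arbitrary Non-Capricious equilibrium, since voters compare $g_{t-1}(y)$ to $g_{t-1}(x)$ rather than $y$ to $x$ directly. My expectation is that a careful induction using monotonicity of $u_A \circ g_t$ in the horizon, together with the consistent tie-breaking clause of Non-Capricious (b), will show that $g_{t-1}$ preserves strict majority preferences with margin at least $\eta_\delta$; but propagating this preservation across voter-utility dimensions in a continuous policy space is the most delicate piece of the argument.
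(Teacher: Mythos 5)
Your part (a) construction and your uniform ``gap lemma'' are essentially the paper's (the latter is \Cref{lemma:uniform-improvement}), but the sufficiency half of (b) has a genuine gap exactly where you flag it, and the repair you anticipate is false. If the agenda setter proposes the gap-lemma witness $y$ for $g_{T-1}(x)$, voters compare the \emph{continuations} $g_{T-1}(y)$ and $g_{T-1}(x)$, and continuation play generically destroys the voters' $\eta_\delta$ margin: in the divide-the-dollar equilibrium constructed in \Cref{Section-DivideDollar}, the voters who approve each on-path proposal are exactly indifferent between the two continuations (they are left with zero either way), so the hoped-for claim that ``$g_{t-1}$ preserves strict majority preference with margin $\eta_\delta$'' fails even in the leading example. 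The paper's argument does not run through such a deviation at all. It first proves a structural characterization (\Cref{Lemma-NC-full2}(b)): every Non-Capricious equilibrium outcome equals a composition $[\hat\phi_1\circ\cdots\circ\hat\phi_T](x^0)$ of selections of the one-round improvement correspondence $\Phi^{\text{or}}$, whose threshold $V^{\text{as}}_A$ is the supremum of $u_A$ over the closure of the strictly acceptable set. That characterization is where Non-Capriciousness does real work: clause (b) delivers a single acceptance correspondence $M^\sigma$, pinned down in the final round, that governs acceptance of continuation outcomes at every earlier round, and a nestedness property of reachable continuation outcomes (\Cref{Obs-NC3}) propagates backward the fact that the agenda setter can always induce an \emph{acceptable} one-round improvement of the rejection continuation. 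Only after this does the uniform improvement get applied, once per round along the composition, yielding $T_\delta=\lceil \Delta/\eta_\delta\rceil$. This characterization is the missing idea in your proposal; without it, the per-round $\eta_\delta$ gain for the agenda setter is not established.

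The necessity half also needs repair. A blanket ``reject when indifferent'' convention need not yield an equilibrium in continuous policy spaces: at an improvable default the agenda setter's best response can fail to exist because the strictly acceptable set is not closed (in a one-round divide-the-dollar subgame she would want to give two voters ``$\epsilon$ more,'' and no optimal $\epsilon$ exists), so subgame perfection breaks off the path you care about. The paper instead reuses the part (a) construction with a selection $\hat\phi\in\Phi^{\text{or}}$ chosen so that $\hat\phi(x^*)=x^*$, which is possible precisely because unimprovable policies are fixed points of $\Phi^{\text{or}}$ (\Cref{Lemma-General1}(a)); ties are broken in favor of a proposal only when its accept-continuation equals the designated outcome, which both restores existence and pins the outcome at $x^*$ for every horizon. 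Your conclusion for necessity is correct, but it requires this (or an equivalent) construction rather than the opposite tie-breaking convention; relatedly, note that your part (a) selector, which maximizes over the weak acceptance set with always-favorable tie-breaking, could not serve here, since at an unimprovable $x^*$ it may still move to a policy the agenda setter strictly prefers.
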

The general logic of our earlier results continues to govern the proof: once there are sufficiently many rounds, every Non-Capricious equilibrium outcome must be \emph{nearly} unimprovable. Manipulability of the collective choice problem and continuity of the agenda setter's preferences then imply that the agenda setter's payoff is \emph{nearly} maximized. The complete proof, which involves considerable technical detail, appears in our Supplementary Appendix. We illustrate its structure through a full analysis of the standard divide-the-dollar problem in the Main Appendix. That analysis highlights two additional features. First, even with a small number of rounds (in this case, three), the agenda setter may obtain her favorite policy.\footnote{\Cref{Theorem-ThreeRounds} in \Cref{Subsection-Distributive} extends this result to a broad class of distribution problems.} Second, the main conclusion of \Cref{Theorem-Capricious} requires Non-Capricious tie-breaking: for the divide-the-dollar game, there is a Markovian equilibria with \emph{capricious} tie-breaking in which the agenda setter's power is more limited.

\section{Manipulable Collective Choice Problems}\label{Section-ManipulableCCPs}
In this section, we demonstrate that the property driving our main results, manipulability,  is (generically) satisfied in canonical models of spatial and distributive politics. 

\subsection{Spatial Politics}\label{subsection:spatial}
 
In the canonical spatial model, a policy consists of $d$ continuous components. Formally, the policy space is $\policyspace=\mathbb{R}^d$, each player $i$ has an ideal point $x^*_i$, and $u_i(x) = - \frac{1}{2}\| x - x_i^*\|^2$, i.e., players evaluate a policy based on its Euclidean distance from their ideal points.\footnote{Although we have assumed in \Cref{Section-Model} that the policy space is compact, it is convenient here to treat it as  unbounded to simplify the statement of \cref{Theorem-Spatial-Manip} below. However, the proof of that result (also sketched below) establishes the improvability of all policies aside from $x^*_A$ in the \emph{interior} of a compact and convex policy space $X\subsetneq \mathbb{R}^d$. Policies on the boundary of such $X$ are also improvable provided that all ideal points are interior, which is plausible when boundary policies represent extreme alternatives.}  
Under the assumption, the profile of ideal points, $(x^*_i)_{i=1,\dots,n,A} \in \mathbb{R}^{d(n+1)}$, completely characterizes the preference profile. 

Our analysis invokes a property we call \emph{Non-Coplanarity}. For a vector $x\in \Re^d$ where $d\geq 3$, let $[x]_{abc}\equiv (x_a,x_b,x_c)\in \Re^3$ be the projection  of $x$ into the subspace spanned by any three of the dimensions $a,b,c$. Non-Coplanarity entails the following property:

\begin{definition}\label{Definition-NoCoplanarity}\hypertarget{DefinitionNoCoplanarity}
For $d\geq 3$, the profile of ideal points $(x^*_i)_{i=1,\dots,n,A}$ satisfies \hyperlink{DefinitionNoCoplanarity}{Non-Coplanarity} if for every $a,b,c$, no four elements of the set $\left\{[x^*_1]_{abc}, \dots, [x^*_n]_{abc}, [x^*_A]_{abc}\right\} \subset \mathbb{R}^{3}$ are coplanar.
\end{definition}

 When there are only three policy dimensions ($d=3$), \Cref{Definition-NoCoplanarity} simply states that no four ideal points lie in the same plane. If there are more than three dimensions, it requires the same to be true for \emph{all} $3$-dimensional projections---that is, when we only consider dimensions $a,b,c$ and ignore the rest.

Our main result shows that spatial collective choice problems are manipulable whenever \hyperlink{DefinitionNoCoplanarity}{Non-Coplanarity} is satisfied and, moreover, that this condition holds generically.

\begin{theorem}\label{Theorem-Spatial-Manip}
Consider a collective choice problem $\CCP$ with  policy space $X = \mathbb{R}^d$, where $d \geq 3$ and players have Euclidean preferences with ideal points $(x^*_i)_{i=1,\dots,n,A}$.
\begin{enumerate}[noitemsep,label={\normalfont (\alph*)}]
    \item \label{bullet-spatialmanipulable} If the profile $(x^*_i)_{i=1,\dots,n,A}$ satisfies \hyperlink{DefinitionNoCoplanarity}{Non-Coplanarity}, then the collective choice problem $\CCP$ is \hyperlink{Definition:Manipulable}{Manipulable}.
    \item \label{bullet-spatialgeneric} The set of profiles for which  \hyperlink{DefinitionNoCoplanarity}{Non-Coplanarity} holds has full Lebesgue measure and is open-dense in $\mathbb{R}^{d (n+1)}$.
\end{enumerate}
\end{theorem}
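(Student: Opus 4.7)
For part~(b), a standard algebraic-genericity argument suffices. For any triple of coordinate dimensions $(a,b,c)$ and any 4-tuple of players $(i_1,\dots,i_4)$, coplanarity of $\bigl\{[x^*_{i_j}]_{abc}\bigr\}_{j=1}^4$ is the vanishing of a nontrivial polynomial in the profile coordinates---the $4\times 4$ affine determinant formed by appending a column of ones to the projected coordinates. Its zero set is a closed proper algebraic subvariety of $\mathbb{R}^{d(n+1)}$, with empty interior and Lebesgue measure zero. Taking the finite union over all triples and 4-tuples preserves both properties, so \hyperlink{DefinitionNoCoplanarity}{Non-Coplanarity} holds on an open, dense set of full Lebesgue measure.

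For part~(a), the plan is to dualize manipulability and then produce a separating hyperplane using Non-Coplanarity. For Euclidean preferences, the perturbation $y = x + \epsilon v$ is a joint strict improvement for the agenda setter and a coalition $S$ of voters whenever $v\cdot(x^*_A - x) > 0$ and $v\cdot(x^*_i - x) > 0$ for all $i \in S$ (for $\epsilon > 0$ small enough to absorb the $O(\epsilon^2)$ correction). By the hyperplane separation theorem, such a direction exists for some majority coalition iff $x \notin \bigcap_{S \text{ majority}} \mathrm{conv}\bigl(\{x^*_A\}\cup\{x^*_i : i \in S\}\bigr)$, so manipulability is equivalent to that intersection reducing to $\{x^*_A\}$.

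I would next reduce to $d=3$. Because $x \neq x^*_A$, some coordinate $(x^*_A - x)_a$ is nonzero; picking a coordinate triple $(a,b,c)$ containing $a$, a separating direction $v$ supported on these three coordinates in the projected 3-dimensional problem lifts to a separating direction in $\mathbb{R}^d$, and Non-Coplanarity in $\mathbb{R}^d$ passes directly to the projected ideal points. The base case $n=3$ of the 3-dimensional claim is transparent: the three majority convex hulls are precisely the three faces of the tetrahedron $\mathrm{conv}\{x^*_A, x^*_1, x^*_2, x^*_3\}$ incident to $x^*_A$, and under Non-Coplanarity (which forces these four points to form a non-degenerate tetrahedron) their intersection is just the apex $\{x^*_A\}$.

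The main obstacle is the general $n \geq 5$ case in $\mathbb{R}^3$, which I would handle by a local construction. Decomposing $v = a\hat n_A + bw$ with $a > 0$ and $w$ a unit vector in the plane $P$ perpendicular to $\hat n_A = (x^*_A - x)/\|x^*_A - x\|$, and writing $n_i^P$ for the projection of $x^*_i - x$ onto $P$ and $\alpha_i$ for its coefficient along $\hat n_A$, one has $v\cdot(x^*_i - x) > 0$ iff $a\alpha_i + b(w\cdot n_i^P) > 0$. For $b/a$ large this reduces (for voters with $n_i^P \neq 0$) to $w\cdot n_i^P > 0$; since $n$ is odd, the planar identity $|\{i : w\cdot n_i^P > 0\}| + |\{i : w\cdot n_i^P < 0\}|$ equals the number of voters with $n_i^P \neq 0$, so in the generic case that all $n_i^P$ are nonzero, a majority is guaranteed. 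The hard sub-case is when one or two voters have $n_i^P = 0$---i.e., lie on the line through $x$ and $x^*_A$---and are on the anti-A side, because they deterministically oppose any such $v$ and eat into the majority count (Non-Coplanarity already restricts to at most two such voters by forbidding four collinear, hence coplanar, ideals). My plan is to show that whenever the planar rotation argument falls short, the failure configuration---on-line anti-A voters combined with sufficiently aligned or anti-parallel in-plane projections of the remaining voters---forces four ideal points into a common affine plane, contradicting Non-Coplanarity. The illustrative $n=3$ case is the prototype: three collinear points $(x^*_A, x, x^*_1)$ together with an anti-parallel pair $n_2^P = -\lambda n_3^P$ pack $\{x^*_A, x^*_1, x^*_2, x^*_3\}$ into the common affine plane spanned by that line and the direction of $n_2^P$. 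The combinatorial bookkeeping needed to extract this contradiction for arbitrary $n$ is where I expect the main technical effort to lie.
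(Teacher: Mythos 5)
Your part (b) is correct and essentially coincides with the paper's argument: each coplanarity event is the zero set of a nontrivial polynomial (the paper uses the scalar triple product, you the equivalent $4\times 4$ affine determinant), so the violating profiles form a finite union of closed null sets and Non-Coplanarity holds on an open, dense, full-measure set.

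For part (a) you follow the same geometric route as the paper---first-order improvement directions, reduction to a coordinate triple with $[x]_{abc}\neq [x^*_A]_{abc}$, and a counting argument in the plane $P$ orthogonal to $x^*_A-x$---but the case you defer to ``combinatorial bookkeeping'' is precisely the case where Non-Coplanarity does its work, and your generic-$w$ parity count cannot be patched by bookkeeping alone. Concretely, if some voter $k$ lies on the line through $x$ and $x^*_A$ (so $n^P_k=0$) with $\alpha_k\le 0$, you need $(n+1)/2$ favorable voters among the remaining $n-1$, while a generic rotation only guarantees $(n-1)/2$ on the larger side, since the $n-1$ nonzero projections ($n-1$ is even) can split evenly; your construction then stalls one vote short. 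The missing step is an anchoring-and-tilting argument, which is what the paper's proof does: fix a voter $i$ with $n^P_i\neq 0$; Non-Coplanarity implies at most \emph{one} other voter $j$---counting any on-line voter, whose zero projection is collinear with everything---has $n^P_j$ collinear with $n^P_i$, because two such voters would put $x^*_i-x$, $x^*_j-x$, $x^*_k-x$, $x^*_A-x$ in a two-dimensional subspace and hence the four ideal points in a common plane through $x$. Taking $\omega\in P$ with $\omega\perp n^P_i$, the remaining at least $n-2$ voters get strict signs, one side has at least $(n-1)/2$ of them, and tilting $\omega$ slightly toward $n^P_i$ also captures voter $i$, giving $(n+1)/2$ strict gainers in $P$; a small push toward $x^*_A$ (your $a>0$ component, or the paper's strict-convexity perturbation) then makes the agenda setter a strict gainer while preserving the majority. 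Your intuition that a failure configuration forces coplanarity is sound (one on-line voter plus two off-line voters with collinear projections are indeed coplanar with $x^*_A$), but the theorem needs the contrapositive proved for all configurations, and that proof is the anchored construction just described rather than a parity count. Two smaller corrections: your bound of ``at most two'' on-line voters is loose---two voters on the line through $x$ and $x^*_A$ give three collinear ideal points, which are coplanar with \emph{any} fourth, so Non-Coplanarity permits at most one; and in the tilting step one should record that a voter with $w\cdot n^P_i=0$ but $\alpha_i>0$ is captured by the positive component along $x^*_A-x$, which is how the pro-$A$ on-line case is closed.
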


\Cref{Theorem-Spatial-Manip} demonstrates that, when there are at least three policy dimensions, the spatial model generically satisfies manipulability, i.e., all policies other than the agenda setter's ideal point, $x_A^*$, are improvable. Equivalently, for a cooperative game in which the decisive coalitions are those containing the agenda setter and some majority of voters, \Cref{Theorem-Spatial-Manip} states that, with three or more dimensions, the \emph{core} of the spatial model generically contains the agenda setter's ideal point and nothing else.  
Given the importance of the spatial model and the elementary geometric argument used to prove \Cref{Theorem-Spatial-Manip}, this result may be of independent interest.\footnote{\cite{duggan2018extreme} offer a related result for ``constrained core points,'' which are similar to unimprovable policies. They show that with four or more dimensions, for the class of $\textbf{C}^2$-smooth and strictly concave utility functions, there are no interior constrained core points for a topologically generic class of utility functions.}

Together with our prior results (\Cref{Theorem-GenericFiniteAlternatives,Theorem-epsilongrid,Theorem-Capricious}), \Cref{Theorem-Spatial-Manip} establishes that the agenda setter can exploit real-time agenda control to obtain her ideal point (exactly or approximately) whenever there are three or more policy dimensions. This conclusion holds even if voters' preferences are largely congruent. Suppose voters' ideal points are (relatively) close to each other, and the agenda setter's ideal point lies far outside their convex hull. As long as \hyperlink{DefinitionNoCoplanarity}{Non-Coplanarity} holds, the agenda setter inevitably obtains her ideal point, \emph{even if the initial default option lies within that convex hull}. In contrast, for fixed agenda models, \cite{mckelvey1986covering} shows that an agenda setter can only achieve policies near the initial default, and the distance between the initial and final policies shrinks to $0$ as voters' ideal points converge to a single point.

The existence of three policy dimensions is critical for \Cref{Theorem-Spatial-Manip}. In the unidimensional case, all policies between $x_A^*$ and the ideal point of the median voter are unimprovable (a consequence of the Median Voter Theorem). For the two-dimensional case, manipulability necessarily fails whenever the agenda setter's ideal point is outside the convex hull of voters' ideal points. We illustrate this failure and elaborate further in \Cref{Section-Supplementary2D}.

The fact that our result requires three or more policy dimensions \citep[rather than two, as in the ``chaos theorem'' of][]{mckelvey1976intransitivities} has implications for \emph{policy bundling}. If the legislature faces a decision involving only one or two dimensions, the agenda setter benefits from introducing a third dimension---even if the associated default is already her ideal---because the collective choice problem thereby becomes manipulable, enabling her to achieve her optima in all three dimensions. 

Next, we sketch the geometric argument for \Cref{Theorem-Spatial-Manip}\ref{bullet-spatialmanipulable} in the $3$-dimensional case. The full proof appears in the Supplementary Appendix. 

\paragraph{Proof Sketch for $d=3$.} 

Consider a policy $x$ that is not the agenda setter's favorite, $x^*_A$. We show that $x$ is improvable using a two-step argument. First, we find a policy $y$ near $x$ such that a majority of voters strictly prefer $y$ to $x$, and moving from $x$ to $y$ generates a \emph{second-order} loss for the agenda setter. Second, we perturb $y$ to some $z$ such that the same majority of voters strictly prefers $z$ to $x$, but moving from $y$ to $z$ generates a \emph{first-order} gain for the agenda setter, so that the agenda setter also strictly prefers $z$ to $x$. This argument then establishes that $x$ is improvable. \medskip

\begin{figure}[h]
    \centering
    \begin{minipage}{.5\textwidth}
  \centering  
\begin{tikzpicture}[domain=0:3, scale=3,thick]

\let\radius\undefined
\newlength{\radius}
\setlength{\radius}{0.5pt}

\let\shorten\undefined
\newlength{\shorten}
\setlength{\shorten}{4pt}

\coordinate (x) at (0,0);
\coordinate (xa) at (-0.8,0.8);
\coordinate (xi) at (1.2,-0.2);
\coordinate (xj) at (-1.2, 0.2);
\coordinate (yi) at (0.5,0.5);
\coordinate (yj) at (-0.5,-0.5);
\coordinate (S) at (.3,0.75);

\draw[transparent] (1.1,-0.7)node[right]{$x_k^*$}; 
\draw[transparent] (xi)node[right,yshift=-5]{$x_i^*$};
\draw[transparent] (xj)node[right,yshift=5]{};

\filldraw[red](xa)circle(\radius);
\draw[red] (xa)node[right,yshift=5]{$x_A^*$};

\draw[red,dashed,<-,shorten >=\shorten, shorten <=\shorten] (xa)--(x);

\draw (-1,-0.7)--(0.4,0.7)--(1,0.7)--(-0.4,-0.7)--(-1,-0.7);
\filldraw[blue](x)circle(\radius);

\tkzMarkRightAngle[size=.05](xa,x,yj);
\draw[red] (x)node[right,yshift=-5]{$x$};
\draw (S)node{$S$};
\filldraw[red](x)circle(\radius);

\end{tikzpicture}
  
    \end{minipage}%
    \begin{minipage}{0.5\textwidth}
 \centering
 \begin{tikzpicture}[domain=0:3, scale=3,thick]

\let\radius\undefined
\newlength{\radius}
\setlength{\radius}{0.5pt}

\let\shorten\undefined
\newlength{\shorten}
\setlength{\shorten}{4pt}

\coordinate (x) at (0,0);
\coordinate (xa) at (-0.8,0.8);
\coordinate (xi) at (1.2,-0.2);
\coordinate (xj) at (-1.2, 0.2);
\coordinate (yi) at (0.5,0.5);
\coordinate (yj) at (-0.5,-0.5);
\coordinate (S) at (.3,0.75);

\filldraw[red](xa)circle(\radius);
\draw[red] (xa)node[right,yshift=5]{$x_A^*$};
\draw (S)node{$S$};

\draw[red,dashed,<-,shorten >=\shorten, shorten <=\shorten] (xa)--(x);

\draw (-1,-0.7)--(0.4,0.7)--(1,0.7)--(-0.4,-0.7)--(-1,-0.7);
\filldraw[blue](x)circle(\radius);

\draw[blue](x)node[right]{$y_A^*$};

\filldraw[blue](xi)circle(\radius) (xj)circle(\radius) (yi)circle(\radius) (yj)circle(\radius);
\draw[blue] (xi)node[right,yshift=-5]{$x_i^*$} (xj)node[right,yshift=5]{$x_j^*$} (yi)node[right,yshift=5]{$y_i^*$} (yj)node[right,yshift=-5]{$y_j^*$} (x)node[right]{$y_A^*$};

\draw[blue,dashed,<-,shorten >=\shorten, shorten <=\shorten] (xi)--(yi);
\draw[blue,dashed,<-,shorten >=\shorten, shorten <=\shorten] (xj)--(yj);

\draw[blue] (yj)--(yi);

\draw[blue,dashed,<-,shorten >=\shorten, shorten <=\shorten] (1.1,-0.7)--(0.2,0.2);
\filldraw[blue] (1.1,-0.7)circle(\radius);
\draw[blue] (1.1,-0.7)node[right]{$x_k^*$};

\draw[purple,ultra thick] (0.6,-0.05)--(0.6,-0.35) (0.75,-0.15)--(0.45,-0.25);

\end{tikzpicture}

    \end{minipage}
    \caption{Construction of plane $S$ (left) and non-colinearity of constrained ideal points (right).}
      \label{fig:Euclid-1}
\end{figure}
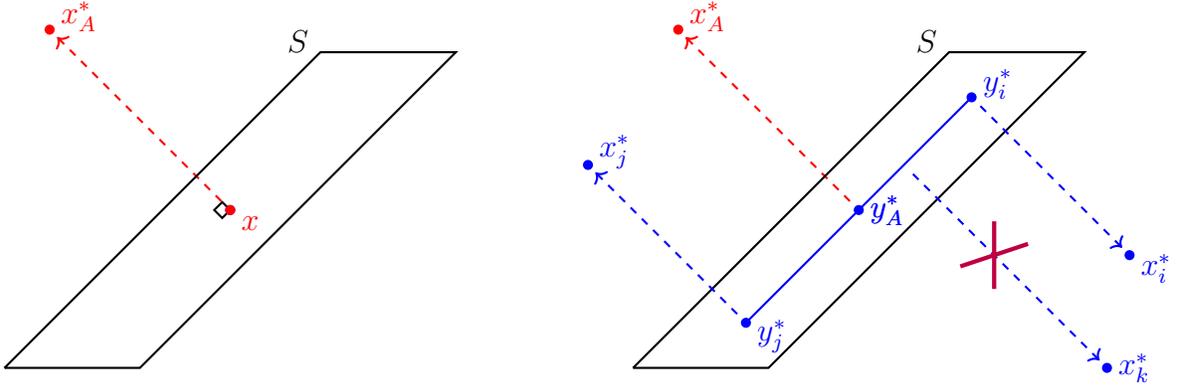

\noindent\textit{Step 1: Constructing $y$.} Let $S \subset \mathbb{R}^3$ denote the plane that is tangent to the agenda setter's indifference curve at the point $x$. As depicted in \cref{fig:Euclid-1} (left panel), $S$ is orthogonal to the gradient $\nabla u_A(x) = x^*_A - x$. Denote the agenda setter's favorite point in $S$---henceforth, her \emph{constrained ideal point}---by $y_A^*$, and observe that, by construction, $y_A^*$ coincides with $x$. Similarly, let $y^*_i \in S$ denote each voter $i$'s constrained ideal point and note that the gradient $\nabla u_i (y^*_i) = x^*_i - y^*_i$ is orthogonal to $S$.

We claim the following: 
\begin{align}\label{Equation-Existenceofy}
    \text{Under \hyperlink{DefinitionNoCoplanarity}{Non-Coplanarity}, }\exists\, y\in S\text{ such that }y\succ_M y_A^*.
\end{align}
To prove \eqref{Equation-Existenceofy}, we make two preliminary observations: (i) there are at most two voters $i \neq j$ such that $y^*_A$, $y^*_i$, and $y^*_j$ all lie on the same line in $S$, and (ii) there can be at most one voter $i$ for whom $y^*_i = y^*_A$. \cref{fig:Euclid-1} (right panel) illustrates the argument for (i): if there were a third voter $k \notin\{i,j\}$ for whom $y^*_k$ were collinear with $\{y^*_A, y^*_i, y^*_j \}$, then, because all players' gradients at their constrained ideal points are orthogonal to the same plane $S$, the four unconstrained ideal points $\{x^*_A, x^*_i, x^*_j, x^*_k \}$ would be coplanar, contradicting the assumption of \hyperlink{DefinitionNoCoplanarity}{Non-Coplanarity}. The argument for (ii) is similar: if there were two voters $i\neq j$ such that $y^*_i = y^*_j = y^*_A$, then the unconstrained ideal points $\{x^*_A, x^*_i, x^*_j \}$ would be collinear, and hence coplanar with the ideal point of any other voter.

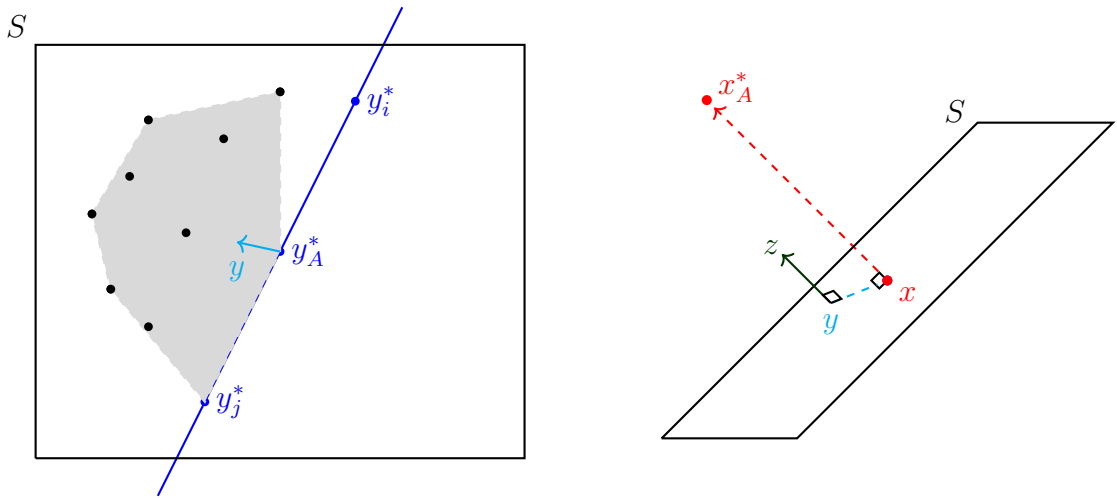
\begin{figure}[h]
    \centering
    \begin{minipage}{.5\textwidth}
\centering
\begin{tikzpicture}[domain=0:3, scale=2.5,thick]

\let\radius\undefined
\newlength{\radius}
\setlength{\radius}{0.5pt}

\coordinate (ya) at (0,0);
\coordinate (yi) at (0.4,0.8);
\coordinate (yj) at (-0.4,-0.8);
\coordinate (S) at (-1.4,1.2);
\draw (S)node{$S$};

\filldraw[blue](ya)circle(\radius) (yi)circle(\radius) (yj)circle(\radius);
\draw[blue] (ya)node[right]{$y_A^*$} (yi)node[right]{$y_i^*$} (yj)node[right]{$y_j^*$};
\draw[blue] (-0.65,-1.3)--(0.65,1.3);

\draw[] (-1.3,-1.1)--(-1.3,1.1)--(1.3,1.1)--(1.3,-1.1)--(-1.3,-1.1);

\filldraw[] (-1,0.2)circle(\radius)(-0.9,-0.2)circle(\radius) (-0.8,0.4)circle(\radius) (-0.7,0.7)circle(\radius) (-0.7,-0.4)circle(\radius) (-0.5,0.1)circle(\radius)  (-0.3,0.6)circle(\radius) (-0,0.85)circle(\radius);

\filldraw[black!15!white,dashed](-1,0.2)--(-0.7,0.7)--(-0,0.85)--(ya)--(yj)--(-0.9,-0.2)--(-1,0.2) ;

\filldraw[] (-1,0.2)circle(\radius)(-0.9,-0.2)circle(\radius) (-0.8,0.4)circle(\radius) (-0.7,0.7)circle(\radius) (-0.7,-0.4)circle(\radius) (-0.5,0.1)circle(\radius)  (-0.3,0.6)circle(\radius) (-0,0.85)circle(\radius);

\draw[cyan, <-] (-0.23,0.05)node[below,yshift=-3]{$y$}--(0,0);

\end{tikzpicture}

    \end{minipage}%
    \begin{minipage}{0.5\textwidth}
    
    \centering
    
    \begin{tikzpicture}[domain=0:3, scale=3,thick]

\let\radius\undefined
\newlength{\radius}
\setlength{\radius}{0.5pt}

\let\shorten\undefined
\newlength{\shorten}
\setlength{\shorten}{4pt}

\coordinate (x) at (0,0);
\coordinate (xa) at (-0.8,0.8);
\coordinate (xi) at (1.2,-0.2);
\coordinate (xj) at (-1.2, 0.2);
\coordinate (yi) at (0.5,0.5);
\coordinate (yj) at (-0.5,-0.5);
\coordinate (S) at (.3,0.75);
\coordinate (y) at (-.25,-.1);
\coordinate (xk) at (-.5,-.2);
\coordinate (xl) at (-.5,-.35);
\coordinate (z) at (-.5,.15);

\filldraw[red](xa)circle(\radius);
\draw[red] (xa)node[right,yshift=5]{$x_A^*$};
\draw (S)node{$S$};

\tkzMarkRightAngle[size=.05](xa,x,yj);

\draw[red,dashed,<-,shorten >=\shorten, shorten <=\shorten] (xa)--(x);

\draw (-1,-0.7)--(0.4,0.7)--(1,0.7)--(-0.4,-0.7)--(-1,-0.7);
\filldraw[blue](x)circle(\radius);

\draw[red] (x)node[right,yshift=-5]{$x$};
\filldraw[red](x)circle(\radius);



\draw[cyan,dashed,shorten <=0.5pt,shorten >=\shorten] (-.25,-.1)node[below]{$y$}--(x);

\draw[green!20!black,<-, shorten <=\shorten,shorten >=0.5pt] (-.5,0.15)node[left,xshift=6]{$z$}--(-.25,-.1);
\tkzMarkRightAngle[size=.05](z,y,x); 

\end{tikzpicture}

    \end{minipage}
    \caption{Perturbations $x \to y \in S$ (left) and $y \to z \notin S$ (right).}
      \label{fig:Euclid-2}
\end{figure}

We now use these observations to deduce \eqref{Equation-Existenceofy}. \cref{fig:Euclid-2} (left panel) shows a head-on view of the plane $S$. By (i), for any line $L \subset S$ through $y^*_A$, there are at most two voters $i\neq j$ whose constrained ideal points lie on $L$. Therefore, at least $n-2$ voters' constrained ideal points lie off of $L$. The pigeonhole principle implies that of these $n-2$ constrained ideal points, a strict majority must lie ``above'' or ``below'' the line $L$; our figure shows $(n-1)/2$ points above the line.   
We can then shift $y_A^*$ to some new policy $y$ slightly ``above'' $L$, so that all of those $(n-1)/2$ voters strictly prefer $y_A^*$ to $y$. Moreover, by fact (ii) above, we can pick the direction $y-y_A^*$ so that at least one of voters $i$ and $j$ also strictly prefers $y$ to $y_A^*$; in \cref{fig:Euclid-2} (left panel), this is voter $j$. Thus, a majority strictly prefers $y$ to $y_A^*=x$. Furthermore, we can chose $y_A^*$ arbitrarily close to $y_A^*$, so the agenda setter only incurs a second-order loss (because $y_A^*$ is the agenda setter's ideal policy in $S$).\medskip

\noindent\textit{Step 2: Constructing $z$.} We construct $z$ by perturbing $y$ off of the plane $S$ in the direction $\nabla u_A(x) = x^*_A -x$; see the right panel of \cref{fig:Euclid-2}. Moving from $y$ to $z$ generates a first-order gain for the agenda setter, ensuring that $z \succ_A x$ (because the original move from $x$ to $y$ generates only a second-order loss). Moreover, we can choose the point $z$ close enough to $y$ to ensure that those who strictly prefers $y$ to $x$ (a majority) also strictly prefers $z$ to $x$. \hfill \qedsymbol

\subsection{Distributive Politics}\label{Subsection-Distributive}

This section shows that collective choice problems involving ``distributive politics'' are generally manipulable, and consequently that all problems become manipulable with the addition of pork or transfers. Moreover, settings with distributive politics satisfy a strong version of manipulability that encompasses all voting rules for which no voter has veto power. 

We begin with a definition:
\begin{definition}\label{defi:DP}
\hypertarget{DistributionProblem}A collective choice problem $\CCP = (X, \{\prefto_i\}_{i=1,\dots,n,A})$ is a \hyperlink{DistributionProblem}{Distribution Problem} if it satisfies the following two properties for every policy $x \in X$ and player $i \in N \cup \{A\}$ (where we let $u_i$ represent $\prefto_i$):
\begin{enumerate}[noitemsep,label={\normalfont (\alph*)}]
     \item \textbf{Scarcity:} \hypertarget{Bullet:Scarcity} If $u_i(x) < \max_{z \in X} u_i(z)$, then there exists either some player $j \neq i$ such that $u_j(x) > \min_{z \in X} u_j(z)$, or some policy $y$ such that $u_k(y) > u_k(x)$ for all players $k$.%
     \item \textbf{Transferability:} \hypertarget{Bullet:Transferability} If $u_i(x) > \min_{z \in X} u_i(z)$, then there exists some policy $y$ such that $u_j (y) > u_j(x)$ for all players $j \neq i$.
\end{enumerate}

\end{definition}

Scarcity captures the notion that utility flows from a limited resource: if the resource is not being used to maximize player $i$'s payoff, then either it is being used to give some other player more than her minimal utility, or there is waste, in which case some other allocation could make all players strictly better off. Transferability captures the notion that the underlying resource is at least somewhat fungible: if player $i$ enjoys surplus, we can redistribute some of that surplus to everyone else.\footnote{\cite{banks2006general} call this notion ``limited transferability.''} Notably, this definition does not require utility to be fully transferable.

As we show next, the class of Distribution Problems encompasses a wide range of possibilities, including two canonical cases: any collective choice problem augmented with transfers (including divide-the-dollar), and settings with pork-barrel politics. 
\begin{example}[Divide-the-Dollar / Collective Choices with Transfers]\label{Example-CCPTransfers}
    Consider any collective choice problem with policy space $X$ and utility profile $(u_i)_{i =1, \dots, n,A}$. We augment this problem with monetary transfers. Assuming utility is quasi-linear in money and that each player has an outside option yielding a payoff of zero, the policy space for the resulting transferable-utility collective choice problem is
    \begin{align*}
        \mathcal{Y} = \left\{ y \in \mathbb{R}_+^{n+1} : \exists x \in X \text{ such that }  \sum_{i=1,\dots,n,A} y_i = \sum_{i=1,\dots,n,A} u_i(x)   \right\}
    \end{align*}
    and the utility functions are $v_i(y) = y_i$. 
    This formulation encompasses both the standard divide-the-dollar problem \citep[e.g.][]{baron1989bargaining}, as well as settings involving both production decisions and transfers.\footnote{For simplicity, in this example, utility is fully transferable. As noted above, one can weaken this assumption.}
\end{example}

\begin{example}[Pork Barrel Politics]\label{Example-PorkBarrelPolitics}
Suppose there are finitely many public projects $k \in \mathcal{K}$, each of which generates an aggregate benefit $B^k > 0$ and aggregate cost $C^k > 0$. Some projects may be inefficient ($C^k>B^k$). A policy $x$ specifies (i) the projects the group will implement (a subset $\mathcal{I} \subseteq \mathcal{K}$), and (ii) for each of those projects, the distribution of benefits and costs among the players (i.e., $b^k, c^k \in \mathbb{R}^{n+1}_+$ such that $\sum_{i = 1, \dots, n, A} b^k_i = B^k$ and $\sum_{i = 1, \dots, n, A} c^k_i = C^k$). Player $i$'s preferences correspond to $u_i(x) = \sum_{k \in \mathcal{I}} \left( b^k_i - c^k_i \right)$. Thus, costs and benefits are both perfectly transferable.\footnote{This example blends the model of \citet{baron1991majoritarian}, who considers a single project with perfectly transferable benefits but a fixed distribution of costs, with that of \citet{ffm-porkbarrel-AJPS1987}, \citet{bernheim2006power}, and others, who consider multiple projects with fixed distributions of both benefits and costs.} 
\end{example}

In addition to being ubiquitous, Distribution Problems are manipulable. The proof is simple: if a policy $x$ is not one of the agenda setter's favorites (i.e., not in $X^*_A$), \hyperlink{Bullet:Scarcity}{Scarcity} implies that either (i) some other policy $y$ strongly Pareto dominates  $x$, or (ii) some voter $i$ obtains more than her minimal utility from policy $x$, in which case  \hyperlink{Bullet:Transferability}{Transferability} implies that there is a policy $y$ such that both the agenda setter and all voters other than $i$ strictly prefer $y$ to $x$. In either case, $x$ is obviously improvable; indeed, \emph{all} players (with the possible exception of $i$) strictly prefer $y$ to $x$.

As the preceding argument makes clear, Distribution Problems satisfy a strong version of manipulability that encompasses any voting rule for which no voter has veto power (rather than just majority rule). We formalize this point as follows. A general voting rule is a collection $\mathcal{D} \subseteq 2^N$ of \emph{winning coalitions} $D \subseteq N$, by which we mean that a proposal passes if and only if there exists some $D \in \mathcal{D}$ for which all voters $i \in D$ support the proposal. This class of voting rules includes \emph{quota rules}, for which there is a quota $q$ such that $\mathcal{D} = \{D \subseteq N : |D|\geq q\}$, as well as rules that treat voters asymmetrically. A voting rule $\mathcal{D}$ is \emph{veto-proof} if for every voter $i \in N$, there exists a winning coalition $D \in \mathcal{D}$ such that $D \subseteq N \backslash \{i\}$; in other words, voter $i$'s support is not necessary for a proposal to pass. We say that a collective choice problem is \emph{$\mathcal{D}$-Manipulable} if for every $x\notin X_A^*$, there exists a policy $y$ and a coalition $D\in \mathcal D$ such that $y\succ_A x$ and $y\succ_i x$ for every $i\in D$. 

The following result formalizes our observation that Distribution Problems are manipulable for a broad class of voting rules.
\begin{theorem}\label{Theorem-DP-Manip}
If $\mathcal{C}$ is a \hyperlink{DistributionProblem}{Distribution Problem}, then it is $\mathcal{D}$-\hyperlink{Definition:Manipulable}{Manipulable} for every veto-proof voting rule $\mathcal D$.
\end{theorem}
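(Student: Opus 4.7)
The plan is to unpack the informal sketch given immediately before the theorem statement, tracking which of Scarcity and Transferability is invoked, and then to pick the winning coalition using veto-proofness. Fix a \hyperlink{DistributionProblem}{Distribution Problem} $\mathcal{C}$, a veto-proof voting rule $\mathcal{D}$, and a policy $x \notin X^*_A$. The goal is to produce a policy $y$ and a winning coalition $D \in \mathcal{D}$ such that $y \succ_A x$ and $y \succ_i x$ for every $i \in D$.

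First I would apply \hyperlink{Bullet:Scarcity}{Scarcity} to the agenda setter $A$. Since $x \notin X^*_A$, we have $u_A(x) < \max_{z \in X} u_A(z)$, so Scarcity yields two cases. In the \emph{Pareto case}, there is a policy $y$ with $u_k(y) > u_k(x)$ for every player $k$; in particular $y \succ_A x$ and $y \succ_i x$ for every voter $i$. Here any $D \in \mathcal{D}$ works, and $\mathcal{D}$ is nonempty by veto-proofness (applied to any single voter). In the \emph{surplus case}, there is a player $j \neq A$---necessarily a voter---with $u_j(x) > \min_{z \in X} u_j(z)$. I would then apply \hyperlink{Bullet:Transferability}{Transferability} to this voter $j$ to obtain a policy $y$ such that $u_k(y) > u_k(x)$ for every player $k \neq j$. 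This gives $y \succ_A x$ and $y \succ_i x$ for every voter $i \neq j$.

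To conclude the surplus case, I would invoke veto-proofness of $\mathcal{D}$ at the voter $j$ identified above: there exists a winning coalition $D \in \mathcal{D}$ with $D \subseteq N \setminus \{j\}$. Since $y$ strictly beats $x$ for $A$ and for every $i \in D \subseteq N \setminus \{j\}$, the coalition $D$ together with the policy $y$ witnesses $\mathcal{D}$-improvability of $x$. Combined with the Pareto case, this shows every $x \notin X^*_A$ is $\mathcal{D}$-improvable, which is exactly $\mathcal{D}$-\hyperlink{Definition:Manipulable}{Manipulability}.

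I do not anticipate any serious obstacle: the argument is essentially a case split driven by Scarcity, with Transferability supplying the improving policy in the non-trivial case and veto-proofness supplying the winning coalition that excludes the one voter whose utility might drop. The only subtle point worth flagging in the write-up is that veto-proofness guarantees $\mathcal{D} \neq \emptyset$, which is needed even in the Pareto case; this is immediate from the definition applied to any single voter.
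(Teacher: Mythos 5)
Your proof is correct and follows exactly the argument the paper itself uses (the sketch preceding the theorem, which the paper says extends verbatim to veto-proof rules): apply Scarcity to the agenda setter, handle the Pareto-dominance case with any winning coalition, and in the surplus case apply Transferability to the voter $j$ with slack and pick, via veto-proofness, a winning coalition excluding $j$. Your added remark that veto-proofness guarantees $\mathcal{D}\neq\emptyset$ is a fine (if minor) point of care.
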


We omit a formal proof, as the argument is identical to the one given above for simple-majority rule. \cref{Theorem-DP-Manip}, coupled with our prior results, highlights the broad power of agenda control: for any Distribution Problem and veto-proof voting rule, the agenda setter obtains a near-favorite policy in every (non-capricious) equilibrium, regardless of the initial default, provided there are sufficiently many rounds.\footnote{\label{GenFn}As we asserted in \Cref{subsection:finite}, \Cref{Theorem-GenericFiniteAlternatives,Theorem-epsilongrid,Theorem-Capricious} extend, with obvious (minor) adjustments, to general voting rules $\mathcal{D}$ and collective choice choice problems that are $\mathcal{D}$-manipulable.} 

In fact, for Distribution Problems, we obtain an even starker result: the agenda setter can obtain her favorite policy (not merely an approximation) even if the game is short, where the required number of rounds depends on the voting rule. Recalling that $u^*_A \equiv \max_{x \in  X} u_A(x)$, we have:

\begin{theorem}\label{Theorem-ThreeRounds}
Suppose $\CCP$ is a \hyperlink{DistributionProblem}{Distribution Problem} satisfying \hyperlink{TI}{Thin Individual Indifference}. Then:
\begin{enumerate}[label={\normalfont (\alph*)}]
    \item If the voting rule is a quota rule with $q<n$, the agenda setter obtains payoff $u^*_A$ in every \hyperlink{NonCapricious}{Non-Capricious} equilibrium regardless of the initial default for any game with at least $\lceil n/(n-q)\rceil$ rounds.
    \item If the voting rule is veto-proof, the agenda setter obtains payoff $u^*_A$ in every \hyperlink{NonCapricious}{Non-Capricious} equilibrium regardless of the initial default for any game with at least $n$ rounds.
\end{enumerate}
\end{theorem}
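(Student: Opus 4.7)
The plan is to proceed by backward induction on rounds, constructing at each stage the AS's favorite improvement and showing that under \hyperlink{NonCapricious}{Non-Capricious} tie-breaking this generates the unique equilibrium path regardless of the initial default. The proof leverages the structural logic of \Cref{lemma:structural} extended to general voting rules and \hyperlink{NonCapricious}{Non-Capricious} equilibria (following footnote~\ref{GenFn}): at each round, the AS proposes her best improvement given the anticipated continuation, and voters accept whenever the continuation from acceptance yields a winning-coalition-preferred outcome to the continuation from rejection. Since \hyperlink{DistributionProblem}{Distribution Problems} are $\mathcal{D}$-\hyperlink{Definition:Manipulable}{Manipulable} for every veto-proof $\mathcal{D}$ (\Cref{Theorem-DP-Manip}), this characterization delivers an outcome for every initial default; the task is to bound the number of rounds needed to reach $X_A^*$.

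For part (a), fix a quota rule with quota $q<n$. I claim that from any default $y$ at which $k$ voters have $u_i(y)>\min_{z\in X}u_i(z)$, the AS can in one round induce a new default $y'$ at which at most $\max\{0,k-(n-q)\}$ voters remain above their minimum, while $u_A(y')>u_A(y)$. The construction picks any subset $S\subseteq N$ of $\min\{n-q,k\}$ voters currently above minimum and uses iterated \hyperlink{Bullet:Transferability}{Transferability}, together with compactness of $X$ and continuity of preferences, to reach a policy $y'$ at which every voter in $S$ attains her minimum and every voter outside $S$ (including the AS) is strictly better off than at $y$; the $n-|S|\geq q$ non-targeted voters form the supporting coalition, so $y'$ passes. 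Iterating this construction over $T^{*}=\lceil n/(n-q)\rceil$ rounds drives all $n$ voters to their minimum, after which \hyperlink{Bullet:Scarcity}{Scarcity} forces $u_A(y^{T^{*}})=u_A^{*}$: otherwise some voter would be above minimum or there would be a Pareto improvement, both ruled out by construction.

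Part (b) applies the same template with $|S|=1$ per round, exploiting veto-proofness to supply, for each voter $i_k$, a winning coalition $D_k\in\mathcal{D}$ with $i_k\notin D_k$. In round $k$, the AS uses $D_k$ to push voter $i_k$ to her minimum while strictly benefiting AS and every member of $D_k$. Cycling through an enumeration of $N$ exhausts all $n$ voters in exactly $n$ rounds, and \hyperlink{Bullet:Scarcity}{Scarcity} again forces $u_A^{*}$.

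The main obstacle is formalizing the ``drive to minimum'' step: the \hyperlink{DistributionProblem}{Distribution Problem} definition only guarantees local, one-voter-at-a-time \hyperlink{Bullet:Transferability}{Transferability} without any quantitative bound on how much utility can be transferred in a single application. Bridging this gap requires chaining Transferability applications along a continuous path within $X$ and passing to the limit via compactness; \hyperlink{TI}{Thin Individual Indifference} ensures the limit policy is non-degenerate, and \hyperlink{Bullet:Scarcity}{Scarcity} guarantees that if the limit left some targeted voter above minimum, yet another Transferability application would be available, contradicting that we had passed to a limit. A secondary technicality is pinning down \hyperlink{NonCapricious}{Non-Capricious} tie-breaking when the constructed improvement leaves some non-targeted voters exactly indifferent; this is handled by arbitrarily small strict-preference perturbations (again licensed by \hyperlink{Bullet:Transferability}{Transferability}) that strictly benefit every member of the supporting coalition without altering the ultimate outcome.
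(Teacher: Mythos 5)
There is a genuine gap: your argument establishes (at best) what the agenda setter \emph{can} do, whereas \Cref{Theorem-ThreeRounds} asserts what happens in \emph{every} \hyperlink{NonCapricious}{Non-Capricious} equilibrium. Once you invoke the characterization of equilibrium outcomes for general voting rules (the analog of \Cref{lemma:structural}, i.e.\ \Cref{Lemma-NC-D} in the paper), the object you must analyze is an arbitrary composition of selections $\hat{\phi}_t(\cdot) \in \Phi^\text{or}_\mathcal{D}(\cdot)$, and each $\hat{\phi}_t(x)$ is the agenda setter's \emph{payoff-maximizing} improvement, not a proposal you get to design. Your construction of a policy $y'$ that strips $\min\{n-q,k\}$ voters while making the agenda setter strictly better off only shows that her favorite improvement yields utility at least $u_A(y')$; since \hyperlink{Bullet:Transferability}{Transferability} carries no quantitative bound, $u_A(y')$ may be arbitrarily close to $u_A(y)$, so neither the round counts $\lceil n/(n-q)\rceil$ and $n$ nor the exact payoff $u^*_A$ follow. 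What is missing is the \emph{necessity} direction that the paper proves: every $y \in \Phi^\text{or}_\mathcal{D}(x)$ must be weakly Pareto efficient, must drive every strict loser all the way to her minimum ($L(y\mid x)=\text{supp}(x)\setminus\text{supp}(y)$), must keep minimized voters minimized, and must be supported by a minimal winning coalition (\Cref{Lemma-PhiOR-DP}); each of these is a one-shot contradiction argument using Transferability, Scarcity, and \hyperlink{TI}{Thin Individual Indifference}. A counting argument on supports (\Cref{Lemma-DP-Final}) then shows that \emph{any} such composition strips exactly $\min\{n-q,|\text{supp}|\}$ voters (at least one, under veto-proofness) per round, which is what delivers the sharp round bounds for all equilibria, combined with \Cref{Lemma-DP-Xstar} ($X^*_A$ consists of the weakly Pareto efficient policies with empty support). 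A priori, the agenda setter's favorite improvement could raise her payoff while leaving many voters above their minimum; ruling this out is precisely the content of these claims, and your proposal does not address it.

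Two secondary problems reinforce this. First, your step ``the $n-|S|\geq q$ non-targeted voters form the supporting coalition, so $y'$ passes'' compares the \emph{current} policies $y$ and $y'$, but in non-terminal rounds sophisticated voters compare \emph{continuation} outcomes; acceptance of your constructed proposal is not guaranteed without first reducing the game via the equilibrium characterization, at which point the construction is moot for the reason above. Second, the obstacle you flag as central---chaining Transferability applications along a continuous path and passing to a compactness limit to ``drive voters to their minimum''---is an artifact of the constructive route and is both delicate (nothing in \Cref{defi:DP} guarantees such a chain converges to a policy with the targeted voters exactly at their minima) and unnecessary: on the paper's route, single applications of Transferability inside contradiction arguments suffice, with \hyperlink{TI}{Thin Individual Indifference} needed only to perturb away a single indifference in the proof that minimized voters remain minimized.
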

\Cref{Theorem-ThreeRounds}(a) implies that three rounds suffice for the agenda setter to obtain her favorite policy under any quota rule requiring no more than two-thirds majority. For more demanding quotas, more rounds are required. \Cref{Theorem-ThreeRounds}(b) tells us that $n$ rounds suffice for all veto-proof voting rules (because the agenda setter can appropriate the surplus of at least one voter in each round). 

\Cref{Theorem-DP-Manip,Theorem-ThreeRounds} have two broad implications for legislative bargaining.
First, because any collective choice problem becomes a Distribution Problem when bundled with transfers (as in \Cref{Example-CCPTransfers}), our analysis highlights how the introduction of distributional policy instruments can augment an agenda setter's power. Second, equilibrium outcomes need not maximize total surplus. In \Cref{Example-PorkBarrelPolitics}, the agenda setter secures a policy that includes all projects, which maximizes total benefits, along with transfers that offload all costs onto the voters. Plainly, such policies typically involve excess spending relative to the utilitarian optimum. 

\section{Commitment, Procedures, and Deadlines}\label{Section-Robustness}
This section presents results that either clarify or extend our main findings. \Cref{Subsection-Precommitment} explains how commitments to dynamic or fixed agendas change the attainable outcomes. \Cref{Section-OtherProcedures} considers other legislative protocols and voting rules. \Cref{Section-Deadlines} clarifies the role of a finite horizon and highlights the resulting deadline effect. 
\subsection{The Commitment Benchmark}\label{Subsection-Precommitment}
We have studied agenda control without commitment. If the agenda setter could commit to a dynamic strategy, she would do weakly better. \Cref{Theorem-GenericFiniteAlternatives,Theorem-epsilongrid,Theorem-Capricious} imply that if the collective choice problem is manipulable, the agenda setter gains little or nothing from commitment. In this section, we make the same comparison without imposing manipulability. This exercise shows how sequential rationality constrains the agenda setter in non-manipulable problems, and also connects our work to prior research that assumes she can make commitments.  

For simplicity, we restrict attention to \hyperlink{GFA}{Generic Finite Alternatives}, and assume that $T\geq |X|-1$. For any positive integer $K$ and policies $x$ and $y$, we say that \emph{$y$ is $K$-reachable from $x$} if there is a sequence of policies $\left\{a^k \right\}_{k =0}^K$ such that (i) $y = a^K$ and $x = a^0$ and (ii) $a^{k} \prefto_M a^{k-1}$ for all $k \in \left\{1, \dots , K \right\}$. We say that \emph{$y$ is reachable from $x$} if it is $K$-reachable from $x$ for some $K$. For settings with commitments to dynamic agendas, we have:
\begin{fact}\label{Theorem-Commitment}
If the agenda setter can commit to any strategy, then she obtains her favorite policy among those reachable from $x^0$.
\end{fact}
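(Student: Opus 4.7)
\emph{Proof Plan.} The plan is to prove the result via matching upper and lower bounds. Let $\hat{y}$ denote the agenda setter's favorite policy among those reachable from $x^0$, and fix a majority chain $a^0 = x^0, a^1, \ldots, a^K = \hat{y}$ witnessing this reachability; after pruning repeats, $K \leq |X|-1 \leq T$. For the upper bound, I would have the agenda setter commit to the following \emph{chain-following} strategy: track an internal counter $k$ equal to the number of previously accepted advances along the chain; at any history, propose $a^{k+1}$ if $k < K$ and propose the current default if $k = K$. To show this implements $\hat{y}$, I plan to compute by backward induction the continuation outcome $V_t(k)$ at the start of round $t$ with counter $k$, establishing $V_t(k) = a^{\min(k + T - t + 1,\, K)}$. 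The inductive step is routine: the round-$t$ proposal $a^{k+1}$ pits acceptance continuation $V_{t+1}(k+1)$ against rejection continuation $V_{t+1}(k)$, which differ by exactly one chain index whenever both remain below the cap $K$, so the strict majority preference $a^{j+1} \succ_M a^{j}$ (which \hyperlink{GFA}{Generic Finite Alternatives} makes strict for any two distinct chain elements) forces acceptance at every stage. Evaluating at $t = 1$ and $k = 0$ yields $V_1(0) = a^K = \hat{y}$.

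For the lower bound, I would argue by induction on the horizon $T$ that, under any commitment strategy $\sigma_A$, the equilibrium outcome $y^{\sigma}(x, T)$ starting from default $x$ with $T$ rounds remaining is always reachable from $x$. The base case $T = 0$ is immediate because the outcome is $x$. For the inductive step, suppose $\sigma_A$ specifies proposal $a$ in the first round; acceptance leads to $y^{\sigma}(a, T-1)$ and rejection to $y^{\sigma}(x, T-1)$. If voters reject, the outcome is $y^{\sigma}(x, T-1)$, reachable from $x$ by the inductive hypothesis. If voters accept, as-if-pivotal voting together with \hyperlink{GFA}{Generic Finite Alternatives} forces $y^{\sigma}(a, T-1) \succ_M y^{\sigma}(x, T-1)$; appending this new majority edge to the chain witnessing reachability of $y^{\sigma}(x, T-1)$ from $x$ produces a chain witnessing reachability of $y^{\sigma}(a, T-1)$ from $x$. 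Combining the bounds, the agenda setter's commitment outcome is exactly her favorite reachable policy.

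The main subtlety I anticipate is in the upper bound. A naive strategy such as ``in round $t$, propose $a^t$'' or ``propose the next chain element given the current default'' can fail, because a single rejection may leave the default at an interior chain node while later rounds drive the remaining proposals past it, and the majority comparison between $\hat{y}$ (or later chain elements) and that node need not run in the intended direction since reachability is not transitive along the chain. The counter-based commitment sidesteps this by making future play depend on the record of past advances rather than on the current default alone, ensuring that the pivotal comparison at every on- and off-path decision is always between two adjacent links of the chain, where the majority preference is known by construction to run in the required direction.
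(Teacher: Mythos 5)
Your proposal is correct and follows essentially the paper's own two-step argument: the paper's commitment strategy is precisely the default-conditioned rule ``if the default is $a^k$, propose $a^{\min\{K,k+1\}}$'' (which coincides with your counter-based strategy at every history reachable under it, since the default there is always a chain element), and its converse argument likewise builds a majority chain out of the reject-path continuation outcomes using as-if-pivotal voting, exactly as your induction on the horizon does. The only inaccuracy is your closing caveat: the default-conditioned rule does not suffer the failure you describe---that problem afflicts only a round-indexed (fixed-agenda) rule such as ``in round $t$, propose $a^t$''---so the subtlety your counter is designed to sidestep is already absent from the paper's construction.
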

We omit a formal proof but sketch the logic. Suppose that from an initial default option $x^0$, the agenda setter's favorite reachable policy is $y$, and that $\{a^k\}_{k=0}^K$ is the proposal sequence that reaches it. The agenda setter can obtain $y$ by committing to any $K$-round strategy with the following property: if the default option in some round is $a^k$, she proposes $a^{\min\{K,k+1\}}$. Clearly, voters necessarily approve the final proposal; approval of every prior proposal follows recursively.\footnote{To verify that the agenda setter can attain \emph{only} policies that are reachable from $x^0$, fix an arbitrary (pure) strategy for the agenda setter and let $f^t$ denote the equilibrium continuation outcome in round $t$ \emph{if the initial default $x^0$ has not yet been amended prior to round $t$}. By construction, $f^1$ is the equilibrium outcome under this strategy and, because voters are sequentially rational, we must have $f^1 \prefto_M f^2 \prefto_M \cdots \prefto_M f^T \prefto_M x^0$. It follows that the outcome $y:= f^1$ is reachable from $x^0$.}

\Cref{Theorem-Commitment} is familiar from the literature on \emph{binary voting trees}---in other words, multi-stage voting games in which (majorities of) voters decide to move ``left'' or ``right'' in each round, where the resulting path determines the final policy.\footnote{See \citet[Ch. 4]{austen2005positive} for a formal definition, as well as a survey of the classic literature \citep[e.g.][]{black1958theory,farquharsion-book,miller1977graph}.} This connection is not coincidental: an agenda-setter strategy with default $x^0$ induces a binary voting tree for which $x^0$ is feasible, and conversely, every binary voting tree for which $x^0$ is feasible is outcome-equivalent to some agenda-setter strategy with $x^0$ as the initial default.

Next, we consider commitments to fixed agendas as in \cite{shepsle1984uncovered}. Such agendas are equivalent to \emph{default-independent} strategies that prescribe the same proposal for a given round regardless of how the game unfolds. We restate their main result as follows:
\begin{fact}\label{Claim-SW}
If the agenda setter can commit but is restricted to default-independent strategies, she obtains her favorite policy among those that are $2$-reachable from $x^0$.  
\end{fact}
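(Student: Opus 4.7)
The plan is to characterize the set of policies the agenda setter can induce as equilibrium outcomes when she commits to a fixed (default-independent) strategy, and to show that this set coincides with the set of policies $2$-reachable from $x^0$. The Fact then follows by optimizing over this set.

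For the \emph{achievability} direction, I will give an explicit construction showing that every $2$-reachable $y$ is achievable by some fixed agenda. If $y = x^0$, any agenda suffices; if $y \succ_M x^0$, the constant agenda $a^t = y$ for all $t$ yields $y$ by backward induction on the sophisticated equivalents $V^t(d)$ (defined by $V^{T+1}(d) = d$ and $V^t(d)$ equal to the majority winner between $V^{t+1}(a^t)$ and $V^{t+1}(d)$). The substantive case is $y \succ_M z \succ_M x^0$ with $y \not\succ_M x^0$ (and $z \notin \{x^0, y\}$). Use the agenda $a^t = y$ for $t < T$ and $a^T = z$. Computing backward: $V^T(y) = y$ (since $y \succ_M z$) and $V^T(x^0) = z$ (since $z \succ_M x^0$); then at round $T-1$ with default $x^0$, voters compare $V^T(y) = y$ against $V^T(x^0) = z$ and accept $y$ because $y \succ_M z$, yielding $V^{T-1}(x^0) = y$. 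Iterating, $V^t(x^0) = y$ for all $t \le T-1$, so the final outcome is $y$.

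For the \emph{necessity} direction---that any outcome achievable by a fixed agenda is $2$-reachable from $x^0$---I will invoke \citet{shepsle1984uncovered}'s classical theorem that the sophisticated outcome of a fixed-agenda amendment procedure is uncovered by the initial default. Under \hyperlink{GFA}{Generic Finite Alternatives}, uncoveredness coincides with $2$-reachability: $y$ is uncovered by $x^0$ if and only if $y = x^0$, $y \succ_M x^0$, or there exists $z$ with $y \succ_M z \succ_M x^0$. Combining the two directions, the achievable set is precisely the $2$-reachable set, so the setter's optimum under commitment to a fixed agenda is her favorite policy in that set.

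The main obstacle is the necessity direction. The difficulty is that the majority relation can be intransitive, so the chain $y = V^1(x^0) \succeq_M V^2(x^0) \succeq_M \cdots \succeq_M V^{T+1}(x^0) = x^0$ produced by naive backward induction does not automatically collapse to the requisite $2$-step majority chain. Shepsle and Weingast resolve this via a careful analysis of the sophisticated-equivalent structure of fixed-agenda amendment procedures, exploiting the specific form in which the ``no'' branch preserves the prevailing default while the ``yes'' branch installs the current proposal; reproducing this argument (or citing it directly) is the nontrivial core of the proof.
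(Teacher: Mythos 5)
Your proposal is correct and uses the same two-part decomposition as the paper (every $2$-reachable policy is achievable by some fixed agenda, and only $2$-reachable policies are achievable), but the two halves are executed differently. For achievability, the paper's footnote commits to proposing $y$ in round $1$ and the intermediate policy $a^1$ in every later round; your agenda (propose $y$ in rounds $1,\dots,T-1$ and $z$ in round $T$) is an equally valid construction, and your backward-induction check goes through (it needs $T\geq 2$, which is guaranteed since this section assumes $T\geq |X|-1$). For necessity, the paper does \emph{not} defer to \citet{shepsle1984uncovered}: it gives a short self-contained argument---letting $f^t$ denote the continuation outcome if the round-$t$ proposal passes, it notes that $y=f^\tau$ for some $\tau$ and $y\succ_M f^t$ for all $f^t\neq y$, and then shows that if $y$ were not $2$-reachable one would need $x^0\succ_M y$ and $x^0\succ_M f^t$ for every $f^t\neq x^0$, in which case no proposal would ever pass and the outcome would be $x^0\neq y$, a contradiction. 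Your route---citing the covering theorem and observing that under Generic Finite Alternatives (where the majority relation is a tournament) uncoveredness by $x^0$ coincides with $2$-reachability---is legitimate, since the paper itself presents the Fact as a restatement of that result and records the same equivalence; what the paper's direct argument buys is self-containedness, avoiding the need to match the classical fixed-agenda framework to the ``default-independent strategy'' formulation used here, whereas your version leaves that core step as a citation. One small slip: when $y=x^0$ it is not true that ``any agenda suffices'' (an agenda proposing some $w\succ_M x^0$ would yield $w$); what you need, and what suffices, is a particular agenda such as the constant one $a^t=x^0$.
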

The agenda setter generally does worse with commitments to fixed agendas than with commitments to dynamic agendas because her fixed-agenda options are more limited. The notion of $2$-reachability is equivalent to \citeauthor{shepsle1984uncovered}'s concept of being \emph{uncovered} by the initial default, and consequently the logic of Fact \ref{Claim-SW} is familiar.\footnote{Suppose $y$ is $2$-reachable from $x^0$ via a sequence $\{a^k\}_{k=0}^{2}$. Then the agenda setter can achieve $y$ by committing to proposing $y$ in the first round and $a^1$ in every subsequent round. Now suppose the fixed agenda $(a^1,\ldots,a^T)$ achieves $y$. We claim that $y$ is $2$-reachable. Let $f^t$ be the equilibrium continuation outcome if $a^t$ passes in round $t$. Because $y$ is the eventual outcome, we must have $y=f^\tau$ for some $\tau$, and for all $f^t\neq y$, $y\succ_M f^t$. Were $y$ not $2$-reachable from $x^0$, then it would have to be the case that $x^0\succ_M y$ (otherwise the sequence $\{x_0,y,y\}$ would reach $y$) and for all $f^t\neq x^0$, $x^0\succ_M f^t$ (otherwise the sequence $\{x_0,f^{t'},y\}$ would reach $y$ for some $f^{t'} \succ_M x^0$). But then none of the proposals would pass, a contradiction.}

We now compare real-time agenda setting to these benchmarks. We say that $y$ is \emph{credibly reachable} from $x$ if there is a sequence $\{a^k\}_{k=0}^K$ running from $x$ to $y$ such that $a^k = \phi( a^{k-1})$, where $\phi$ is the agenda setter's favorite improvement (defined on p.  \pageref{Equation-FavoriteImprovement}). In other words, each proposal in the chain that reaches $y$ from $x$ is the agenda setter's favorite among policies that are majority-preferred to proposal's predecessor.  \Cref{lemma:structural}(a) implies:
\begin{fact}\label{Claim-CrediblyReachable}
If the agenda setter cannot commit, then she obtains her favorite policy among those credibly reachable from $x^0$.
\end{fact}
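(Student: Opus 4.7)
The plan is to derive Fact \ref{Claim-CrediblyReachable} almost immediately from \Cref{lemma:structural}, by identifying the set of credibly reachable policies with the forward orbit of $x^0$ under the favorite-improvement map $\phi$.

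First, I would observe that, unpacking the definition, a policy $y$ is credibly reachable from $x^0$ if and only if $y = \phi^k(x^0)$ for some $k \geq 0$. Indeed, a credible chain $\{a^k\}_{k=0}^K$ with $a^0 = x^0$ and $a^k = \phi(a^{k-1})$ is, by induction, just the first $K+1$ terms of the sequence $\{\phi^k(x^0)\}_{k \geq 0}$. So the set of credibly reachable policies from $x^0$ is exactly $R(x^0) := \{\phi^k(x^0) : k \geq 0\}$.

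Second, I would show that $R(x^0)$ is a totally ordered chain in $\prefto_A$, with $\phi^{k+1}(x^0) \prefto_A \phi^k(x^0)$ for all $k$. This uses only the definition of $\phi$: since $x \in M(x)$, we have $\phi(x) = \argmax_{y \in M(x)} u_A(y) \prefto_A x$. By induction the sequence $\{u_A(\phi^k(x^0))\}_{k \geq 0}$ is weakly increasing. Because $X$ is finite, this sequence must stabilize, and indeed, as noted in the excerpt just after \eqref{Equation-FavoriteImprovement}, for $T \geq |X|-1$ the iterate $\phi^T(x^0)$ is a fixed point of $\phi$. Hence $\phi^T(x^0)$ attains $\max_{y \in R(x^0)} u_A(y)$ and is therefore the agenda setter's favorite credibly reachable policy from $x^0$.

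Third, to finish, I would invoke \Cref{lemma:structural}, which states that for any $T$-round game with initial default $x^0$, the equilibrium outcome correspondence satisfies $f_T(x^0) = \{\phi^T(x^0)\}$. Combining with the previous step, the unique equilibrium outcome coincides with the agenda setter's favorite credibly reachable policy from $x^0$, which is exactly what Fact \ref{Claim-CrediblyReachable} asserts.

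There is no real obstacle here: the content of Fact \ref{Claim-CrediblyReachable} is essentially a restatement of \Cref{lemma:structural}(a) in the language of credible chains, so the only substantive observation to verify is the monotonicity of the $\phi$-orbit in $\prefto_A$, which follows from $x \in M(x)$. The one subtlety worth flagging in a full write-up is ensuring that $T \geq |X|-1$ (as assumed in this subsection) so that $\phi^T(x^0)$ has actually reached the terminal, $u_A$-maximal element of the orbit rather than an intermediate one.
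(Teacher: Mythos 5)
Your proof is correct and follows essentially the same route as the paper, which simply asserts that Fact \ref{Claim-CrediblyReachable} is implied by \Cref{lemma:structural}(a) under the standing assumptions of that subsection (\hyperlink{GFA}{Generic Finite Alternatives} and $T\geq|X|-1$). Your write-up just makes explicit the two observations the paper leaves implicit — that the credibly reachable set is the $\phi$-orbit of $x^0$ and that $u_A$ is monotone along that orbit, so its maximum is the fixed point $\phi^T(x^0)$ — which is exactly the intended argument.
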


\Cref{Theorem-GenericFiniteAlternatives} establishes that commitment has no value if the collective choice problem is manipulable. In that case, the agenda setter's \emph{favorite} policy is not only reachable, but also credibly reachable, from all $x^0$.\footnote{Manipulability allows the agenda setter to obtain her favorite policy only if the number of rounds is sufficiently large. With a small number of rounds, the agenda setter could potentially benefit from commitment because it allows her to exploit a larger class of majority-preference chains.} However, if the collective choice problem is not manipulable, the agenda setter may do strictly better with commitment, even to a default-independent strategy, as seen in the second example of \Cref{Section-Examples}. Facts 1-3 imply only that commitment to general strategies weakly outperforms both alternative protocols. All weak or strict rankings over these three modes of commitment are feasible as long as they are compatible with this implication.

\subsection{General Legislative Procedures}\label{Section-OtherProcedures}

Legislatures sometimes use alternatives to the amendment procedure studied in previous sections. The best known alternative is the \emph{successive procedure} (also  \emph{closed-rule bargaining}): all proposals include \emph{adjournment provisions} specifying that their acceptance ends deliberation. Another is \emph{open-rule bargaining}: in any round, the agenda setter can ``move'' the prevailing default; if the motion passes, the legislature adjourns.\footnote{The literature on legislative bargaining has focused on the closed- and open-rule procedures since \citet{baron1989bargaining}, while the literature on agenda setting with fixed agendas has largely focused on the amendment (or Anglo-Saxon) and successive (or Euro-Latin) procedures since \citet{black1958theory}, \citet{farquharsion-book}, and \cite{miller1977graph}. While the literature models the closed-rule and successive procedures differently, they are essentially equivalent in that, under both procedures, the first accepted proposal is implemented.}  Legislatures also differ with respect to voting rules (e.g., majority versus supermajority requirements). 

This section analyzes the implications of real-time agenda control for these alternative procedures. We develop a general framework that allows for an arbitrary voting rule and a general \emph{adjournment protocol}, including as special cases our baseline framework and both alternatives mentioned above. We obtain the following result: \emph{for every preference profile and voting rule, (essentially) all adjournment protocols result in the same equilibrium outcome.} In other words, real-time agenda control nullifies the effect of adjournment provisions, rendering the distinction between these various protocols moot. 

We extend the framework of \Cref{Section-Model} as follows. The definition of a collective choice problem $\CCP$ is unchanged, except we allow for an even number of voters, $n$. For simplicity, we focus on settings with \hyperlink{GFA}{Generic Finite Alternatives}. Policy selection takes place over finitely many rounds $t=1, \dots, T$. The agenda setter (resp., voters) has exclusive proposal (resp., approval) power. Here we allow for a wider class of voting rules and adjournment protocols, which we call \emph{generalized amendment procedures}: 
\begin{enumerate}[label={\normalfont (\alph*)}]
    \item The \emph{voting rule} is defined (as in \Cref{Subsection-Distributive}) by a collection $\mathcal{D} \subseteq 2^N$ of winning coalitions. A proposal passes if and only if all voters in some coalition $D \in \mathcal{D}$ approve it. We impose no structure on  $\mathcal{D}$. 
  
    \item The \emph{adjournment protocol} is defined as follows. In round $t$, the agenda setter can propose an alternative $\hat{a}^t=(a^t,i)\in \policyspace\times\{0,1\}$, where $a^t$ denotes the policy to supersede the prevailing default $x^{t-1}$, and $i$ denotes the presence or absence of an \emph{adjournment provision}. If $i=0$, passage makes policy $a^t$ the default in round $t+1$, as in our baseline model. If $i =1$, passage ends deliberation and results in the implementation of $a^t$. In either case, rejection means that $x^{t-1}$ remains the default in round $t+1$. We allow for the possibility that deliberation changes the set of feasible proposals: for a generic history $h$, the agenda setter can propose an element of $X(h)\subseteq X\times \{0,1\}$.
\end{enumerate}

A generalized amendment procedure is \emph{rich} if, at every history $h$, either $X(h) \subseteq X \times \{0\}$ or $X(h) \subseteq X \times \{1\}$ (or both). In other words, richness rules out protocols where some policy $x$ is available only without an adjournment provision, while some other policy $y$ is only available with one. 
Our baseline model and the other procedures mentioned above are rich generalized amendment procedures.\footnote{For the amendment procedure, $X(\cdot) = X \times\{0\}$. For the successive/closed-rule procedure,  $X(\cdot) = X \times\{1\}$. For the open-rule procedure, $X(h) = \left[ X\times \{0\}\right] \cup \left[\{x(h)\} \times \{1\}\right]$, where $x(h)$ denotes the prevailing default at history $h$. Note that the open-rule procedure involves history-dependent feasible sets.}  

We show that, for any fixed preference profile, all rich generalized amendment procedures with the same voting rule yield equivalent equilibrium outcomes. To state the formal result, we extend our notion of improvability and favorite improvements to arbitrary voting rules. First, given any policy $x$, we define the set of policies that some winning coalition prefers to $x$: 
\begin{align*}
    M_{\mathcal D}(x)\equiv\{y\in X:y=x\text{ or }\exists D\in \mathcal D\text{ such that for every } i\in D,\, y\succ_i x\}.
\end{align*}
A policy $x$ is $\mathcal{D}$-improvable if there exists a policy $y \in M_{\mathcal D}(x)$ such that $y\succ_A x$; otherwise, policy $x$ is $\mathcal D$-unimprovable. 
Let $\phi_\mathcal{D} : X \to X$ denote the agenda setter's \emph{favorite $\mathcal{D}$-improvement}: 
\begin{align*}
    \{\phi_\mathcal{D}(x)\}\equiv \argmax_{y\in M_{\mathcal D}(x)} u_A(y).
\end{align*}
The set of $\mathcal{D}$-unimprovable policies is $E_\mathcal{D} := \left\{ x \in X: x = \phi_\mathcal{D}(x) \right\}$. Using this notation, we state our \emph{protocol-equivalence} result:

\begin{theorem}\label{Theorem-proceduralequivalence}
Suppose the collective choice problem $\mathcal C$ satisfies \hyperlink{GFA}{Generic Finite Alternatives} and the generalized amendment procedure is rich. For any game with $T$ rounds and initial default policy $x^0$, the unique equilibrium outcome is $\phi_\mathcal{D}^T(x^0)$. Consequently, for $T\geq|X|-1$, a policy is an equilibrium outcome if and only if it is an element of $E_\mathcal{D}$.
\end{theorem}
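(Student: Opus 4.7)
The plan is to adapt the backward-induction argument used in the proof of \Cref{lemma:structural}, with $\phi_{\mathcal{D}}$, $M_{\mathcal{D}}$, and $E_{\mathcal{D}}$ replacing their simple-majority analogues. Concretely, I would prove by induction on the number $k$ of rounds remaining that, at every history with prevailing default $x$, the unique equilibrium continuation outcome is $\phi_{\mathcal{D}}^{k}(x)$. Richness is what allows the same conclusion to hold irrespective of whether the current history admits proposals with or without an adjournment provision, which is precisely the ``protocol-equivalence'' content of the theorem.

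For the base case ($k=1$), any passing proposal $a$ ends the game implementing $a$---either because the final-round default becomes $a$ or because adjournment triggers immediate implementation---while a rejected proposal leaves $x$ as the implemented policy. As-if-pivotal voters therefore compare $a$ directly against $x$, so the passable proposals are exactly the elements of $M_{\mathcal{D}}(x)$, and \hyperlink{GFA}{Generic Finite Alternatives} singles out $\phi_{\mathcal{D}}(x)$ as the agenda setter's unique optimum. For the inductive step, assume the claim holds with $k-1$ rounds remaining (so the continuation from any default $y$ with $k-1$ rounds left is $\phi_{\mathcal{D}}^{k-1}(y)$). With $k$ rounds remaining and default $x$: a non-adjournment proposal $a$, if passed, yields continuation $\phi_{\mathcal{D}}^{k-1}(a)$; an adjournment proposal $a$, if passed, yields $a$ immediately; and a rejected proposal of either type preserves $x$, yielding $\phi_{\mathcal{D}}^{k-1}(x)$. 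As-if-pivotal voters therefore approve a non-adjournment $a$ iff $\phi_{\mathcal{D}}^{k-1}(a)\in M_{\mathcal{D}}(\phi_{\mathcal{D}}^{k-1}(x))$ and an adjournment $a$ iff $a\in M_{\mathcal{D}}(\phi_{\mathcal{D}}^{k-1}(x))$, so in both cases the set of attainable continuation outcomes is $M_{\mathcal{D}}(\phi_{\mathcal{D}}^{k-1}(x))$, whose $u_A$-maximizer is by definition $\phi_{\mathcal{D}}^{k}(x)$. Uniqueness then follows because the agenda setter strictly prefers $\phi_{\mathcal{D}}^{k}(x)$ to every other element of this set, and she can in fact achieve it by proposing $\phi_{\mathcal{D}}(x)$ without adjournment (noting $\phi_{\mathcal{D}}^{k-1}(\phi_{\mathcal{D}}(x))=\phi_{\mathcal{D}}^{k}(x)$) or $\phi_{\mathcal{D}}^{k}(x)$ with adjournment, with richness guaranteeing that at least one option is available at the relevant history.

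The ``consequently'' claim then follows because the sequence $x^0,\phi_{\mathcal{D}}(x^0),\phi_{\mathcal{D}}^{2}(x^0),\dots$ strictly raises $u_A$ at each non-fixed step and must therefore reach a fixed point of $\phi_{\mathcal{D}}$ (i.e., an element of $E_{\mathcal{D}}$) within $|X|-1$ iterations; conversely, every $y\in E_{\mathcal{D}}$ is trivially the equilibrium outcome when $x^0=y$. I expect the main subtlety to lie in pinning down exactly why richness suffices for adjournment and non-adjournment proposals to produce the same menu of attainable continuations: a pathological feasible set that made some policies available only with adjournment and others only without could in principle obstruct the agenda setter's ability to reach $\phi_{\mathcal{D}}^{k}(x)$, and it is precisely such configurations that richness rules out.
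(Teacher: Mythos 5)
Your proof is correct and takes essentially the same route as the paper, which likewise establishes the result by rerunning the backward induction of \Cref{lemma:structural} with $\phi_\mathcal{D}$ in place of $\phi$, invoking richness only to ensure that at every history the agenda setter can propose either $\phi_\mathcal{D}(x)$ without an adjournment provision or the eventual outcome $\phi_\mathcal{D}^{k}(x)$ with one, both of which lead to $\phi_\mathcal{D}^{k}(x)$. (One cosmetic caveat: when only non-adjournment proposals are available, the attainable continuation outcomes form a subset of $M_{\mathcal D}(\phi_\mathcal{D}^{k-1}(x))$ rather than the whole set, but since your argument separately verifies both this inclusion and the attainability of the $u_A$-maximizer $\phi_\mathcal{D}^{k}(x)$, nothing is affected.)
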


Thus, with real-time agenda setting, equilibrium outcomes do not depend on the adjournment protocol. This result has two noteworthy implications. 
First, as long as the collective choice problem is $\mathcal{D}$-manipulable (as defined in \Cref{Subsection-Distributive}), the agenda setter is effectively a dictator regardless of the adjournment protocol. Formally, \Cref{Theorem-proceduralequivalence} implies:
\begin{corollary}\label{Corollary-GeneralizedAmendment}
Suppose the collective choice problem $\mathcal C$ satisfies \hyperlink{GFA}{Generic Finite Alternatives} and the generalized amendment procedure is rich. For any game with at least $|\policyspace|-1$ rounds, the agenda setter obtains her favorite policy in every equilibrium regardless of the initial default if and only if $\CCP$ is $\mathcal{D}$-\hyperlink{Definition:Manipulable}{Manipulable}.  
\end{corollary}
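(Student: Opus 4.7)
The plan is to derive \Cref{Corollary-GeneralizedAmendment} directly from the protocol-equivalence result, \Cref{Theorem-proceduralequivalence}, which already identifies the unique equilibrium outcome of any rich generalized amendment procedure with $\phi_\mathcal{D}^T(x^0)$ and states that, for $T\geq |X|-1$, the set of equilibrium outcomes across initial defaults coincides with $E_\mathcal{D}$. Under \hyperlink{GFA}{Generic Finite Alternatives}, the agenda setter's antisymmetric preference picks out a unique favorite policy, so $X_A^* = \{x_A^*\}$ is a singleton. The corollary then reduces to the purely preference-theoretic claim that $E_\mathcal{D} = X_A^*$ if and only if $\mathcal{C}$ is $\mathcal{D}$-\hyperlink{Definition:Manipulable}{Manipulable}. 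This is the core of the proof.

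For the ``if'' direction, I would first note that $X_A^* \subseteq E_\mathcal{D}$ always holds: no policy is strictly preferred by the agenda setter to $x_A^*$, so $x_A^*$ is vacuously $\mathcal{D}$-unimprovable. The content of $\mathcal{D}$-\hyperlink{Definition:Manipulable}{Manipulability} is the reverse inclusion: by definition, for every $x \notin X_A^*$ there is a policy $y$ and a winning coalition $D \in \mathcal{D}$ with $y \succ_A x$ and $y \succ_i x$ for all $i \in D$, so $y \in M_\mathcal{D}(x)$ and $y\succ_A x$, witnessing that $x$ is $\mathcal{D}$-improvable and hence $x \notin E_\mathcal{D}$. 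Combining the inclusions gives $E_\mathcal{D} = \{x_A^*\}$, and then \Cref{Theorem-proceduralequivalence} implies that for every initial default $x^0$ and every $T \geq |X|-1$, the unique equilibrium outcome is $\phi_\mathcal{D}^T(x^0) \in E_\mathcal{D} = \{x_A^*\}$.

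For the ``only if'' direction, I would argue the contrapositive. If $\mathcal{C}$ is not $\mathcal{D}$-\hyperlink{Definition:Manipulable}{Manipulable}, then there exists some $x^\dagger \notin X_A^*$ that is $\mathcal{D}$-unimprovable, so $x^\dagger \in E_\mathcal{D}$ and $\phi_\mathcal{D}(x^\dagger) = x^\dagger$. Taking the initial default $x^0 = x^\dagger$ yields $\phi_\mathcal{D}^T(x^0) = x^\dagger \neq x_A^*$ for every $T$, and \Cref{Theorem-proceduralequivalence} implies the unique equilibrium outcome is $x^\dagger$, contradicting the claim that the agenda setter obtains her favorite policy regardless of the initial default.

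The substantive work has already been done upstream in \Cref{Theorem-proceduralequivalence}, which handled the subtleties of adjournment protocols and richness; here the only remaining step is a clean translation between the fixed-point characterization $E_\mathcal{D}$ and the preference-theoretic condition of $\mathcal{D}$-\hyperlink{Definition:Manipulable}{Manipulability}. The only place where one must be careful is in verifying that the $\mathcal{D}$-improvement witness $y$ (which is merely majority-preferred by a coalition $D$) automatically lies in $M_\mathcal{D}(x)$ as defined, and that, conversely, any fixed point of $\phi_\mathcal{D}$ outside $X_A^*$ supplies an instance violating $\mathcal{D}$-\hyperlink{Definition:Manipulable}{Manipulability}; both follow immediately from unpacking the definitions.
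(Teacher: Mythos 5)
Your proof is correct and follows essentially the same route as the paper, which presents the corollary as an immediate consequence of \Cref{Theorem-proceduralequivalence} combined with the observation that $E_\mathcal{D}=X_A^*$ if and only if $\CCP$ is $\mathcal{D}$-\hyperlink{Definition:Manipulable}{Manipulable}. Your unpacking of both inclusions (and the contrapositive via an unimprovable initial default) is exactly the intended argument.
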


Second, \Cref{Theorem-proceduralequivalence} contrasts with known results on fixed agendas. In that context, the agenda setter's power depends on the adjournment protocol.\footnote{This theme emerges in \cite{farquharsion-book}, \cite{miller1977graph}, and \cite{mckelvey1978multistage}; see Chapter $4$ of \cite{austen2005positive} for a survey. More recent work includes \cite{apesteguia2014foundation} and \cite{barbera2017sequential}.} Specifically, commitment to a fixed successive (or closed-rule) agenda allows her to obtain her favorite policy among those reachable from the initial default,\footnote{This observation essentially restates \Cref{Theorem-Commitment} for binary voting trees. The logic is as follows: if policy $y$ is reachable from default $x$ through the sequence $\{a^k\}_{k=0}^K$, the agenda setter can obtain $y$ in $T=K$ rounds through the fixed agenda where the first proposal is $a^K$, the second is $a^{K-1}$, and so on, and each proposal includes an adjournment provision.} whereas commitment to a fixed amendment agenda only allows her to obtain her favorite $2$-reachable policy \citep{shepsle1984uncovered}. \Cref{Theorem-proceduralequivalence} shows that this distinction  disappears when the agenda setter selects proposals in real time without commitment. 

To prove \Cref{Theorem-proceduralequivalence}, we adjust the argument for \Cref{lemma:structural} to account for adjournment provisions; we omit a formal proof but describe the adjustment. Richness of the generalized amendment procedure guarantees that, at every history, if the default option is $x$, the agenda setter can propose at least one of the following: (i) her favorite $\mathcal{D}$-improvement $\phi_\mathcal{D}(x)$ without an adjournment provision, or (ii) the ``eventual outcome'' $\phi^T_\mathcal{D}(x)$ with an adjournment provision. These options yield the same outcome in any one-round subgame ($T=1$). Therefore, by the backward-induction logic of \Cref{lemma:structural}, both proposals lead to the outcome $\phi^T_\mathcal{D}(x)$ regardless of how many rounds, $T \in \mathbb{N}$, remain. \Cref{Theorem-proceduralequivalence} then follows from our observation that $\bigcup_{x^0 \in X}\phi^T_\mathcal{D}(x^0)  = E_\mathcal{D}$ for $T\geq |X|-1$.\footnote{As the following example illustrates, richness plays an additional role in the proof of \Cref{Theorem-proceduralequivalence}. Suppose the options are $\{w,x,y,z\}$, and that preferences are as depicted in \cref{Figure-IntroFiniteExample} of \cref{Section-Examples}. With our baseline amendment procedure, the agenda setter obtains her favorite policy $w$ in every equilibrium provided there are at least three rounds, regardless of the initial default. Now consider the non-rich generalized amendment procedure with simple majority rule for which, at every history, the agenda setter can propose policies $w,y,z$ only with adjournment provisions and policy $x$ only without one. As the reader may verify, starting from initial default $x^0 = z$, regardless of the number of rounds, $(y,1)$ is the outcome (the agenda setter proposes it and it passes), contrary to \Cref{Theorem-proceduralequivalence}. The logic of the proof fails because $y = \phi(z)$ is not available without adjournment and $x = \phi^2(z)$ is not available with adjournment.} 

\subsection{The Role of Deadlines}\label{Section-Deadlines}
We think of our framework as representing negotiations, starting at date 0, over the policy that will prevail at date $\tau$.\footnote{The legislature presumably undertakes many such negotiations (one for each future date) in parallel.} Negotiations obviously cannot continue past the implementation date. Given the inherent frictions arising either from institutional constraints or simply from speed-of-light latency considerations, we treat each round of bargaining as requiring at least $\Delta>0$ units of time. Consequently, there can be at most $T = \lfloor \tau/\Delta \rfloor$ rounds of deliberation. Hence we follow the prior literature on agenda setting by modeling finite-round processes.\footnote{The classical literature on agenda setting discussed in \Cref{Subsection-Precommitment} studies fixed agendas with a finite sequence of proposals or binary voting trees of finite depth.} The finite deadline effectively provides the agenda setter with a bit of commitment power: the process ends with a take-it-or-leave-it offer. In this section, we investigate the role of this deadline.

In contrast to our analysis, \cite{diermeier2011legislative} and \cite{anesi2014bargaining} model agenda control with an infinite horizon. For concreteness, we focus on the latter analysis, which differs from ours in one key respect: there is no exogenous terminal round $T$. Instead, bargaining endogenously terminates only when the agenda setter either (a) proposes the prevailing default option or (b) makes a proposal that is rejected. Payoffs are undiscounted and determined by the policy implemented at termination; a non-terminating path is the worst outcome for all players.\footnote{Instead of adopting this termination rule, \cite{diermeier2011legislative} study the patient limit of an infinite-horizon model in which players maximize the discounted payoff from the infinite sequence of equilibrium default policies. We group them and \cite{anesi2014bargaining} together as their models have identical equilibrium outcomes in settings with \hyperlink{GFA}{Generic Finite Alternatives}.\label{fn:AS-DF-models}} The solution concept is pure strategy Markov perfect equilibrium with as-if pivotal voting (henceforth MPE), meaning in this context that strategies are stationary and condition only on the prevailing default.

To illustrate the implications of this protocol, we revisit our introductory example from \Cref{Section-Examples} (see \Cref{Figure-IntroFiniteExample}). We showed in that section that, assuming a finite horizon, the agenda setter obtains her favorite policy ($w$) starting from any initial default if there are three or more rounds. In contrast, with the infinite-horizon protocol (and its termination rule), the agenda setter can do no better than $y$ when starting from an initial default of $z$ or $y$. We sketch the logic by considering each default option.

\begin{itemize}
    \item \textbf{Default option of $w$}: \emph{In every MPE, the agenda setter proposes $w$ and the game ends.} Any other outcome (a different policy or a non-terminating cycle) is worse for the agenda setter.
    \item \textbf{Default option of $x$}: \emph{In every MPE, the agenda setter proposes $w$ and it passes.} Voters predict that passing $w$ enacts it (by the preceding logic). Because $w\succ_M x$, a majority of voters approve the proposal.
    \item \textbf{Default option of $y$}: \emph{In every MPE, $y$ is implemented.} A majority of voters will not approve either $w$ or $x$: by the preceding logic, they predict that passage of either will result in the implementation of $w$, and $y\succ_M w$. 
    \item \textbf{Default option of $z$}: \emph{In every MPE, the agenda setter proposes $y$ and it passes.} Voters predict that passing $y$ leads to its enactment, and $y\succ_M z$, so a majority of voters approve the proposal. As with a default of $y$, a majority of voters will not approve either $w$ or $x$. 
\end{itemize}
When $T$ is finite, the agenda setter can propose $x$ in the terminal round \emph{while also credibly committing not to amend it further in the future}, even though $x$ is improvable. In contrast, with an infinite horizon, the preceding discussion reveals that the agenda setter can never make a similar commitment. The freedom to reconsider policies indefinitely generates additional sequential rationality constraints, significantly weakening the agenda setter's power. 

\cite{diermeier2011legislative} and \cite{anesi2014bargaining} show generally that, in infinite-horizon settings with \hyperlink{GFA}{Generic Finite Alternatives}: (a) the set of MPE outcomes corresponds to the \emph{von Neumann-Morgenstern stable set}, $V$, and (b) in every MPE with initial default $x^0$, the agenda setter obtains her favorite policy $y$ among those in $V$ satisfying $y \prefto_M x^0$.\footnote{A set $V \subseteq X$ is \emph{stable} if no $x \in V$ is improvable by another $y  \in V$ (``internal stability''), while every $x \notin V$ is improvable by some $y \in V$ (``external stability''). \citet{diermeier2012characterization} show that there exists a unique stable set in the present context.}  The stable set necessarily includes all unimprovable policies (i.e., $E\subseteq V$), but is typically larger. Consequently, (a) more policies can arise in equilibrium with an infinite horizon than with a long finite horizon, and (b) the agenda setter is weakly worse off with an infinite horizon than with a single proposal round.

We view the finite- and infinite-horizon models as having different domains of applicability. A leading interpretation of the infinite-horizon model is that it represents legislative decisionmaking with an uncertain deadline: deliberations end in round $t$ with probability $(1-\beta)$ with the realization occurring after that round's proposal and votes; analysis focuses on the limiting case of $\beta\rightarrow 1$.\footnote{An alternative interpretation is that the infinite-horizon model with discounting and no termination, as in \cite{diermeier2011legislative}, captures settings in which the legislature chooses policies for a sequence of calendar dates $t \in \{1,2,\ldots\}$. Specifically, the winning option for round $t$, $x^t$, becomes the policy for that period and serves as the default for $t+1$. Accordingly, policies do not vary over time (i.e., $x^\tau = x^t$ for all $\tau \geq t+1$) unless there are further amendments. Under this interpretation, the ``legislative session'' at each calendar date $t$ consists of a single proposal round. In contrast, our view is that legislatures can negotiate over \emph{policy trajectories} specifying continuation paths $(x^t, x^{t+1}, \dots)$ of time-indexed policies for all future dates. Examples include phase-in and sunset provisions. In other words, default trajectories are not necessarily constant as the preceding perspective assumes. We also take the view that each time-indexed session should consist of multiple proposal rounds rather than one. Modeling dynamic collective choice in this manner effectively makes the problem separable across periods, in which case our separate solutions for all of the time-indexed-policy selection  problems collectively provide a solution to the full dynamic collective choice problem. See Section 6 of \cite{bernheim2006power} for an elaboration of this perspective. \label{fn:dynamic}}  For such settings, the deadline is both uncertain and unbounded. As articulated earlier, our perspective is that there often is a known upper bound on the number of rounds, particularly for negotiations over time-indexed policies.
Even when the deadline is uncertain ex ante, our results apply if it becomes known during deliberations. \Cref{Theorem-Capricious} implies that revelation of the deadline $T_\delta$ rounds in advance allows the agenda setter to obtain a payoff within $\delta$ of her maximum. More starkly, \Cref{Theorem-ThreeRounds} implies that in (essentially) any \hyperlink{DistributionProblem}{Distribution Problem}, three rounds of advance notice concerning the deadline allows the agenda setter to obtain her favorite policy. Both of these conditions strike us as modest, particularly when negotiations are relatively frictionless: if players learn the deadline $\varepsilon>0$ units of time in advance, then with sufficiently short proposal rounds, there will be at least three rounds left, and potentially many more.

Setting aside the question of applicability, our analysis allows us to characterize the effect of the number of proposal rounds on the agenda setter's power. To that end, let $U_T(x^0)$ denote the agenda setter's equilibrium payoff in the finite-horizon game with $T$ rounds and initial default $x^0$, and let $U_\infty(x^0)$ denote that payoff in the infinite-horizon game.\footnote{Formally, \Cref{lemma:structural} implies that $U_T (x^0) = u_A (\phi^T(x^0))$  and \cite{diermeier2011legislative,diermeier2012characterization} and \cite{anesi2014bargaining} imply that $U_\infty (x^0) = u_A ( \psi(x^0;V) )$, where $\psi(x^0;V)$ is the agenda setter's favorite improvement on $x^0$ among policies in the stable set, $V$ (see \Cref{eqn:psi-def} for a formal definition).} We obtain the following characterization by analyzing properties of the unimprovable and stable sets.

\begin{theorem}\label{Theorem-Horizon}
Suppose the collective choice problem $\CCP$ satisfies \hyperlink{GFA}{Generic Finite Alternatives}.
For every $x^0 \in X$ and $T' > T \geq 1$, we have
\[
U_{T'}(x^0)\geq U_T(x^0)\geq U_\infty (x^0).
\]
Moreover, exactly one of the following two statements holds:
\begin{enumerate}[noitemsep,label={\normalfont (\alph*)}]
\item There exists some $x^0\in X$ such that $U_T(x^0)> U_1(x^0)> U_\infty (x^0)$ for all $T \geq 2$.
\item For all $x^0\in X$ and $T \geq 2$, $U_T(x^0)= U_1(x^0)= U_\infty (x^0)$.
\end{enumerate}
\end{theorem}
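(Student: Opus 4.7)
My plan is to translate the claims into statements about the favorite-improvement map $\phi$ and the stable set $V$, then combine three ingredients: (i) monotonicity of $u_A$ along $\phi$-iterations; (ii) the two maximization characterizations recorded in footnote~33, namely $U_T(x^0)=u_A(\phi^T(x^0))$ and $U_\infty(x^0)=u_A(\psi(x^0;V))$, where $\psi(x^0;V)$ picks the agenda setter's favorite element of $V\cap M(x^0)$; and (iii) the inclusion $E\subseteq V$, which is immediate from external stability of $V$ (any $x\in E$ admits no improvement, so the external-stability requirement leaves it no option but to lie in $V$).

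For the monotonicity chain, $\phi(y)\succeq_A y$ for every $y$ by the definition of favorite improvement, so iterating yields $\phi^{T'}(x^0)\succeq_A \phi^T(x^0)$, whence $U_{T'}(x^0)\geq U_T(x^0)$. The inequality $U_T(x^0)\geq U_\infty(x^0)$ then reduces to $U_1\geq U_\infty$, which holds because $\phi(x^0)$ maximizes $u_A$ over $M(x^0)$ whereas $\psi(x^0;V)$ maximizes it only over the smaller set $V\cap M(x^0)$.

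For the dichotomy, using antisymmetry of $u_A$ under \hyperlink{GFA}{Generic Finite Alternatives}, I would argue that $U_T(x^0)=U_1(x^0)$ for all $T\geq 2$ is equivalent to $\phi(x^0)\in E$, and $U_1(x^0)=U_\infty(x^0)$ is equivalent to $\phi(x^0)\in V$. Consequently, (b) is equivalent to $\phi(X)\subseteq E$; and because $E\subseteq V$, (a) is equivalent to the existence of some $x^0$ with $\phi(x^0)\notin V$---the same $x^0$ automatically having $\phi(x^0)\notin E$, so both strict inequalities then hold for every $T\geq 2$. Exclusivity of (a) and (b) is immediate.

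The main task is exhaustiveness. Suppose (b) fails, so there is some $\hat x$ with $\phi(\hat x)\notin E$. If $\phi(\hat x)\notin V$, then $\hat x$ already witnesses (a). Otherwise $\phi(\hat x)\in V$, and I would pivot to $\tilde x\equiv\phi(\hat x)$. From $\phi(\hat x)\notin E$ and antisymmetric preferences, $\phi^2(\hat x)\succ_A\phi(\hat x)$ and $\phi^2(\hat x)\succ_M\phi(\hat x)$; that is, $\phi^2(\hat x)$ improves $\phi(\hat x)$. Internal stability of $V$ therefore forbids $\phi^2(\hat x)\in V$, so $\phi(\tilde x)=\phi^2(\hat x)\notin V$, and $\tilde x$ witnesses (a). The main subtlety is keeping track of two distinct improvement notions---the one-shot improvement driving $\phi$ and $E$, versus the one embedded in the definition of the stable set---and noticing that $E\subseteq V$ is exactly the bridge that rules out $\phi(X)\subseteq V\setminus E$ whenever (b) fails.
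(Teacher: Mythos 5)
Your proof is correct and takes essentially the same route as the paper's: both arguments rest on the characterizations $U_T(x^0)=u_A(\phi^T(x^0))$ and $U_\infty(x^0)=u_A(\psi(x^0;V))$, the inclusion $E\subseteq V$, antisymmetry under \hyperlink{GFA}{Generic Finite Alternatives}, and the stability properties of $V$, and both handle the key case $\phi(\hat x)\in V\setminus E$ by pivoting to the default $\phi(\hat x)$. The only organizational difference is that you invoke internal stability of $V$ directly to conclude $\phi^2(\hat x)\notin V$, whereas the paper packages the same fact into the set $R$ of at-most-once-improvable policies and its identity characterizing $R$ via $\psi(\cdot\,;V)$.
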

\Cref{Theorem-Horizon} shows that the agenda setter either (a) benefits most from having multiple (but finite) rounds and least from having infinite rounds, or (b) is indifferent about the number of rounds.
Thus, her payoff is non-monotone in the number of rounds, except when she cannot benefit from an ability to revisit any one-round proposal. This non-monotonicity suggests that an agenda setter may benefit from creating a deadline even if one does not arise naturally. She might accomplish this objective by creating a ``crisis'' to instill urgency or by bundling the policy issue of interest with a separate time-indexed matter (e.g., a deadline for raising the debt ceiling).

\section{Conclusion}\label{Section-Conclusion}

We have shown that agenda setters have dictatorial power in collective choice problems with two features. The first is that the agenda setter proposes policies in real-time without commitment, tailoring her current proposal to the prevailing default option. The second is a widely satisfied manipulability condition that ensures the existence of one-step improvements. 

Our analysis contributes to a literature that seeks to understand why legislative institutions concentrate political power in the hands of agenda setters, and why majority rule may not be an effective safeguard. To this end, our results also highlight how the agenda setter benefits from bundling policies with transfers and pork, or by linking unrelated policy issues. Finally, we have shown that when the agenda setter makes proposals in real time, many bargaining protocols have equivalent implications for equilibrium outcomes.

\appendix
\renewcommand{\thesection}{\Alph{section}}

\begin{singlespace}
	\addcontentsline{toc}{section}{References}
	\bibliographystyle{ecta}
	\bibliography{ABBC.bib}
\end{singlespace} 
\addtocontents{toc}{\protect\setcounter{tocdepth}{1}} 

\section{Appendix}
\subsection{Proof of \Cref{lemma:structural} on p. \pageref{lemma:structural}}
We prove that $f_T(x^0)=\{\phi^T(x^0)\}$ by induction.\smallskip

\noindent\textbf{Base Step:} \emph{If $T =1$, then $f_T(x^0) = \left\{ \phi(x^0) \right\}$ for all $x^0 \in X$.} 

\noindent Rejection of any proposal results in $x^0$ being chosen. Therefore, in every equilibrium, any proposal $y \succ_M x^0$ passes with probability 1. Thus, if $x^0$ is improvable, proposing $\phi(x^0)$ with probability 1 is uniquely optimal for the agenda setter, and hence must occur in every equilibrium. If $x^0$ is unimprovable, then in equilibrium, any proposal that the agenda setter makes results in $x^0$ being chosen. In both cases, the result follows. \smallskip

\noindent\textbf{Inductive Step:} \emph{Given any $T \in \mathbb{N}$ and $x^0 \in X$, if $f_{T-1}(x) = \left\{ \phi^{T-1}(x) \right\}$ for all $x \in X$, then $f_T(x^0) = \left\{ \phi^T(x^0) \right\}$.} 

\noindent Consider the procedure with $T$ rounds and initial default $x^0$. By subgame perfection, $f_{T-1}(x)$ is the set of outcomes arising with positive probability in any equilibrium in any subgame with $T-1$ rounds in which $x$ is the prevailing default after the first round.
Therefore, by the inductive hypothesis, passage of proposal $y$ in the first round results in outcome $\phi^{T-1}(y)$ and rejection of that proposal results in outcome $\phi^{T-1}(x^0)$. Thus, in every equilibrium, any proposal $y$ where $\phi^{T-1}(y) \succ_M \phi^{T-1}(x^0)$ will pass with probability 1. Therefore: 
\begin{itemize}
    \item if $\phi^{T-1} (x^0)$ is improvable, then in every equilibrium, the agenda setter proposes some $y$ whose continuation outcome $\phi^{T-1}(y)$ coincides with $\phi\left( \phi^{T-1} (x^0) \right)$, which is $\phi^T (x^0)$. Note that $\phi(x^0)$ is one such proposal because $\phi$ and $\phi^{T-1}$ commute. If multiple such proposals exist, then she may randomize among them.
    \item if $\phi^{T-1} (x^0)$ is unimprovable, then it is an element of $E$ and therefore, $\phi^{T-1} (x^0) = \phi^{T} (x^0)$). Thus, any proposal that the agenda setter makes in equilibrium results in $\phi^{T-1} (x^0)$ being chosen; again, there may be multiple such proposals.
\end{itemize}
In either case, $f_T (x^0) = \phi^T (x^0)$. \medskip

\noindent\Cref{lemma:structural}(a) follows immediately from above; 
(b) follows immediately from above and the definition of $E$. For (c), observe that the inclusion $\bigcup_{x^0 \in X} f_T (x^0) \supseteq E$ follows immediately from (b), while the opposite inclusion $\bigcup_{x^0 \in X} f_T (x^0) \subseteq E$ follows from the argument above, together with the fact that $\phi^T(x^0) \in E$ for $T\geq |X|-1$.

\subsection{Details and Proofs for \cref{Theorem-epsilongrid}  on p. \pageref{Theorem-epsilongrid}}\label{appendix-theorem2-full}

Our argument proceeds in three steps. First, we present a \emph{uniform improvement} lemma that establishes that in any manipulable collective choice problems, whenever the agenda setter's payoff is more than $\delta>0$ away from that of her favorite policy, there is another policy that improves her payoff and those of a majority of voters by at least $\eta_\delta>0$; this step plays a critical role in showing that the agenda setter can obtain a payoff within $\delta$ of her highest payoff, $u_A^*$, with a uniform bound on the number of rounds. The second step formalizes the assertion that \hyperlink{TI}{Thin Individual Indifference} characterizes collective choice problems that admit arbitrarily fine generic $\epsilon$-grids. With these steps in place, we then prove \Cref{Theorem-epsilongrid}.

\subsubsection{A Uniform Improvement Lemma}
For each $\delta >0$, define $\Gamma_{\delta} := \left\{ x \in \policyspace \mid u_A^* \geq u_A(x) + \delta \right\}$. We say that policies in $\Gamma_\delta$ are \emph{$\delta$-suboptimal} for the agenda setter, while policies in $X \backslash \Gamma_\delta$ are \emph{$\delta$-optimal} for her. For each $x \in \policyspace$ and $\eta >0$, define
\begin{align*}
Q(x, \eta):= \big\{ y \in X \mid & u_A(y) \geq u_A(x) + \eta \text{ and}\\
&\text{$\exists$ majority $S \subseteq N$ such that } u_i(y) \geq u_i(x) + \eta \ \ \forall  i \in S \big\} \notag
\end{align*}
to be the set of policies that lead to a utility improvement of at least $\eta$ for some winning coalition. If $Q(x,\eta)\neq \emptyset$, then we say that $x$ is $\eta$-improvable.
\begin{lemma}\label{lemma:uniform-improvement}
Suppose the collective choice problem $\CCP$ is \hyperlink{Definition:Manipulable}{Manipulable}. Then for every $\delta >0$, there exists $\eta_\delta >0$ such that $Q(x, \eta_\delta) \neq \emptyset$ for all $x \in \Gamma_\delta$.
\end{lemma}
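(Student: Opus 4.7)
The plan is to prove this uniform improvement lemma by a compactness-and-continuity argument, leveraging the fact that $X$ is compact, all utility functions are continuous, and the set of possible majority coalitions is finite. I would argue by contradiction: suppose no such $\eta_\delta$ exists. Then for each integer $k\geq 1$, there is some $x_k\in\Gamma_\delta$ such that $Q(x_k,1/k)=\emptyset$. Since $u_A$ is continuous, $\Gamma_\delta$ is a closed subset of the compact space $X$, hence compact. Extracting a subsequence, I may assume $x_k\to x^\ast\in\Gamma_\delta$.

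Next I invoke manipulability at the limit. Because $x^\ast\in\Gamma_\delta$, we have $u_A(x^\ast)\le u_A^\ast-\delta$, so $x^\ast\notin X_A^\ast$. By manipulability, there exists a policy $y^\ast\in X$ and a majority coalition $S^\ast\subseteq N$ such that $u_A(y^\ast)>u_A(x^\ast)$ and $u_i(y^\ast)>u_i(x^\ast)$ for every $i\in S^\ast$. Set
\[
\gamma \;:=\; \min\!\Bigl\{u_A(y^\ast)-u_A(x^\ast),\;\min_{i\in S^\ast}\bigl(u_i(y^\ast)-u_i(x^\ast)\bigr)\Bigr\} \;>\;0.
\]

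The final step transfers this strict improvement from the limit point $x^\ast$ to all sufficiently close $x_k$ by continuity. Since $u_A$ and each $u_i$ are continuous and $x_k\to x^\ast$, for all sufficiently large $k$ we have $u_A(y^\ast)-u_A(x_k)\ge \gamma/2$ and $u_i(y^\ast)-u_i(x_k)\ge \gamma/2$ for every $i\in S^\ast$. Hence $y^\ast \in Q(x_k,\gamma/2)$ for all large $k$. Picking $k$ large enough that additionally $1/k<\gamma/2$ and noting that $Q(x,\eta)$ is antitone in $\eta$, we get $Q(x_k,1/k)\supseteq Q(x_k,\gamma/2)\ni y^\ast$, contradicting $Q(x_k,1/k)=\emptyset$. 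This contradiction delivers the claimed $\eta_\delta>0$.

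The argument is essentially a one-pass compactness argument, so I do not foresee a serious obstacle. The only point that requires a little care is that manipulability alone gives improvability at each point individually, with the improving policy $y$ and the approving coalition $S$ potentially varying with $x$; the compactness step is what promotes this pointwise improvability to a uniform lower bound on the improvement size over the whole set $\Gamma_\delta$. Finiteness of $N$ (and hence of the collection of candidate majority coalitions) is helpful but not strictly needed here, since the argument only uses the single coalition $S^\ast$ produced at the limit $x^\ast$.
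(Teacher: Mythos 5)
Your proof is correct. It rests on the same two ingredients as the paper's argument---compactness of $\Gamma_\delta$ and continuity of the utility functions, used to promote pointwise improvability (from manipulability) to a uniform improvement bound---but the execution differs: the paper defines the pointwise maximal improvement $\eta^*(x):=\sup\{\eta : Q(x,\eta)\neq\emptyset\}$, verifies that the supremum is attained and that $\eta^*$ is lower semi-continuous, and then sets $\eta_\delta:=\min_{x\in\Gamma_\delta}\eta^*(x)$, whereas you argue by contradiction along a convergent subsequence $x_k\to x^*$ and transfer a single improving policy $y^*$ (and coalition $S^*$) at the limit point to all nearby $x_k$. Your route is somewhat more elementary, since it bypasses the attainment and semi-continuity verifications; the paper's route produces the explicit (and in principle largest uniform) threshold $\eta_\delta=\min\eta^*$. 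One small quibble with your closing remark: finiteness does enter your argument, since obtaining a common threshold $k$ for the inequalities $u_i(y^*)-u_i(x_k)\geq\gamma/2$ across all $i\in S^*$ uses that $S^*$ is finite---harmless here, as $N$ is finite in the model (the paper's lower semi-continuity step uses finiteness of the player set in exactly the same mild way).
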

Manipulability implies that any policy that is $\delta$-suboptimal for the agenda setter must be improvable, but does not specify how much the agenda setter and a winning coalition of voters gain from that improvement.
\Cref{lemma:uniform-improvement} asserts that for each $\delta>0$, there is a uniform threshold $\eta_\delta$ such that any policy that is $\delta$-suboptimal for the agenda setter must also be $\eta_\delta$-improvable. This uniformity will be important for establishing the uniform bounds on the number of rounds needed for the agenda setter to achieve $\delta$-optimality in \cref{Theorem-epsilongrid,Theorem-Capricious}.

\begin{proof}[Proof of Lemma \ref{lemma:uniform-improvement}]
Let $\delta >0$ be given. Suppose that $\Gamma_\delta \neq \emptyset$, for otherwise the lemma is vacuously true. Define the map $\eta^* : \Gamma_\delta \to \mathbb{R}_+$ by
\begin{align}\label{eqn:eta}
\eta^*(x) &:= \sup_{\eta \in \mathbb{R}_{+}} \eta 
\quad \text{s.t.} \quad Q(x,\eta) \neq \emptyset. 
\end{align}

Note that the supremum in \eqref{eqn:eta} is attained because the correspondence $(x,\eta) \to Q(x, \eta)$ is compact-valued and $x \in Q(x,0)$. Moreover, because $\CCP$ is \hyperlink{Definition:Manipulable}{Manipulable}, for each $x \in \Gamma_\delta$ there exists some $\eta_x >0$ such that $Q(x,\eta_x) \neq \emptyset$. It follows that $\eta^* \left( \Gamma_\delta \right) \subseteq (0,\infty)$. 

We claim that $\eta^*$ is lower semi-continuous.
Let $x^* \in \Gamma_\delta$ be given and take any sequence $\left\{x_n \right\} \subset \Gamma_\delta$ satisfying $x_n \to x^*$. Because preferences are continuous, for every $\epsilon>0$ there exists some $N_\epsilon >0$ such that $n \geq N_\epsilon$ implies $|u_i (x_n) - u_i (x^*)| < \epsilon$ for all players $i$. Letting $\epsilon \in \left(0, \eta^*(x^*) \right)$, which is possible because $\eta^*(x^*) >0$, we therefore have $Q\left(x_n, \eta^*(x^*) - \epsilon \right) \neq \emptyset$ for $n \geq N_\epsilon$. Hence, $\eta^*(x_n) \geq \eta^*(x^*) - \epsilon$ for $n \geq N_\epsilon$. Sending $\epsilon \to 0$, we obtain $\liminf_{n \to \infty} \eta^*(x_n) \geq \eta^*(x^*)$, which establishes the claim.

To conclude the proof, note that $\eta_\delta := \min_{x \in \Gamma_\delta} \eta^*(x)$ is well-defined because $\Gamma_\delta$ is compact and $\eta^*$ is lower semi-continuous, is strictly positive because $\eta^* \left( \Gamma_\delta \right) \subseteq (0,\infty)$, and satisfies $Q(x, \eta_\delta) \neq \emptyset$ for all $x \in \Gamma_\delta$ by construction (recall that the supremum in \eqref{eqn:eta} is attained).
\end{proof}

\subsubsection{Generic $\epsilon$-Grids and Thin Individual Indifference}\label{appendix:grids}

Here we formalize the assertion that \hyperlink{TI}{Thin Individual Indifference} (\Cref{defi:TI}) characterizes collective choice problems that admit arbitrarily fine generic $\epsilon$-grids. The following formalizes what it means to admit arbitrarily fine grids:

\begin{definition}\label{defi:FA}\hypertarget{FA}
A collective choice problem $\CCP = (\policyspace, \{\prefto\}_{i=1,\dots,n,A} \} )$ is \hyperlink{FA}{Finitely Approximable} if, for every $x \in X$ and $\epsilon>0$, there exists a generic $\epsilon$-grid $X_\epsilon$ such that $x \in X_\epsilon$.
\end{definition}

\Cref{defi:FA} requires not only that a generic $\epsilon$-grid $X_\epsilon \subseteq X$ exists for every $\epsilon>0$, but also that such a grid can be constructed so as to contain any pre-specified point $x$ in the ambient policy space $X$. \Cref{Lemma-FA-TI-equiv} shows that finite approximability is characterized by \hyperlink{TI}{Thin Individual Indifference}.

\begin{lemma}\label{Lemma-FA-TI-equiv}
A collective choice problem $\CCP$ is \hyperlink{FA}{Finitely Approximable} if and only if it satisfies \hyperlink{TI}{Thin Individual Indifference}.
\end{lemma}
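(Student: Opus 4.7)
The plan is to prove the equivalence by two separate arguments, with most of the effort concentrated in the \hyperlink{TI}{Thin Individual Indifference} $\Rightarrow$ \hyperlink{FA}{Finite Approximability} direction.

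For \hyperlink{FA}{Finite Approximability} $\Rightarrow$ \hyperlink{TI}{Thin Individual Indifference}, I would argue by contrapositive. If \hyperlink{TI}{TII} fails, there is a player $i$ and a policy $x^*$ such that $I_i(x^*) \setminus \{x^*\}$ contains a relatively open set $B(y,r) \cap X$ with $y \neq x^*$. Choose $\epsilon < \tfrac{1}{2}\min\{r, d(x^*, y)\}$ and invoke \hyperlink{FA}{FA} to produce a generic $\epsilon$-grid $X_\epsilon$ containing $x^*$. The covering condition gives some $z \in X_\epsilon$ with $d(z,y) < \epsilon$. Then $d(z,y) < r$ forces $z \in I_i(x^*)$, while $d(z, x^*) \geq d(x^*, y) - d(z, y) > 0$ forces $z \neq x^*$. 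Hence $z \sim_i x^*$ with $z \neq x^*$ in $X_\epsilon$, contradicting the antisymmetry of $\prefto_i$ on the grid.

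For the reverse direction, fix $x^0 \in X$ and $\epsilon > 0$. Compactness of $X$ yields a finite open cover $\{B(y_k, \epsilon/2)\}_{k=1}^K$ with $y_1 = x^0$. I would inductively select $z_k \in B(y_k, \epsilon/2) \cap X$, starting with $z_1 = x^0$, so that for each prior index $j < k$ and each player $i$, either $z_k = z_j$ or $z_k \not\sim_i z_j$. The resulting set $X_\epsilon := \{z_1, \ldots, z_K\}$ (with repetitions collapsed) automatically satisfies $\max_{x \in X} d(x, X_\epsilon) < \epsilon$ because $d(y_k, z_k) < \epsilon/2$, and antisymmetry of every $\prefto_i$ on $X_\epsilon$ follows directly from the inductive constraint (distinct grid points cannot be $\sim_i$-related).

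The nontrivial content lies in the feasibility of the inductive step. Define the forbidden set at stage $k$ as
\[
F_k := \bigcup_i \bigcup_{j < k} \bigl(I_i(z_j) \setminus \{z_j\}\bigr).
\]
Since utilities are continuous, each $I_i(z_j)$ is closed. I would first show each summand $I_i(z_j) \setminus \{z_j\}$ is \emph{nowhere dense} in $X$: its closure is contained in $I_i(z_j)$, and any nonempty open subset $V$ of that closure would either (a) contain a point other than $z_j$, producing a nonempty open $V \setminus \{z_j\} \subseteq I_i(z_j) \setminus \{z_j\}$ in violation of \hyperlink{TI}{TII}, or (b) equal $\{z_j\}$, which requires $z_j$ to be isolated in $X$---but then $z_j \notin \overline{I_i(z_j) \setminus \{z_j\}}$, so $\{z_j\}$ cannot lie in the interior of the closure. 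Because $X$ is compact metrizable, hence a Baire space, the finite union $F_k$ is meager and so has empty interior, and the nonempty open set $B(y_k, \epsilon/2) \cap X$ must therefore contain a point $z_k \notin F_k$.

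The main obstacle is the nowhere-denseness verification rather than its downstream use. A priori, the closure of $I_i(z_j) \setminus \{z_j\}$ could pick up an isolated point $z_j$ in its interior, and the case analysis above is exactly what rules that out. This matters because \hyperlink{TI}{TII} directly delivers only empty-interiorness of each summand, which---unlike nowhere-denseness---is not generally preserved under finite unions; absent the upgrade to nowhere density, the Baire-category step closing the induction would not obviously go through.
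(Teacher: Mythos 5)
Your proposal is correct, and its overall architecture coincides with the paper's: the same contrapositive ball argument for the ``Finitely Approximable $\Rightarrow$ Thin Individual Indifference'' direction, and the same greedy construction for the converse---take a finite cover by $\epsilon/2$-balls with the prescribed point seeded first, then inductively pick one representative per ball lying outside the punctured indifference sets of all previously chosen points, which delivers both the covering bound and antisymmetry on the grid. Where you differ is in how you certify that the forbidden set has empty interior at each stage. The paper keeps track of the accumulated union of sets $\mathcal{I}(x_j)\backslash\{x_j\}$ and argues by induction, using closedness of the indifference sets, a disjointness property among them, and $\mathrm{int}(I_i(x_j))\subseteq\{x_j\}$ to conclude that the union has empty interior once the chosen points are removed. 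You instead upgrade Thin Individual Indifference to the statement that each $I_i(z_j)\backslash\{z_j\}$ is \emph{nowhere dense}---your isolated-point case analysis is exactly what rules out the closure acquiring $z_j$ as an interior point---and then exploit that nowhere-density, unlike mere empty interior, is preserved under finite unions (your appeal to Baire is valid, though the elementary fact that a finite union of nowhere dense sets is nowhere dense already suffices, since compact metric spaces need not even be invoked). This makes the feasibility step more modular: no bookkeeping is needed about how the indifference sets of distinct grid points intersect one another, which is the delicate part of the paper's inductive argument, and you correctly identify this union step as the real content of the lemma.
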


As the proof of this result is technical and somewhat involved, we relegate it to the Supplementary Appendix. 

\subsubsection{Proof of \cref{Theorem-epsilongrid}}\label{appendix:epsilongrid-proof}

For any $\epsilon >0$ and generic $\epsilon$-grid $\policyspace_\epsilon$, we denote the corresponding discretized collective choice problem by $\CCP_\epsilon := (\policyspace_\epsilon, \left\{\prefto\}_{i=1,\dots,n,A} \} \right)$. 
We define two maps analogous to the definitions in \Cref{subsection:finite}. The agenda setter's favorite improvement within grid $\policyspace_\epsilon$, denoted by $\phi(\cdot ; \policyspace_\epsilon) : \policyspace_\epsilon \to \policyspace_\epsilon$, is
\begin{align}\label{Equation-DefinitionPhiGrid}
\left\{\phi(x ; \policyspace_\epsilon)\right\} := \argmax_{y \in M(x) \bigcap X_\epsilon} u_A(y)
\end{align}
where, as in \Cref{subsection:finite}, $M(x) := \{y \in X : y \succ_M \text{ or } y = x\}$. The second map is $f_T(\cdot; \policyspace_\epsilon) : \policyspace_\epsilon \rightrightarrows \policyspace_\epsilon$, which denotes the equilibrium outcome correspondence (as defined in \Cref{subsection:finite}) for $\CCP_\epsilon$. With these definitions in hand, we prove each direction of \cref{Theorem-epsilongrid} in turn.

\medskip\noindent\textbf{Sufficiency of Manipulability for Approximate Dictatorial Power.} Suppose that $\CCP$ is \hyperlink{Definition:Manipulable}{Manipulable}. As $\CCP$ satisfies \hyperlink{TI}{Thin Individual Indifference},  \cref{Lemma-FA-TI-equiv} assures that   for each $\epsilon>0$, there exists a generic $\epsilon$-grid $\policyspace_\epsilon$. 

Let $\delta>0$ be given and let $\eta_\delta$ be as defined in \cref{lemma:uniform-improvement}, which applies because the collective choice problem $\CCP$ is \hyperlink{Definition:Manipulable}{Manipulable}. Let $\epsilon_\delta >0$ be such that 
\begin{equation}
\max_{i \in N \cup \left\{A \right\} } \max_{x \in \policyspace} \max_{y \in B_{\epsilon{_\delta}}(x)} |u_i (x) - u_i(y)| \leq \frac{\eta_{\delta}}{2}, \label{eqn:fa-pf1}
\end{equation}
where $B_{\epsilon{_\delta}}(x) := \left\{ y \in \policyspace : d(y, x) < \epsilon_\delta\right\}$, noting that such an $\epsilon_\delta >0$ exists because each $u_i$ is uniformly continuous (being that $\policyspace$ is compact). For each $\epsilon < \epsilon_\delta$,  $\policyspace_\epsilon \cap B_{\epsilon_\delta}(x) \neq \emptyset$ for all $x \in \policyspace$ by construction. Therefore, \eqref{eqn:fa-pf1} implies that 
\begin{equation}
\max_{i \in N \cup \left\{A \right\} } \max_{x \in \policyspace} \min_{y \in \policyspace_\epsilon} |u_i (x) - u_i(y)| \leq \frac{\eta_{\delta}}{2}. \label{eqn:fa-pf2}
\end{equation}
 Henceforth, we consider $\epsilon<\epsilon_\delta$. 

We first claim, building on \Cref{lemma:uniform-improvement}, that for every $\delta$, there exists an $\eta_\delta$ such that every policy in $X_\epsilon$ that is $\delta$-suboptimal for the agenda setter is $\eta_\delta/2$-improvable in $\policyspace_\epsilon$, viz. there exists an alternative in $X_\epsilon$ that leads to a utility increase of at least $\eta_\delta/2$ for herself and some majority of voters. Formally:
\begin{align}\label{Inequality-DiscreteGrid}
    \text { For every }x\in \policyspace_\epsilon\cap \Gamma_\delta, \quad Q(x,\eta/2)\cap\policyspace_\epsilon \neq \emptyset.
\end{align}
To see why \eqref{Inequality-DiscreteGrid} is true, first observe that by \Cref{lemma:uniform-improvement}, there exist $\eta_\delta>0$ and $y'\in \policyspace$ such that $u_i(y')\geq u_i(x)+\eta_\delta$ for every $i\in S\cup \{A\}$, where $S\subseteq N$ contains some majority of voters. Second, \eqref{eqn:fa-pf2} assures that there exists some $y \in X_\epsilon$ such that $|u_i (y') - u_i(y)| \leq \eta_\delta/2$ for every $i$. Combining these observations, we conclude that $u_i (y) \geq u_i (x) + \eta_\delta/2$ for all $i\in S \cup \left\{A\right\}$. 

Because $y \sprefto_M x$ above, an important implication of \eqref{Inequality-DiscreteGrid} is that
\begin{align}\label{Inequality-DiscreteAgendaSetter}
 \text { for every }x\in \policyspace_\epsilon\cap \Gamma_\delta, \quad   u_A(\phi(x;\policyspace_\epsilon))\geq u_A(x)+\frac{\eta_\delta}{2},
\end{align}
where the map $\phi(\cdot ; \policyspace_\epsilon)$ is the agenda setter's favorite improvement in grid $X_\epsilon$, as defined in \Cref{Equation-DefinitionPhiGrid}. 

We use this fact to prove the theorem: there exists some (uniform) $T_\delta \in \mathbb{N}$ such that, if there are $T \geq T_\delta$ rounds, then the agenda setter's payoff is no lower than $u^*_A - \delta$ in every equilibrium for any generic $\epsilon$-grid $\policyspace_\epsilon$ with  $\epsilon<\epsilon_\delta$.\footnote{We note that this statement does not follow from \Cref{lemma:structural}. \Cref{lemma:structural} implies that if there are $T \geq |\policyspace_\epsilon| -1$ rounds, then $\bigcup_{x^0 \in  \policyspace_\epsilon} f_T(x^0 ; \policyspace_\epsilon) = E\left( \policyspace_\epsilon \right)$, where $E\left( \policyspace_\epsilon \right) := \left\{ x \in \policyspace_\epsilon : x = \phi(x; \policyspace_\epsilon) \right\}$ denotes the set of unimprovable policies in $\mathcal{C}_\epsilon$. We know from \eqref{Inequality-DiscreteGrid} that $E\left( \policyspace_{\epsilon} \right) \subseteq X_\epsilon \backslash \Gamma_\delta$, i.e., any policy that is unimprovable in $\CCP_\epsilon$ must be $\delta$-optimal for the agenda setter. It would then follow that the agenda setter's payoff is at least $u^*_A - \delta$ when there are $T \geq |\policyspace_\epsilon| -1$ rounds; as $\epsilon\rightarrow 0$, this argument would then require the number of rounds to grow without bound. \label{fn:unimprovable-grid}} To put it formally, there exists $T_\delta$ such that for every $T\geq T_\delta$ and $\epsilon<\epsilon_\delta$, 
\begin{align*}
    \bigcup_{x^0 \in  \policyspace_\epsilon} f_T(x^0 ; \policyspace_\epsilon) \subseteq  \policyspace_\epsilon\backslash\Gamma_\delta.
\end{align*}

If $x^0 \in \policyspace_\epsilon\backslash\Gamma_\delta$, the conclusion follows from applying \Cref{lemma:structural}(a) to $\CCP_\epsilon$, noting that $\phi(x;\policyspace_\epsilon)\prefto_A x$ for every $x$. Thus we consider $x^0 \in \policyspace_\epsilon\cap \Gamma_\delta$. We denote the payoff difference between the agenda setter's favorite and least favorite policies by $\Delta := u_A^* - \min_{y \in \policyspace} u_A(y)$, which is well-defined and finite because $u_A$ is continuous and $\policyspace$ is compact. Correspondingly, define $T_\delta := \lceil 2 \Delta / \eta_\delta \rceil \in \mathbb{N}$. Suppose, towards a contradiction, that $y:= f_{T_\delta}(x^0; \policyspace_\epsilon) \in  \policyspace_\epsilon\cap \Gamma_\delta$. Then it follows that
\begin{align*}
u_A \left( \phi(y; \policyspace_\epsilon) \right) - u_A(x^0) &\geq 
u_A \left( y \right) - u_A(x^0) + \frac{\eta_\delta}{2} \\
&= u_A \left( \phi^{T_\delta}(x^0; \policyspace_\epsilon) \right) - u_A(x^0) + \frac{\eta_\delta}{2} \\
&= \sum_{t=1}^{T_\delta} \left[ u_A \left( \phi^{t}(x^0; \policyspace_\epsilon) \right) - u_A \left( \phi^{t-1}(x^0; \policyspace_\epsilon)\right) \right] + \frac{\eta_\delta}{2}\\
&\geq T_\delta \cdot \frac{\eta_\delta}{2} + \frac{\eta_\delta}{2}\\
& \geq \Delta + \frac{\eta_\delta}{2},
\end{align*}
where the first line is by \eqref{Inequality-DiscreteAgendaSetter}, the second line is by \Cref{lemma:structural} applied to $\CCP_\epsilon$, the third line is an identity, the fourth line is by another application of \eqref{Inequality-DiscreteAgendaSetter} to each term in the sum (noting that $\phi^{T_\delta}(x^0; \policyspace_\epsilon) \in \policyspace_\epsilon\cap \Gamma_\delta$ implies that $\phi^t(x^0;\policyspace_\epsilon) \in \policyspace_\epsilon\cap \Gamma_\delta$ for all $t < T_\delta$), 
and the final line is by definition of $T_\delta$. However, given that $\eta_\delta>0$, this inequality contradicts the definition of $\Delta$. We conclude that $y \in \policyspace_\epsilon \backslash \Gamma_\delta$, as desired.

\medskip\noindent\textbf{Necessity of Manipulability for Approximate Dictatorial Power.} Suppose that $\CCP$ is not \hyperlink{Definition:Manipulable}{Manipulable}. Then there exists an unimprovable policy $x$ and $\delta>0$ such that $u_A(x)<u_A^*-\delta$. 
As $\CCP$ satisfies \hyperlink{TI}{Thin Individual Indifference}, \cref{Lemma-FA-TI-equiv} assures that there exists an $\overline{\epsilon}>0$ such that, for all $\epsilon \in (0, \overline{\epsilon})$, there is a generic $\epsilon$-grid $X_\epsilon$ for which $x \in X_\epsilon$. Observe that $x$ must also be unimprovable in the corresponding discretized collective choice problem $\CCP_\epsilon$. Applying \cref{lemma:structural} to this discretized problem reveals that for every number of rounds, the equilibrium outcome starting from initial default $x^0 = x$ is $x$ itself: $f_T(x; X_\epsilon) = \{x\}$ for every $T \in \mathbb{N}$. The agenda setter then attains a payoff of $u_A(x)<u_A^*-\delta$, failing to achieve approximate dictatorial power regardless of the number of rounds.

\subsection{The Divide-the-Dollar Problem}\label{Section-DivideDollar}

Herein, we describe the implications of real-time agenda control in the standard ``Divide-the-Dollar'' problem, in which the policy space is $X=\Delta^{n+1}$ and a policy $x\equiv(x_1,\ldots,x_n,x_{n+1})$ reflects a division of the dollar; the first $n$ indices are the shares of the $n$ voters and $x_{n+1}$ is that of the agenda setter. Each player has selfish risk-neutral preferences, so $u_i(x)=x_i$. The legislature begins with an initial default option $x^0$, and as in our baseline analysis, uses simple majority rule in each of finitely many rounds. 

In this context, we elucidate two features of our general analysis. First, we construct a non-capricious equilibrium in which the agenda setter appropriates the entire dollar whenever there are $3$ or more rounds. Second, we highlight how our dictatorial power result (\Cref{Theorem-Capricious}) does not apply to equilibria with capricious tiebreaking: regardless of the number of rounds, there exists an equilibrium with capricious tiebreaking in which the agenda setter fails to appropriate the entire dollar. 

\medskip\noindent\textbf{A Non-Capricious Equilibrium.}
To describe a non-capricious equilibrium, we adapt the agenda setter's favorite improvement operator $\phi$ from \Cref{subsection:finite} to this setting. For default policy $x$, let $\beta(x)$ denote the policy that sets the $(n-1)/2$ largest elements (among the first $n$ elements) to $0$ and reallocates that portion of the dollar to the agenda setter; in the event of ties, $\beta(x)$ selects a group of voters with this size with the lowest player indices. More precisely, let $G^0(x)\equiv\emptyset$, and define $G^k(x)$ inductively for $k \in \{1,\dots,n\}$ as follows:
\begin{align*}
    G^{k}(x):=G^{k-1}(x)\cup\left\{j\in N:j=\min \left(\argmax_{i\in N\backslash G^{k-1}(x)} x_i\right) \right \}.
\end{align*}
Observe that $G^{k}(x)$ identifies the $k$ voters who have the highest shares in default policy $x$ (and breaks ties in favor of those with lower player labels). We define the policy $\beta(x)$ as 
\begin{align*}
    (\beta(x))_i:=\begin{cases}
			0, & \text{if }i\in G^{(n-1)/2}(x),\\
			x_i, & \text{if } i\in N\backslash G^{(n-1)/2}(x),\\
            x_{n+1}+\sum_{j\in G^{(n-1)/2}(x)} x_j, & \text{if }i=n+1.
		 \end{cases}
\end{align*}
This operator adapts the favorite improvement operator $\phi$ to this setting: among policies that a majority of voters weakly prefer to $x$, $\beta(x)$ is one of the agenda setter's favorites. Observe that for any policy $x$, $\beta^2(x)$ extracts bargaining shares from all but one voter---the one who has the lowest share in policy $x$---and $\beta^3(x)$ has the agenda setter obtaining the entire dollar.

We now construct a non-capricious equilibrium in which the agenda setter obtains the entire dollar if there are  $T\geq 3$ rounds. Consider the following strategy profile: in each round $t \in \{1,\dots,T\}$, if the prevailing default is $x$, then (i) the agenda setter proposes $\beta(x)$ whenever $x$ is the prevailing default and (ii) each voter $i \in N$ votes in favor of a proposal $y$ if and only if $\beta^{T-t}(y) \prefto_i \beta^{T-t}(x)$, viz., she weakly prefers the continuation outcome from acceptance to the continuation outcome from rejection. As no player has a strictly profitable deviation and the strategy profile is pure and Markovian, this defines a \hyperlink{NonCapricious}{Non-Capricious} equilibrium by construction. 

We illustrate the path of play in this equilibrium using the following example:
\begin{example}\label{Example-3VoterBargaining}
Suppose that there are three voters and the default option $x^0$ is such that $x^0_1>x^0_2>x^0_3>0$. Then $\beta(x^0)=(0,x^0_2,x^0_3,1-x^0_2-x^0_3)$, $\beta^2(x^0)=(0,0,x^0_3,1-x^0_3)$, and $\beta^3(x^0)=(0,0,0,1)$. Observe that voters $1$ and $2$ approve each equilibrium-path proposal: they are indifferent between $\beta^2(x^0)$ and $\beta^3(x^0)$, and anticipate that rejecting either the first or second on-path offer nevertheless results in both of them obtaining zero surplus.
\end{example}
The above construction demonstrates a particular non-capricious equilibrium in which the agenda setter obtains the entire dollar within $3$ rounds. \Cref{Theorem-ThreeRounds} further implies that \emph{all} non-capricious equilibria of this \hyperlink{DistributionProblem}{Distribution Problem} share this property. 

\medskip\noindent\textbf{A Capricious Equilibrium.} We now show, using a setting with $3$ voters, that the dictatorial power conclusion of \Cref{Theorem-Capricious} does not apply to equilibria with capricious tiebreaking. 

Consider a strategy profile that differs from that described above only with respect to voters' tiebreaking rule: voters now resolve indifference in favor of the agenda setter's proposal if and only if it is in the final or penultimate round (i.e., $t \in\{T-1,T\}$), and otherwise break ties in favor of the prevailing default option. Observe that this strategy profile satisfies \Cref{defi:NC}(a), as it is pure and Markovian. However, it violates \Cref{defi:NC}(b) because, for any pair of continuation outcomes, the tiebreaking decision conditions on the current round. 

We claim that, for \emph{any} initial default $x^0$ and number of rounds $T\geq 2$, this strategy profile (i) results in the outcome $\beta^2(x^0)$ and (ii) is an equilibrium. Consequently, given any default $x^0$ in which all voters have positive shares (as in \Cref{Example-3VoterBargaining}) and any number of rounds, the agenda setter fails to appropriate the entire dollar in this (capricious) equilibrium.

To see that the outcome is $\beta^2(x^0)$, observe that---as in \Cref{Example-3VoterBargaining}--- the outcome is $\beta(x^0)$ if $T=1$ and $\beta^2(x^0)$ if $T=2$. Suppose now that $T=3$. Voters anticipate that accepting the initial on-path proposal $\beta(x^0)$ results in outcome $\beta^3(x^0)$, whereas rejecting it leads to $\beta^2(x^0)$. As the two voters with the largest shares (and lowest indices) in $x^0$ both receive zero shares under both outcomes,\footnote{This pair consists of voters 1 and 2 for the initial default in \Cref{Example-3VoterBargaining}, but may be different for other initial defaults.} our capricious tiebreaking rule stipulates that they both vote against the initial proposal---unlike in \Cref{Example-3VoterBargaining}---resulting in an on-path outcome of $\beta^2(x^0)$. It is then easy to see by induction that the same outcome arises for all $T\geq 3$.

We now argue that this strategy profile is an equilibrium. It suffices to consider the agenda setter's incentives in rounds $t \leq T-2$.\footnote{From the non-capricious equilibrium construction above, it is easy to see that the continuation strategies in any round-$T$ or round-$(T-1)$ subgame constitute an equilibrium therein. Moreover, voters have no profitable deviations in any round by construction.} Consider round $T-2$ and let the prevailing default $x$ be given. Suppose, towards a contradiction, that the agenda setter has a strictly profitable deviation by proposing some $y \neq \beta(x)$. By the argument in the preceding paragraph, (i) the continuation outcome is $\beta^2(x)$ if the agenda setter follows her strategy of proposing $\beta(x)$, and (ii) players anticipate that acceptance of $y$ results in outcome $\beta^2(y)$ whereas its rejection leads to $\beta^2(x)$. By the supposition and property (i), it must be that some voter---call her $i$---is strictly worse off under $\beta^2(y)$ than under $\beta^2(x)$. By property (ii), voter $i$ must vote against this proposal. Moreover, by the construction of $\beta$, both continuation outcomes result in at least two voters obtaining zero utility; although the identities of these voters may differ across these outcomes, as there are three voters in total, the pigeonhole principle implies that at least one voter---call her $j$---obtains a zero share in \emph{both} $\beta^2(y)$ and $\beta^2(x)$. Our capricious tiebreaking rule and property (ii) then stipulate that voter $j$ rejects the proposal $y$. Hence, we have found two distinct voters, $i$ and $j$, who both vote against proposal $y$, implying that it does not pass and thereby contradicting that it is a strictly profitable deviation for the agenda setter. By induction, this argument also applies for all rounds $t\leq T-2$.\footnote{The reader may wonder as to why the agenda setter cannot induce voters to break ties in favor of her proposal in rounds $t\leq T-2$ by offering $\epsilon>0$ more to each voter, as in models of legislative bargaining based on the closed-rule procedure. The issue is that here, unlike closed-rule bargaining, such policies are left open to further reconsideration; voters (correctly) anticipate that bargaining in rounds $T-1$ and $T$ leave at most one voter with a non-zero surplus.}
\newpage

\section{Supplementary Appendix (For Online Publication)}\label{Section-SupplementaryAppendix}
\subsection{Proof of \Cref{Lemma-FA-TI-equiv} on p. \pageref{Lemma-FA-TI-equiv}}
We consider each direction in turn. For the ``if'' direction, suppose $\CCP$ satisfies \hyperlink{TI}{Thin Individual Indifference}. Let $x \in X$ and $\epsilon>0$ be given. Note that because $X$ is compact, the open covering $\left\{ B_{\epsilon/2}(y) \right\}_{y \in X}$ has a finite subcovering; enumerate it by $\{B_k\}_{k=1}^K$ and suppose, without loss of generality, that $x \in B_1$. For any $y \in X$, let $\mathcal{I}(y) := \bigcup_{i \in N \cup\{A\}} I_i(y)$. Recursively construct the sequences $\{D_k\}_{k=1}^K \subseteq 2^X$ and $\{x_k\}_{k=1}^K \subseteq X$ as follows:
\begin{itemize}
    \item Let $D_0 := \emptyset$ and $D_k := D_{k-1} \bigcup \left[\mathcal{I}(x_k) \backslash \{x_k\}\right]$ for $k \geq 1$;
    \item Let $x_1 := x \in B_1$ and pick $x_k \in B_k \backslash D_{k-1} $ arbitrarily for $k \geq 2$. 
\end{itemize}
We claim that $X_\epsilon := \{x_k\}_{k=1}^K$ is a generic $\epsilon$-grid; since $x \in X_\epsilon$ by construction, this suffices to prove the lemma. We establish the claim in three steps. 

\noindent \emph{Step 1: The sequences $\{D_k\}_{k=1}^K$ and $\{x_k\}_{k=1}^K$ are well-defined, viz., $B_k \backslash D_{k-1} \neq \emptyset$ for all $k$.} The argument is by induction. For the base step, note that $D_1$ has empty interior by \hyperlink{TI}{Thin Individual Indifference}; because $B_2$ is nonempty and open, this implies that $B_2 \backslash D_1 \neq \emptyset$ and thus that $x_2$ is well-defined. By construction, $x_2 \in \mathcal{I}(x_1)$ only if $x_2 = x_1$. For the inductive step, let $2\leq k \leq K-1$ be given and suppose that, for all $\ell \leq k$, $D_{\ell}$ and $x_{\ell}$ are well-defined and satisfy the following: 
\begin{equation}
\text{$x_\ell \in \bigcup_{j=1}^{\ell-1} \mathcal{I}(x_j)$ $\implies$ $x_\ell \in \{x_j\}_{j=1}^{\ell-1}$.} \label{indiff-recur}
\end{equation}
To complete the induction, we must show that (a) $B_{k+1} \backslash D_k \neq \emptyset$ and (b) every $x_{k+1} \in B_{k+1} \backslash D_k$ satisfies \eqref{indiff-recur} (with $\ell = k+1$). Towards (a), note that \eqref{indiff-recur} holding for all $\ell \leq k$ immediately implies that: 
\begin{enumerate}
    \item[(i)] For all $j, \ell \leq k$, either $\mathcal{I}(x_j) = \mathcal{I}(x_\ell)$ or $\mathcal{I}(x_j) \cap \mathcal{I}(x_\ell) = \emptyset$;
    \item[(ii)] $D_k = \left[\bigcup_{j=1}^{k} \mathcal{I}(x_j) \right] \backslash \{x_1, \dots, x_k\}$.
\end{enumerate}
By (i), the union in (ii) is essentially disjoint (modulo repetition of the same set), and by continuity of preferences, each $\mathcal{I}(x_j)$ is closed. It follows that $\text{int} \left( \bigcup_{j=1}^{k} \mathcal{I}(x_j)\right) =\bigcup_{j=1}^{k} \text{int} \left(\mathcal{I}(x_j) \right)$. Since $\text{int} \left(\mathcal{I}(x_j)\right) \subseteq \left\{x_j\right\}$ for all $j\leq k$ by \hyperlink{TI}{Thin Individual Indifference}, we conclude from (ii) that $\text{int}\left( D_k \right) = \emptyset$. This establishes that $B_{k+1} \backslash D_k \neq \emptyset$ and thus completes the proof of (a). The proof of (b) is then immediate from (ii), concluding the proof of Step 1.

\noindent \emph{Step 2: All players have strict preferences on $X_\epsilon$.} This is immediate from the fact that \eqref{indiff-recur} holds for all $1\leq k \leq K$, as established in the proof of Step 1 above.

\noindent \emph{Step 3: $X_\epsilon$ is an $\epsilon$-grid, viz., $\max_{y \in X} d(y,X_\epsilon) < \epsilon$.} Recall that $\left\{B_k\right\}_{k=1}^K$ is a covering of $X$ by open balls of radius $\epsilon/2$, while $X_\epsilon$ is constructed from a selection $k \mapsto x_k \in B_k$. Hence, $\sup_{y \in B_k} d(y,x_k) < \epsilon$ for every $k \leq K$. Observe that 
\begin{align*}
    \max_{y \in X} d(y,X_\epsilon) 
\leq \max_{k \leq K} \left[ \sup_{y \in B_k} d(y,x_k) \right] < \epsilon.
\end{align*}

This completes our proof of the sufficiency of \hyperlink{TI}{Thin Individual Indifference}. 
To establish its necessity---the ``only if'' direction---suppose that $\CCP$ violates \hyperlink{TI}{Thin Individual Indifference}. Then there exists a policy $x \in X$, player $i \in N \cup \{A\}$, and nonempty open set $O \subset X$ such that $O \subset I_i(x)\backslash\{x\}$. Pick $y \in O$ and $\delta>0$ so that $B_\delta (y):= \{z \in X : d(y,z) <\delta \} \subseteq O$. Given any $\epsilon  \in (0, \delta]$, let $X_\epsilon \subset X$ be a (not necessarily generic) $\epsilon$-grid for which $x \in X_\epsilon$. There exists some $z \in X_\epsilon \cap B_\delta(y)$ by definition of $X_\epsilon$,\footnote{If not, then $\min_{z \in X_\epsilon} d(y,z) \geq \delta$ and so $X_\epsilon$ would not be an $\epsilon$-grid with $\epsilon \leq \delta$.} and hence $z \in X_\epsilon \cap I_i(x) \backslash \{x\}$ by definition of $B_\delta(y)$, implying that player $i$'s preferences are not strict on $X_\epsilon$. It follows that $\CCP$ does not admit any generic $\epsilon$-grid containing $x$ with $\epsilon \in (0, \delta]$, and therefore is not \hyperlink{FA}{Finitely Approximable}.\qed

\subsection{Details and Proofs for \Cref{Theorem-Capricious} on p. \pageref{Theorem-Capricious}}\label{appendix-theorem3-full}

We provide the main proof of \Cref{Theorem-Capricious} in \Cref{appendix:thm3-proof}. The key lemmas presented therein, \Cref{Lemma-NC-full2,Lemma-NC-full}, are proved separately in \Cref{Section-ProofLemmaNCfull2,Section-ProofLemmaNCfull}.

\subsubsection{Proof of \Cref{Theorem-Capricious}}\label{appendix:thm3-proof}

\paragraph{Step 1: Preliminaries.} We define the \emph{weak} majority acceptance correspondence by $M^\text{w}(x) := \left\{ y \in X : y \prefto_M x \right\}$, the \emph{strict} majority acceptance correspondence by $M^\text{s}(x) := \left\{ y \in X : y \succ_M x \right\}$, and the \emph{almost-strict} majority acceptance correspondence by $M^{\text{as}}(x) := \text{cl}\left[M^{\text{s}}(x)\right] \cup \{x\}$. Define the agenda setter's \emph{favorite almost-strict improvement} value function $V^\text{as}_A : X \to \mathbb{R}$ by $V_A^{\text{as}}(x) \equiv \max_{y \in M^{\text{as}}(x)} u_A(y)$, and her \emph{one-round improvement} correspondence $\Phi^\text{or} : X \rightrightarrows X$ by
\begin{equation}
\Phi^\text{or}(x) \equiv \left\{ y \in X : y \in M^\text{w}(x) \text{ and } u_A(y) \geq V^{\text{as}}_A(x) \right\}. \label{Equation-PhiOR}
\end{equation}
Notice that, because the policy space is compact and all players' preferences are continuous, each of the correspondences described above is nonempty- and compact-valued. We denote the set of unimprovable policies---as in \Cref{defi:Improv}---by $\mathcal{E}$.

In settings with \hyperlink{GFA}{Generic Finite Alternatives}, $\Phi^\text{or}(x) = \{\phi(x)\}$, where $\phi$ is the favorite improvement function defined in \Cref{Equation-FavoriteImprovement}. \Cref{Lemma-General1} shows that $\Phi^\text{or}$ is the appropriate generalization of $\phi$ to general collective choice problems.

\begin{lemma}\label{Lemma-General1}
For any collective choice problem $\CCP$, the following hold:
\begin{enumerate}[noitemsep,label={\normalfont (\alph*)}]
    \item The set of unimprovable policies satisfies $\mathcal{E} = \{ x \in X : x \in \Phi^{\text{or}}(x) \}$.
\item For any pair of policies $x$ and $y$, we have $y \in \Phi^\text{or}(x)$ if and only if $y$ is the outcome of some \hyperlink{NonCapricious}{Non-Capricious} equilibrium of the one-round game with initial default $x$.
\end{enumerate}
\end{lemma}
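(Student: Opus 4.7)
My plan for part (a) is to reduce unimprovability to attainment of $V_A^{\text{as}}(x)$. Since $x \prefto_i x$ for all $i$, we have $x \in M^{\text{w}}(x)$ trivially, so $x \in \Phi^{\text{or}}(x)$ is equivalent to $u_A(x) \geq V_A^{\text{as}}(x)$, i.e., $x$ maximizes $u_A$ on $M^{\text{as}}(x)$. I will then prove this condition is equivalent to $x \in \mathcal{E}$. The $(\Leftarrow)$ direction: if $x \in \mathcal{E}$, then $u_A(y) \leq u_A(x)$ for every $y \in M^{\text{s}}(x)$; continuity of $u_A$ extends this inequality to $\text{cl}(M^{\text{s}}(x))$, and it holds trivially on $\{x\}$, covering all of $M^{\text{as}}(x)$. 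The $(\Rightarrow)$ direction is contrapositive: if $x \notin \mathcal{E}$, some $y$ with $y \succ_A x$ and $y \succ_M x$ lies in $M^{\text{s}}(x) \subseteq M^{\text{as}}(x)$ with $u_A(y) > u_A(x)$, so $x$ fails to attain $V_A^{\text{as}}(x)$.

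For the forward direction of part (b), suppose $y^*$ is the outcome of a Non-Capricious equilibrium of the one-round game with default $x$. I will first argue $y^* \in M^{\text{w}}(x)$: the case $y^* = x$ is immediate, and if $y^* \neq x$ then $y^*$ was proposed and passed; the as-if-pivotal rule forces voters who strictly prefer $x$ to reject, so the accepting majority must weakly prefer $y^*$. Next, I will show $u_A(y^*) \geq V_A^{\text{as}}(x)$ by contradiction: letting $y^{**}$ attain $V_A^{\text{as}}(x) > u_A(y^*)$, I will produce a profitable deviation in each of three cases. If $y^{**} \in M^{\text{s}}(x)$, proposing it forces passage via as-if-pivotal voting, yielding $u_A(y^{**}) > u_A(y^*)$. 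If $y^{**} = x$, the agenda setter proposes $x$ itself (so the outcome is $x$), obtaining $u_A(x) > u_A(y^*)$. If $y^{**} \in \text{cl}(M^{\text{s}}(x)) \setminus (M^{\text{s}}(x) \cup \{x\})$, I will take $y_n \in M^{\text{s}}(x)$ with $y_n \to y^{**}$ and use continuity of $u_A$ to conclude that $u_A(y_n) > u_A(y^*)$ for large $n$, yielding a profitable deviation.

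For the backward direction, I will exhibit an explicit Non-Capricious equilibrium with outcome $y^* \in \Phi^{\text{or}}(x)$. The agenda setter proposes $y^*$; each voter's response to a proposal $y'$ is to vote strictly by preference, with indifferent voters accepting iff $y' = y^*$. Since $y^* \in M^{\text{w}}(x)$, a weak majority accepts and $y^*$ passes. The Markovian/deterministic condition of Non-Capriciousness holds by construction; the tiebreaking condition applies vacuously because, in the one-round game, each continuation-outcome pair $(y', x)$ arises at a single history-proposal pair, so different tiebreaking for different proposals violates no across-history constraint. It remains to rule out profitable deviations: a deviation $y' \neq y^*$ with $y' \in M^{\text{s}}(x)$ passes and delivers $u_A(y') \leq V_A^{\text{as}}(x) \leq u_A(y^*)$, while any other deviation fails (indifferent voters reject) and delivers $u_A(x) \leq V_A^{\text{as}}(x) \leq u_A(y^*)$.

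The main delicacy will be managing the distinction between the majority-acceptance correspondences $M^{\text{s}}$, $M^{\text{as}}$, and $M^{\text{w}}$. The agenda setter cannot force passage of proposals in $M^{\text{w}} \setminus M^{\text{as}}$ against adversarial tiebreaking, so the correct upper bound on her equilibrium payoff is $V_A^{\text{as}}$ rather than $\max_{y \in M^{\text{w}}(x)} u_A(y)$; making this bound tight requires the continuity-based approximation of boundary points in $\text{cl}(M^{\text{s}}(x))$ by interior points of $M^{\text{s}}(x)$. Dually, verifying Non-Capriciousness for the constructed equilibrium hinges on the observation that distinct proposals induce distinct continuation-outcome pairs, which decouples the indifference-tiebreaking constraints across proposals within a single round.
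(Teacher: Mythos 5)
Your proof is correct and follows essentially the same route as the paper's: part (a) via the continuity/closure relation between $M^{\text{s}}$, $\text{cl}[M^{\text{s}}]$, and $M^{\text{as}}$, and part (b) via profitable-deviation arguments (approximating boundary points of $\text{cl}[M^{\text{s}}(x)]$ by strictly majority-preferred proposals that must pass under as-if-pivotal voting) together with the same explicit construction in which indifferent voters accept only the prescribed proposal $y^*$. If anything, you are slightly more explicit than the paper in two places—separating out the case where $V^{\text{as}}_A(x)$ is attained at $x$ itself, and spelling out why every pure-strategy equilibrium of the one-round game is automatically Non-Capricious—but these are refinements of the same argument, not a different approach.
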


\begin{proof}
To prove part (a), let $x \in X$ be given. We show that $x \in \mathcal{E}$ if and only if $x \in \Phi^{\text{or}}(x)$. For the ``if'' direction, note that $x \in \Phi^{\text{or}}(x)$ implies that $u_A(x) \geq V^{\text{as}}_A(x)$, and hence there do not exist any $y \in M^\text{as}(x)$ such that $y \succ_A x$; as $M^\text{s}(x) \subseteq M^\text{as}(x)$, it follows that $x \in \mathcal{E}$. For the ``only if'' direction, suppose that $x \notin \Phi^{\text{or}}(x)$. Then, because $x \in M^\text{w}(x)$, it must be that $u_A(x) < V_A^\text{as}(x)$. Thus, there exists some $y \in M^{\text{as}}(x) \backslash\{x\}$ such that $y \succ_A x$. Because $M^{\text{as}}(x) \backslash\{x\} = \text{cl}\left[ M^{\text{s}}(x) \right]$ by definition, there exists a sequence $\{y^n\} \subseteq M^{\text{s}}(x)$ such that $y^n \to y$ and, being that $\prefto_A$ is continuous, there exists an $N \in \mathbb{N}$ such that $y^n \succ_A x$ for all $n \geq N$. Thus, $x$ is improvable by any $y^n$ with $n \geq N$, implying that $x \notin \mathcal{E}$.

To prove part (b), let $x,y \in X$ be given. For the ``if'' direction, suppose that $y$ is the outcome under a \hyperlink{NonCapricious}{Non-Capricious} equilibrium $\sigma$ in the one-round game with default $x$. Suppose, towards a contradiction, that $y \notin \Phi^\text{or}(x)$. As this implies that $y \neq x$, it must be that $y$ is proposed and accepted with probability one under $\sigma$. There are two cases. First, if $y \notin M^\text{w}(x)$, then there exists some voter $i \in N$ such that $i$ votes to approve proposal $y$ under $\sigma$, and yet $x \succ_i y$;  voter $i$ then has a strictly profitable deviation by voting to reject $y$. 
Second, suppose that $u_A(y) < V^{\text{as}}_A (x)$. Then there exists some $z \in M^{\text{s}}(x)$ such that $z\succ_A y$. The agenda setter then has a strictly profitable deviation by proposing $z$ instead of $y$, as every policy in $M^{\text{s}}(x)$ must be accepted by a majority of voters in every equilibrium. In either case, we obtain the contradiction that $\sigma$ is  not an equilibrium. 

For the ``only if'' direction, suppose that $y \in \Phi^\text{or}(x)$. Consider the following pure strategy profile in the one-round game with initial default $x$: the agenda setter proposes $y$ and each voter $i \in N$ votes to accept a proposal $z$ if and only if either (i) $z \succ_i x$ or (ii) $z \sim_i x$ and $z = y$. By construction, no voter has a profitable deviation; the agenda setter's payoff from proposing $y$ is $u_A(y) \geq V^{\text{as}}_A(x)$, while her payoff from any other proposal is bounded above by $\max\{u_A(x), \sup_{z \in M^{\text{s}}(x)} u_A(z) \} \leq  V^{\text{as}}_A(x)$, so that she also has no profitable deviations. Therefore, this strategy profile is an equilibrium. As every pure-strategy equilibrium of any one-round game is \hyperlink{NonCapricious}{Non-Capricious}, this strategy profile is a \hyperlink{NonCapricious}{Non-Capricious} equilibrium inducing outcome $y$.
\end{proof}

\medskip\noindent\textbf{Step 2: Non-Capricious Equilibrium Outcomes.} We now characterize non-capricious equilibrium outcomes in general collective choice problems. Let $\Sigma^\text{NC}(x^0,T)$ denote the set of non-capricious equilibria of the game with $T$ rounds and initial default $x^0$. Let $g_T^\sigma(x^0) \in X$ denote the outcome induced by equilibrium $\sigma \in \Sigma^\text{NC}(x^0,T)$, and $G_T(x^0) \equiv \bigcup_{\sigma \in \Sigma(x^0,T)} \left\{g^\sigma_{T}(x^0)\right\}$ denote those across all non-capricious equilibria. 

We characterize outcomes for all equilibria using the $\Phi^{\text{or}}$ operator. We say that $\hat\phi:X\rightarrow X$ is a \emph{selection} of $\Phi^\text{or}$ if $\hat\phi(x)\in \Phi^\text{or}(x)$ for every $x\in X$; we denote selections by $\hat\phi(\cdot)\in\Phi^\text{or}(\cdot)$.

\begin{lemma}\label{Lemma-NC-full2}
For any collective choice problem $\CCP$, $x^0 \in X$, and $T \in \mathbb{N}$, the following hold:
\begin{enumerate}[label={\normalfont (\alph*)}]
    \item For any selection $\hat{\phi}(\cdot) \in \Phi^\text{or} (\cdot)$, there exists a \hyperlink{NonCapricious}{Non-Capricious} equilibrium $\sigma \in \Sigma(x^0,T)$ inducing the outcome $g_{T}^\sigma(x^0) = \hat{\phi}^T(x^0)$.
  %
%
    \item For any \hyperlink{NonCapricious}{Non-Capricious} equilibrium $\sigma \in \Sigma(x^0,T)$, there exists a collection $\{\hat{\phi}_t (\cdot) \}_{\tau=1}^T$ of selections $\hat{\phi}_t (\cdot) \in \Phi^{\text{or}}(\cdot)$ such that the equilibrium outcome is given by 
    \[
    g_T^\sigma(x^0) = \left[\hat{\phi}_1 \circ \hat{\phi}_{2} \circ \cdots \circ \hat{\phi}_T \right](x^0)
    \]
    and, for every $x \in X$, we have $\hat{\phi}_t (x) \sim_A \hat{\phi}_{T} (x)$ for all $1 \leq t \leq T$.
    \end{enumerate}
\end{lemma}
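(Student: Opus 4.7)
I would prove both parts by induction on $T$, with \Cref{Lemma-General1}(b)---which identifies $\Phi^\text{or}(x)$ with the set of outcomes of non-capricious one-round equilibria from default $x$---as the anchor.

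For part (a), the case $T = 1$ follows immediately from \Cref{Lemma-General1}(b). For the inductive step, I would paste a first round onto a non-capricious $(T-1)$-round equilibrium from the IH: in round 1 the agenda setter proposes $\hat{\phi}(x^0)$, and after either outcome the players follow a $(T-1)$-round non-capricious equilibrium inducing continuation outcomes $\hat{\phi}^{T-1}(\cdot)$. Voters in round 1 vote on any proposal $y$ by comparing continuation outcomes $\hat{\phi}^{T-1}(y)$ and $\hat{\phi}^{T-1}(x^0)$, using a deterministic tie-breaking rule indexed solely by the ordered pair of continuation outcomes---which secures \Cref{defi:NC}(b). The on-path proposal passes because $\hat{\phi}^T(x^0) = \hat{\phi}(\hat{\phi}^{T-1}(x^0)) \in \Phi^\text{or}(\hat{\phi}^{T-1}(x^0)) \subseteq M^\text{w}(\hat{\phi}^{T-1}(x^0))$, so a weak majority prefers the acceptance continuation. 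The main obstacle is designing the tie-breaking rule so that the on-path proposal passes while no deviation $y$ is profitable; in particular, any deviation yielding $\hat{\phi}^{T-1}(y) \in M^\text{w} \setminus M^{\text{as}}$ with $u_A$ strictly above $u_A(\hat{\phi}^T(x^0))$ must be deterred through appropriate rejection tie-breaking. Resolving this will require turning the context-dependent tie-breaking used in the proof of \Cref{Lemma-General1}(b) into a globally consistent assignment on ordered pairs of continuation outcomes.

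For part (b), given a non-capricious equilibrium $\sigma \in \Sigma^\text{NC}(x^0,T)$, \Cref{defi:NC}(a) makes the continuation outcomes $F_k(x)$ (outcome of the subgame with $k$ rounds remaining and default $x$) deterministic. I would construct the selections by producing a $\Phi^\text{or}$-chain from $x^0$ to $F_T(x^0)$: set $w_T = x^0$, $w_0 = F_T(x^0)$, and recursively pick intermediate states $w_{T-1}, \ldots, w_1$ so that $w_{k-1} \in \Phi^\text{or}(w_k)$ for each $k$. Subgame perfection applied to the ``effective'' one-round subproblem at each round, combined with \Cref{Lemma-General1}(b), should yield such a chain along the equilibrium path. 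Each $\hat{\phi}_t$ is then extended from $w_t \mapsto w_{t-1}$ to a selection of $\Phi^\text{or}$ on all of $X$, giving $F_T(x^0) = (\hat{\phi}_1 \circ \cdots \circ \hat{\phi}_T)(x^0)$ by telescoping. The main obstacle will be establishing the off-path indifference condition $\hat{\phi}_t(x) \sim_A \hat{\phi}_T(x)$ for every $x \in X$; doing so requires the extension to match $\hat{\phi}_T$ pointwise in $u_A$-value, which in turn relies on showing that the agenda setter's equilibrium value in any effective one-round subproblem equals $V_A^\text{or}(x) := \max_{z \in \Phi^\text{or}(x)} u_A(z)$ uniformly in $t$---a consequence of the stationary Markovian structure imposed by \Cref{defi:NC}(a).
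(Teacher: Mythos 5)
Your overall architecture matches the paper's (an explicit construction for part (a); a backward-induction extraction of $\Phi^\text{or}$-selections for part (b)), but both of the obstacles you flag are exactly where the real content lies, and your proposed resolutions are either missing or incorrect. For part (a), the paper does not paste rounds inductively; it writes down one pure Markovian profile for the whole game: the agenda setter always proposes $\hat{\phi}(x)$ at default $x$, and voter $i$ approves proposal $y$ in round $t$ iff $\hat{\phi}^{T-t}(y)\succ_i \hat{\phi}^{T-t}(x^{t-1})$, or $\hat{\phi}^{T-t}(y)\sim_i \hat{\phi}^{T-t}(x^{t-1})$ \emph{and} $\hat{\phi}^{T-t}(y)=\hat{\phi}^{T-t+1}(x^{t-1})$. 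This is the ``globally consistent assignment'' you say is needed but do not supply: ties between acceptance continuation $a$ and rejection continuation $r$ are broken for the proposal precisely when $a=\hat{\phi}(r)$, which depends only on the ordered pair $(a,r)$ and hence satisfies \Cref{defi:NC}(b) across \emph{all} rounds simultaneously (your round-by-round pasting risks resolving the same outcome pair differently in round $1$ and in the inherited continuation equilibrium, violating \Cref{defi:NC}(b)). With this rule the set of inducible continuation outcomes at rejection outcome $\omega$ is $M^\text{s}(\omega)\cup\{\hat{\phi}(\omega),\omega\}\subseteq M^\text{as}(\omega)$, and since $u_A(\hat{\phi}(\omega))\geq V^\text{as}_A(\omega)$ by definition of $\Phi^\text{or}$, the deviations you worry about (proposals whose continuation lies in $M^\text{w}\setminus M^\text{as}$ with higher $u_A$) simply do not pass, so no extra deterrence argument is needed.

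For part (b) there are two genuine gaps. First, you cannot invoke \Cref{Lemma-General1}(b) on the ``effective one-round subproblem'' at round $t<T$: there the agenda setter can only induce continuation outcomes that are realizable in round $t+1$, a strict subset of $X$ in general, so it is not immediate that her optimum within that restricted set is a $\Phi^\text{or}$-step. The paper closes this by proving (i) that \Cref{defi:NC}(b) forces the \emph{same} acceptance correspondence $M^\sigma$ (with $M^\text{s}(x)\cup\{\hat{\phi}_T(x),x\}\subseteq M^\sigma(x)\subseteq M^\text{w}(x)$) to govern voting in every round as a function of continuation-outcome pairs, (ii) a nesting property of continuation-outcome sets across rounds, and (iii) an inductive invariant guaranteeing that at each round some achievable continuation lies in $\Phi^\text{or}(\omega)\cap M^\sigma(\omega)$ and is weakly $\succcurlyeq_A \hat{\phi}_T(\omega)$. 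Second, your route to the indifference condition is wrong as stated: the agenda setter's value at default $x$ in a given non-capricious equilibrium is $\max_{z\in M^\sigma(x)}u_A(z)=u_A(\hat{\phi}_T(x))$, which depends on that equilibrium's tie-breaking and is in general strictly below $\max_{z\in\Phi^\text{or}(x)}u_A(z)$; moreover the cross-round consistency comes from \Cref{defi:NC}(b), not the Markov property in \Cref{defi:NC}(a). The correct argument is a deviation at round $T$: each $\hat{\phi}_t(x)$ can be taken in $M^\sigma(x)$, so if $\hat{\phi}_t(x)\succ_A\hat{\phi}_T(x)$ the agenda setter would profitably propose $\hat{\phi}_t(x)$ in the final round at default $x$ (and off the relevant set one simply defines $\hat{\phi}_t:=\hat{\phi}_T$), which yields $\hat{\phi}_t(x)\sim_A\hat{\phi}_T(x)$.
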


To interpret \Cref{Lemma-NC-full2}, observe that in the special case where \hyperlink{GFA}{Generic Finite Alternatives} holds, it reduces to the characterization from \Cref{lemma:structural}, viz., the unique equilibrium outcome of the $T$-round game with initial default $x^0$ is $\phi^T(x^0)$. The more complicated statement here reflects the fact that, in settings with indifference, (i) both voters and the agenda setter can break ties differently across non-capricious equilibria and (ii) the agenda setter can break ties differently across rounds in a given non-capricious equilibrium. We prove part (a) by construction and part (b) using a backward induction argument that leverages non-capriciousness; the proof is in \Cref{Section-ProofLemmaNCfull2}.

The primary import of \Cref{Lemma-NC-full2} is that it implies bounds on the sets of outcomes and agenda setter payoffs across all non-capricious equilibria and initial defaults as the number of rounds becomes large.

 \begin{lemma}\label{Lemma-NC-full}
For any collective choice problem $\CCP$, the following hold: 
    \begin{itemize}
    \item[(a)] For every $x^0 \in X$ and $T \in \mathbb{N}$, we have
    \[
    \mathcal{E} \subseteq \bigcup_{x^0 \in X} G_{T+1}(x^0) \subseteq \bigcup_{x^0 \in X} G_T(x^0).
    \]
\item[(b)] For every $\delta>0$, there exists some $T_\delta \in \mathbb{N}$ such that:
\[
\text{If $T \geq T_\delta$, then $u_A(x) \geq \min_{y \in \mathcal{E}}u_A(y) - \delta$ for all $x \in \bigcup_{x^0 \in X} G_T(x^0)$.}\footnote{Note that $\min_{y \in \mathcal{E}}u_A(y)$ is well-defined because $u_A$ is continuous by assumption and $\mathcal{E}$ is closed, as the definition of improvability (\Cref{defi:Improv}) and continuity of players' preferences imply that the set $X \backslash \mathcal{E}$ of improvable policies is open. \label{footnote-E-closed}}
\]
\end{itemize}
\end{lemma}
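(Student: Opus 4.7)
The plan is to combine the characterization of non-capricious equilibrium outcomes in \Cref{Lemma-NC-full2} with a uniform-improvement estimate in the spirit of \Cref{lemma:uniform-improvement}. Part (a) splits naturally into two inclusions. For $\mathcal{E}\subseteq\bigcup_{x^0}G_{T+1}(x^0)$, I take any $x^*\in\mathcal{E}$: by \Cref{Lemma-General1}(a), $x^*\in\Phi^{\text{or}}(x^*)$, so I pick a selection $\hat\phi(\cdot)\in\Phi^{\text{or}}(\cdot)$ satisfying $\hat\phi(x^*)=x^*$ and apply \Cref{Lemma-NC-full2}(a) to realize $x^*=\hat\phi^{T+1}(x^*)$ as the outcome of a non-capricious equilibrium of the $(T+1)$-round game from default $x^*$. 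For $\bigcup_{x^0}G_{T+1}(x^0)\subseteq\bigcup_{x^0}G_{T}(x^0)$, given $y\in G_{T+1}(x^0)$ realized by $\sigma\in\Sigma^{\text{NC}}(x^0,T+1)$, I pass to the round-$1$ continuation: by Markovianity (\Cref{defi:NC}(a)), the continuation outcome depends only on the default $x'$ that prevails after round 1 and the $T$ rounds remaining, so the continuation strategy profile is itself a non-capricious equilibrium of the $T$-round game from $x'$ with outcome $y$. Non-capriciousness is inherited because both conditions in \Cref{defi:NC} are global restrictions that automatically restrict every subgame.

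For part (b), I would introduce the threshold $\bar u^*_A:=\min_{y\in\mathcal{E}}u_A(y)$ (well-defined because $\mathcal{E}$ is closed and contains the nonempty set $X_A^*$), the compact sublevel set $\bar\Gamma_\delta:=\{x\in X:u_A(x)\leq\bar u^*_A-\delta\}$, and the ``slack'' function $\bar\eta(x):=V^{\text{as}}_A(x)-u_A(x)$. The core estimate is $\eta_\delta:=\min_{x\in\bar\Gamma_\delta}\bar\eta(x)>0$, which I would establish in three steps: (i) $\bar\eta\geq 0$ on $X$ since $x\in M^{\text{as}}(x)$, and $\bar\eta(x)=0$ exactly when $x\in\mathcal{E}$ by \Cref{Lemma-General1}(a); (ii) $V^{\text{as}}_A$ is lower semi-continuous, which I derive by first verifying that $M^{\text{as}}$ is lower hemicontinuous---every $y\in M^{\text{as}}(x)$ is either $x$ itself or the limit of some sequence in $M^{\text{s}}(x)$, each of which continues to lie in $M^{\text{s}}(x_n)$ for sufficiently large $n$ by continuity of preferences---and then invoking Berge's theorem; and (iii) attainment of a positive minimum of the lower semi-continuous function $\bar\eta$ on the compact set $\bar\Gamma_\delta$, which is disjoint from $\mathcal{E}$ by construction.

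The conclusion would then follow by a telescoping argument akin to the proof of \Cref{Theorem-epsilongrid}. Suppose $y\in G_T(x^0)$ lies in $\bar\Gamma_\delta$; applying \Cref{Lemma-NC-full2}(b), write $y=[\hat\phi_1\circ\cdots\circ\hat\phi_T](x^0)$ with selections satisfying the greedy property $\hat\phi_t(\cdot)\sim_A\hat\phi_T(\cdot)$. Setting $x^t:=\hat\phi_{T-t+1}(x^{t-1})$ so that $x^T=y$, the greedy property together with $\hat\phi_T(x^{t-1})\in\Phi^{\text{or}}(x^{t-1})$ forces $u_A(x^t)=u_A(\hat\phi_T(x^{t-1}))\geq V^{\text{as}}_A(x^{t-1})$, hence $u_A(x^t)-u_A(x^{t-1})\geq\bar\eta(x^{t-1})$. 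Because $u_A$ is nondecreasing along the chain and $u_A(x^T)\leq\bar u^*_A-\delta$, every $x^t$ lies in $\bar\Gamma_\delta$, so each increment is at least $\eta_\delta$. Summing yields $u^*_A-\min_{z\in X}u_A(z)\geq T\eta_\delta$, which fails once $T\geq T_\delta:=\lceil(u^*_A-\min_{z\in X}u_A(z))/\eta_\delta\rceil+1$, delivering the required contradiction.

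The main obstacle is step (ii): the lower semi-continuity of $V^{\text{as}}_A$. The subtlety is that $M^{\text{as}}$ is defined through the closure of the strict majority preference set, so the argument must carefully handle the interplay between indifference and strict preference---the same delicacy that makes $\Phi^{\text{or}}$ set-valued rather than single-valued as the $\phi$ operator was under \hyperlink{GFA}{Generic Finite Alternatives}. Once this semi-continuity is secured, the remaining telescoping mirrors the discrete argument in the proof of \Cref{Theorem-epsilongrid}.
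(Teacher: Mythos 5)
Your proposal is correct, and part (a) coincides with the paper's own argument: the first inclusion uses \Cref{Lemma-General1}(a) together with the constructive existence result \Cref{Lemma-NC-full2}(a) applied to a selection fixing the unimprovable policy, and the second inclusion passes to the round-$2$ subgame, exactly as the paper does via \Cref{Obs-NC3} and the observation that continuations of non-capricious equilibria are non-capricious equilibria of the shorter game. Part (b) also shares the paper's skeleton---write any non-capricious outcome as a $T$-fold composition of $\Phi^{\text{or}}$-selections via \Cref{Lemma-NC-full2}(b), note that $u_A$ is nondecreasing along the induced chain of continuation outcomes, and telescope against $\Delta:=u^*_A-\min_{x\in X}u_A(x)$ to obtain a contradiction for large $T$---but you reach the key uniform bound by a genuinely different route. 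The paper proves a variant of the uniform improvement lemma (\Cref{lemma:uniform-improvement-2}), working with the joint-improvement sets $Q(x,\eta)$ on the compact set $\Upsilon_\delta$ and the lower semi-continuity of the map $\eta^*$, and then converts this into the inequality $V^{\text{as}}_A(y)-u_A(y)\geq\eta_\delta$ on $\Upsilon_\delta$ (the paper's \eqref{Equation-UnifImpFinal}); you instead bound the slack $V^{\text{as}}_A-u_A$ directly, deriving lower semi-continuity of $V^{\text{as}}_A$ from lower hemicontinuity of $M^{\text{as}}$ plus Berge and minimizing over your compact set $\bar\Gamma_\delta$, which is disjoint from $\mathcal{E}$ because $\bar u^*_A=\min_{y\in\mathcal{E}}u_A(y)$. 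Both routes rest on the same compactness-plus-semicontinuity principle and deliver the identical inequality that drives the telescoping; the paper's version recycles the $Q(x,\eta)$ machinery already built for \Cref{Theorem-epsilongrid}, while yours is somewhat more direct but shifts the technical burden to the lower hemicontinuity of $M^{\text{as}}$---your sketch of that step (a strict-majority improvement of $x$ remains one for nearby defaults, then diagonalize along the approximating sequence in $M^{\text{s}}(x)$) is sound but should be written out in full. Two cosmetic points: the greedy property $\hat{\phi}_t(\cdot)\sim_A\hat{\phi}_T(\cdot)$ is not needed for $u_A(x^t)\geq V^{\text{as}}_A(x^{t-1})$, which already follows from $x^t\in\Phi^{\text{or}}(x^{t-1})$; and your $T_\delta$ exceeds the paper's by one, which is immaterial.
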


\Cref{Lemma-NC-full}(a) establishes that the set of \hyperlink{NonCapricious}{Non-Capricious} equilibrium outcomes---across all such equilibria and all initial defaults---converges monotonically downward to some set $\mathcal{G}^\text{NC}_\infty \supseteq \mathcal{E}$. \Cref{Lemma-NC-full}(b) further shows that $\mathcal{G}^\text{NC}_\infty \subseteq \{x \in X : u_A(x) \geq \min_{y \in \mathcal{E}}u_A(y) \}$. Together, these facts imply that the agenda setter's minimal payoff across all non-capricious equilibria is precisely $\min_{y \in \mathcal{E}} u_A(y)$ in the $T \to \infty$ limit. We prove \Cref{Lemma-NC-full} in \Cref{Section-ProofLemmaNCfull} below.

\medskip\noindent\textbf{Step 3: Main Argument for \Cref{Theorem-Capricious}.} \Cref{Theorem-Capricious}(a) follows immediately from the existence claim in \Cref{Lemma-NC-full2}(a).  We now use \Cref{Lemma-NC-full} to prove \Cref{Theorem-Capricious}(b). 

First, we show that manipulability is sufficient for approximate dictatorial power. Let $\CCP$ be a \hyperlink{Definition:Manipulable}{Manipulable} collective choice problem. Then $\mathcal{E} = X^*_A$ and $\min_{y\in \mathcal{E}} u_A(y) = u^*_A$. Let $\delta>0$ be given. \Cref{Lemma-NC-full}(b) implies that there exists some $T_\delta \in \mathbb{N}$ such that the agenda setter's payoff is at least $u^*_A - \delta$ in every \hyperlink{NonCapricious}{Non-Capricious} equilibrium of any game with $T \geq T_\delta$ rounds, regardless of the initial default.

Next, we show that manipulability is necessary for approximate dictatorial power. Let $\CCP$ be a collective choice problem that is not \hyperlink{Definition:Manipulable}{Manipulable}. Then there exists some $x \in \mathcal{E} \backslash X^*_A$ and $\delta>0$ such that $u_A(x) < u^*_A - \delta$. Because $x \in \mathcal{E}$, \Cref{Lemma-NC-full}(a) implies that, given any $T \in \mathbb{N}$, there exists an initial default $x^0 \in X$ and \hyperlink{NonCapricious}{Non-Capricious} equilibrium $\sigma \in \Sigma^\text{NC}(x^0,T)$ such that the outcome is $g_{T}^{\sigma} (x^0) = x$; in fact, \Cref{Lemma-General1}(a) and \Cref{Lemma-NC-full2} together imply we can always pick the initial default to be $x^0 = x$. Thus, the agenda setter does not have approximate dictatorial power. \qed

\subsubsection{Proof of \Cref{Lemma-NC-full2} on p. \pageref{Lemma-NC-full2}}\label{Section-ProofLemmaNCfull2}

Throughout this section, we take $x^0 \in X$ and $T \in \mathbb{N}$ as given and consider the $T$-round game with initial default $x^0$. For any round $t\in \{1,\ldots,T\}$ and prevailing default $x^{t-1} \in X$, let $\mathcal{H}^t(x^{t-1})$ denote the set of round-$t$ histories consistent with this default. For any strategy profile $\sigma$ that satisfies \Cref{defi:NC}(a), recall that $g_T^\sigma(x^0) \in X$ is the induced outcome starting from the initial history; correspondingly, for each $t \in \{2, \dots , T\}$ and $x^{t-1}\in X$, let $g^\sigma_{T,x^0}(x^{t-1} \mid t) \in X$ denote the induced continuation outcome if $x^{t-1}$ is the prevailing default in round $t$. To simplify some statements, we also extend this notation to the final round by letting $g^\sigma_{T,x^0}(x^{T} \mid T+1) \equiv x^T$. Finally, let $G_{T,x^0}^\sigma (t) \equiv \bigcup_{x^{t-1} \in X} \left\{g^\sigma_{T,x^0}(x^{t-1} \mid t)\right\}$ denote the set of continuation outcomes arising across all round-$t$ subgames.

\medskip\noindent\textbf{Proof of Part (a).} Let a selection $\hat{\phi}(\cdot) \in \Phi^\text{or} (\cdot)$ be given. We construct a pure Markovian strategy profile $\sigma$ in the $T$-round game with initial default $x^0$ as follows:
\begin{itemize}[noitemsep]
    \item The agenda setter always proposes $\hat{\phi}(x)$ when the prevailing default is $x$.
    \item Each voter $i \in N$ votes to approve a proposal $y$ in round $t$ when the prevailing default is $x^{t-1}$ if and only if either
    \begin{itemize}[nolistsep]
        \item[(i)] $\hat{\phi}^{T-t}(y) \succ_i \hat{\phi}^{T-t}(x^{t-1})$, or
        \item[(ii)] $\hat{\phi}^{T-t}(y) \sim_i \hat{\phi}^{T-t}(x^{t-1})$ and $\hat{\phi}^{T-t}(y) = \hat{\phi}^{T-t+1}(x^{t-1})$.
    \end{itemize}
\end{itemize}
%
We claim that $\sigma$ is a non-capricious equilibrium. 

First, observe that $\sigma$ satisfies \Cref{defi:NC}(a) by construction; it induces the desired outcome $g^\sigma_T(x^0) = \hat{\phi}^T(x^0)$ and the continuation outcomes $g^\sigma_{T,x^0}(x^{t-1} \mid t) = \hat{\phi}^{T-t+1}(x^{t-1})$. Second, observe that $\sigma$ satisfies \Cref{defi:NC}(b) also by construction: at any round-$t$ history $h^t \in \mathcal{H}^t(x^{t-1})$, each voter $i \in N$ votes to approve a proposal $y$ if and only if either
\begin{itemize}[noitemsep]
    \item[(i$^*$)] voter $i$ strictly prefers $g^\sigma_{T,x^0}(y \mid t+1)$, the continuation outcome from approval of $y$, over $g^\sigma_{T,x^0}(x^{t-1} \mid t+1)$, the continuation outcome from rejection of $y$; or
    \item[(ii$^*$)] voter $i$ is indifferent between these continuation outcomes and $g^\sigma_{T,x^0}(y \mid t+1) = \hat{\phi}( g^\sigma_{T,x^0}(x^{t-1} \mid t+1) )$.
\end{itemize}
\Cref{defi:NC}(b) is satisfied because the tie-breaking rule in (ii$^*$) depends only on the continuation outcomes conditional on approval and rejection. 

Finally, we claim that $\sigma$ is an equilibrium, and hence satisfies \Cref{defi:NC}. Clearly, no voter has a strictly profitable deviation, so it suffices to consider the agenda setter's incentives. Let $x^{t-1} \in X$ and a round-$t$ history $h^t \in \mathcal{H}^t(x^{t-1})$ be given and let $\omega \equiv g_{T,x^0}^\sigma(x^{t-1}|t+1)$.  By construction, a proposal $y$ passes if and only if $g^\sigma_{T,x^0}(y \mid t+1) \in M^\text{s}(\omega) \cup \{\hat{\phi}(\omega)\}$. Thus, the agenda setter can induce all and only continuation outcomes $z \in M^\text{s}(\omega) \cup \{\hat{\phi}(\omega), \omega\}$, where $\omega$ is achieved by proposing any $y$ that does not pass. Because $\hat{\phi}(\omega) \in \Phi^\text{or}(\omega)$ implies that $\hat{\phi}(\omega)$ is optimal for the agenda setter within $M^\text{as}(\omega) \supseteq M^\text{s}(\omega) \cup \{\hat{\phi}(\omega), \omega\}$, it follows that any proposal $y$ for which $g^\sigma_{T,x^0}(y \mid t+1) = \hat{\phi}(\omega)$ is a best response for the agenda setter. Therefore, observing that $g^\sigma_{T,x^0}(\hat{\phi}(x^{t-1}) \mid t+1) = \hat{\phi}^{T-t+1}(x^{t-1}) = \hat{\phi}(\omega)$ completes the proof.

\medskip\noindent\textbf{Proof of Part (b).} Let a non-capricious equilibrium $\sigma \in \Sigma^\text{NC}(x^0,T)$ be given. We establish the existence of the desired collection $\{\hat{\phi}_t\}_{t=1}^T$ of selections $\hat{\phi}_t (\cdot) \in \Phi^\text{or}(\cdot)$ through a series of claims. The first claim records useful properties of continuation play and outcomes in the final round, $t=T$.

\begin{claim}\label{Obs-NC1}
There exists a selection $\hat{\phi}_T(\cdot) \in \Phi^{\text{or}}(\cdot)$ and an acceptance correspondence $M^\sigma : X \rightrightarrows X$ with the following properties:
\begin{enumerate}[label={\normalfont (\alph*)}]
    \item For every $x \in X$, $M^\text{s}(x) \cup\{\hat{\phi}_T(x), x \}\subseteq M^\sigma (x) \subseteq M^\text{w}(x)$.
    \item  For every $x^{T-1} \in X$ and round-$T$ history $h^T \in \mathcal{H}^T(x^{T-1})$, a proposal $y$ such that $y\neq x$ is accepted if and only if $y \in M^\sigma (x^{T-1})$.
    \item For every $x^{T-1} \in X$, $\hat{\phi}_T(x^{T-1}) \in \argmax_{z \in M^\sigma(x^{T-1})} u_A(z)$.
    \item For every $x^{T-1} \in X$,  $g_{T,x^0}^\sigma (x^{T-1} \mid T) = \hat{\phi}_T(x^{T-1})$.
\end{enumerate}
\end{claim}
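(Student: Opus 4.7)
The plan is to read off the acceptance correspondence $M^\sigma$ and the selection $\hat{\phi}_T$ directly from the final-round behavior prescribed by $\sigma$, using non-capriciousness and sequential rationality to verify the four claimed properties.

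I would first show that the set of round-$T$ proposals that pass depends only on the prevailing default, not on the full history. Fix any default $x \in X$ and two round-$T$ histories $h, h' \in \mathcal{H}^T(x)$. If some proposal $y \neq x$ were accepted at $h$ but rejected at $h'$, the continuation outcomes under $\sigma$ would be $y$ and $x$ respectively, contradicting \Cref{defi:NC}(a), which requires the continuation-outcome mapping to depend only on the prevailing default. Hence I may unambiguously define $M^\sigma(x)$ to be the set of proposals $y \neq x$ accepted when the default is $x$ in the final round, and append $x$ itself by convention. Property (b) holds by construction.

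For (a), the inclusion $M^\sigma(x) \subseteq M^\text{w}(x)$ follows from as-if pivotal voting: any voter $i$ with $x \succ_i y$ must vote against $y$, so $y$ can pass only if a majority weakly prefers $y$ to $x$. Conversely, if $y \succ_M x$, a strict majority strictly prefers $y$ and votes yes, forcing $y$ to pass; this gives $M^\text{s}(x) \subseteq M^\sigma(x)$. Now set $\hat{\phi}_T(x) := g^\sigma_{T,x^0}(x \mid T)$, which makes (d) hold by construction. Since the agenda setter can induce any $y \in M^\sigma(x) \setminus \{x\}$ by proposing $y$, and can induce $x$ by proposing anything outside $M^\sigma(x) \setminus \{x\}$, the set of achievable continuation outcomes is exactly $M^\sigma(x)$; sequential rationality then forces $\hat{\phi}_T(x) \in \argmax_{z \in M^\sigma(x)} u_A(z)$, which is (c), and also shows $\hat{\phi}_T(x) \in M^\sigma(x)$, completing (a).

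It remains to verify that $\hat{\phi}_T(\cdot)$ is a selection of $\Phi^\text{or}(\cdot)$. By (a), $\hat{\phi}_T(x) \in M^\text{w}(x)$. Because $M^\text{s}(x) \cup \{x\} \subseteq M^\sigma(x)$ and $u_A$ is continuous, $u_A(\hat{\phi}_T(x)) \geq \max\{\sup_{y \in M^\text{s}(x)} u_A(y),\, u_A(x)\} = \max_{y \in M^\text{as}(x)} u_A(y) = V^\text{as}_A(x)$, where the first equality uses continuity of $u_A$ to replace $M^\text{s}(x)$ with its closure $\mathrm{cl}[M^\text{s}(x)]$. Thus $\hat{\phi}_T(x) \in \Phi^\text{or}(x)$ per \eqref{Equation-PhiOR}. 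The only delicate step is the first one---showing that acceptance in the final round is a function of the default alone, which hinges on \Cref{defi:NC}(a); the rest is an exercise in translating equilibrium conditions into the required algebraic properties.
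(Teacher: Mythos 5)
Your construction follows the same route as the paper (read $M^\sigma$ off final-round acceptance, set $\hat{\phi}_T(x) := g^\sigma_{T,x^0}(x \mid T)$, and verify (a)--(d) and $\hat{\phi}_T(\cdot)\in\Phi^{\text{or}}(\cdot)$ via the one-round optimality logic and continuity of $u_A$), and everything after your first step is correct---indeed your inline argument that $u_A(\hat{\phi}_T(x)) \geq V^{\text{as}}_A(x)$ is just the paper's \Cref{Lemma-General1}(b) unpacked. But the first step, which you rightly identify as the delicate one, is justified incorrectly. You claim that if a proposal $y \neq x$ were accepted at $h$ but rejected at $h'$ (both round-$T$ histories with default $x$), then ``the continuation outcomes under $\sigma$ would be $y$ and $x$ respectively,'' contradicting \Cref{defi:NC}(a). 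That inference is false for proposals the agenda setter does not actually make at those histories: in the paper, a history $h^T$ is a round-beginning history, and the continuation outcome at $h^T$ is induced by the proposal $\sigma_A(h^T)$ actually prescribed there, not by hypothetical proposals. If $y$ is never an optimal proposal at either history, voters' treatment of $y$ leaves the continuation outcomes at $h$ and $h'$ untouched, so Markovian continuation outcomes impose no constraint. As-if-pivotal voting pins down the votes of every voter with a strict preference between $y$ and $x$, but a voter indifferent between them could, consistently with \Cref{defi:NC}(a), accept $y$ at $h$ and reject it at $h'$, in which case no history-independent $M^\sigma$ satisfying part (b) exists under your argument.

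The missing ingredient is exactly \Cref{defi:NC}(b): in the final round, acceptance of $y$ against default $x$ yields continuation outcome $y$ and rejection yields $x$, so Non-Capriciousness forces each voter who is indifferent between $y$ and $x$ to resolve that tie the same way at every such history--proposal pair; combined with as-if-pivotal voting for non-indifferent voters, this makes the set of yes-votes---and hence passage---depend only on the pair $(x,y)$, which is what licenses the definition of $M^\sigma$. (Where \Cref{defi:NC}(a) genuinely enters is afterwards, to conclude that the optimizer $y(h^T)\in\argmax_{z\in M^\sigma(x)}u_A(z)$ actually chosen is the \emph{same} policy $\hat{\phi}_T(x)$ at every $h^T\in\mathcal{H}^T(x)$ when the argmax is not a singleton---a point your write-up gets for free by defining $\hat{\phi}_T(x)=g^\sigma_{T,x^0}(x\mid T)$, but only once $M^\sigma$ is well defined.) So the statement stands, but your proof as written has a genuine gap at the step that makes $M^\sigma$ well defined, and it is closed by invoking part (b), not part (a), of the Non-Capriciousness definition.
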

\begin{proof}
In the final round $T$, for any proposal $y$ and prevailing default $x^{T-1}$, acceptance of the proposal leads to continuation outcome $y$ and rejection leads to $x^{T-1}$. Because $\sigma$ satisfies \Cref{defi:NC}(b), there exists a correspondence $M^\sigma : X \rightrightarrows X$ such that, for every default $x^{T-1} \in X$ and history $h^T \in \mathcal{H}^T(x^{T-1})$, a proposal $y$ is accepted if and only if $y \in M^\sigma(x^{T-1})$. This establishes part (b).

Let $x^{T-1} \in X$ and $h^T \in \mathcal{H}^T(x^{T-1})$ be given. The fact that the continuation of $\sigma$ in this subgame is an equilibrium thereof implies that $M^\sigma$ satisfies $M^\text{s}(\cdot) \subseteq M^\sigma(\cdot) \subseteq M^\text{w}(\cdot)$ and the continuation outcome, call it $y(h^T)$, satisfies $y(h^T) \in \argmax_{z \in M^\sigma(x^{T-1})} u_A(z)$. Moreover, because this continuation equilibrium is \hyperlink{NonCapricious}{Non-Capricious}, \Cref{Lemma-General1}(b) implies that $y(h^T) \in \Phi^\text{or}(x^{T-1}) \cap M^\sigma(x^{T-1})$. 

Finally, because $\sigma$ satisfies \cref{defi:NC}(a), there exists some $\hat{\phi}(x^{T-1}) \in \Phi^\text{or}(x^{T-1}) \cap M^\sigma(x^{T-1})$ such that $y(h^T) = \hat{\phi}(x^{T-1}) $ for all $h^T \in \mathcal{H}^T(x^{T-1})$ and $x^{T-1} \in X$. This establishes parts (c) and (d). To complete the proof of part (a) and hence the claim, simply note that we may include $x^{T-1} \in M^\sigma(x^{T-1})$ for all $x^{T-1} \in X$ without loss of generality, as both passage and rejection of a proposal $y =  x^{T-1}$ leads to continuation outcome $x^{T-1}$ at every history in $\mathcal{H}^T(x^{T-1})$, and part (b) of the claim only concerns proposals $y \neq x^{T-1}$.
\end{proof}

The next claim uses the \hyperlink{NonCapricious}{Non-Capricious} refinement to show that the majority acceptance correspondence from \Cref{Obs-NC1} also characterizes voter behavior in all rounds, and records a useful implication of this fact.

\begin{claim}\label{Obs-NC2}
For every $1 \leq t \leq T$, default $x^{t-1} \in X$, and round-$t$ history $h^t \in \mathcal{H}^t(x^{t-1})$, the following hold:
\begin{enumerate}[label={\normalfont (\alph*)}]
    \item A proposal $y$ such that $g^\sigma_{T,x^0}(y \mid t+1) \neq g^\sigma_{T,x^0}(x^{t-1} \mid t+1)$ is accepted at $h^t$ if and only if $g^\sigma_{T,x^0}(y \mid t+1) \in M^\sigma \left( g^\sigma_{T,x^0}(x^{t-1} \mid t+1) \right)$.
\item The continuation outcome at $h^t$ satisfies
\[
    g^\sigma_{T,x^0}(x^{t-1} \mid t) \in \argmax\left\{ u_A(z) : z \in M^\sigma \left( g^\sigma_{T,x^0}(x^{t-1} \mid t+1) \right) \bigcap G^\sigma_{T,x^0}(t+1)\right\}. 
    \]
\end{enumerate}
\end{claim}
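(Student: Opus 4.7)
The plan is to prove parts (a) and (b) together, leveraging the non-capriciousness refinement to transfer the final-round characterization from \Cref{Obs-NC1} to every earlier round. For part (a), I would fix $1 \leq t \leq T$, a default $x^{t-1} \in X$, a history $h^t \in \mathcal{H}^t(x^{t-1})$, and a proposal $y$ with distinct continuation outcomes $z_y \equiv g^\sigma_{T,x^0}(y \mid t+1)$ and $z_x \equiv g^\sigma_{T,x^0}(x^{t-1} \mid t+1)$. Since \Cref{defi:NC}(a) ensures the mapping from round-$(t+1)$ defaults to continuation outcomes is deterministic, the situation facing each voter at $h^t$ reduces to a binary choice between $z_y$ (on acceptance) and $z_x$ (on rejection). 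For any voter $i \in N$ with a strict preference between $z_y$ and $z_x$, as-if pivotal voting determines her vote.

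The crucial step is handling voters indifferent between $z_y$ and $z_x$. Here I would invoke \Cref{defi:NC}(b): such a voter's tie-breaking must be the same at every history where approval leads to $z_y$ and rejection leads to $z_x$. I then compare $h^t$ to a round-$T$ history where the prevailing default is $z_x$ and the proposal is $z_y$ (such a history exists in the full game tree regardless of whether it lies on the equilibrium path, since strategies are defined at all histories). At such a round-$T$ history, acceptance yields $z_y$ and rejection yields $z_x$, matching the pair at $h^t$. By \Cref{Obs-NC1}(b), voter $i$'s round-$T$ vote is ``accept'' if and only if $z_y \in M^\sigma(z_x)$, so by non-capriciousness the same holds at $h^t$. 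Aggregating across all voters yields: $y$ is accepted at $h^t$ if and only if $z_y \in M^\sigma(z_x)$, giving part (a).

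For part (b), I would use part (a) to describe the agenda setter's feasible set of continuation outcomes at $h^t$. Proposing any $y$ with $z_y \in M^\sigma(z_x)$ induces continuation outcome $z_y$, while any rejected proposal leaves continuation outcome $z_x$; since $z_x \in M^\sigma(z_x)$ by \Cref{Obs-NC1}(a), the reachable set is exactly $\{g^\sigma_{T,x^0}(y \mid t+1) : y \in X\} \cap M^\sigma(z_x) = M^\sigma(z_x) \cap G^\sigma_{T,x^0}(t+1)$ (where the equality uses that every $w \in G^\sigma_{T,x^0}(t+1)$ is realized as $g^\sigma_{T,x^0}(x' \mid t+1)$ for some $x'$, so proposing $y = x'$ induces $w$). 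Equilibrium optimality for the agenda setter then forces $g^\sigma_{T,x^0}(x^{t-1} \mid t)$ into the claimed $\argmax$.

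The main obstacle is the tie-breaking step: ensuring that the round-$T$ comparison history for non-capriciousness is legitimate even when the default $z_x$ is off-path, and confirming that the pair $(z_y, z_x)$ in round $T$ genuinely realizes continuation outcomes equal to those at $h^t$. This is straightforward because in round $T$ continuation outcomes coincide with the policies themselves, so no further induction is needed; the rest of the argument is routine bookkeeping about feasibility and optimality.
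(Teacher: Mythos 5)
Your proposal is correct and follows essentially the same route as the paper, which derives part (a) from \Cref{Obs-NC1}(b) together with the Non-Capriciousness properties (deterministic Markovian continuation outcomes plus outcome-pair-consistent tie-breaking, with as-if-pivotal voting handling voters with strict preferences), and part (b) from part (a) plus the agenda setter's sequential rationality. The paper states these implications as immediate; your write-up merely spells out the details it leaves implicit, including the comparison to an (off-path) round-$T$ history with default $g^\sigma_{T,x^0}(x^{t-1}\mid t+1)$, which is indeed legitimate since such histories exist in the game tree.
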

\begin{proof}
Part (a) follows directly from \Cref{Obs-NC1}(b) and the fact that $\sigma$ satisfies \Cref{defi:NC}. Part (b) follows directly from part (a) and the fact that $\sigma$ satisfies \Cref{defi:NC}.
\end{proof}

The next claim records the elementary observation that any continuation outcome of a round-$t$ subgame must also be the continuation outcome of some round-$(t+1)$ subgame. 

\begin{claim}\label{Obs-NC3}
For every $1 \leq t \leq T-1$, we have $G^\sigma_{T,x^0}(t) \subseteq G^\sigma_{T,x^0}(t+1)$.
\end{claim}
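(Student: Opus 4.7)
The plan is to show that the claim holds essentially by tracing what happens in a single round-$t$ subgame: any continuation outcome from round $t$ must be realized as the continuation outcome of a round-$(t+1)$ subgame reached after one round of proposal and voting.

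First I would fix an arbitrary $y \in G^\sigma_{T,x^0}(t)$ and, by definition, pick a default $x^{t-1} \in X$ such that $y = g^\sigma_{T,x^0}(x^{t-1}\mid t)$. Then I would pick any round-$t$ history $h^t \in \mathcal{H}^t(x^{t-1})$ and examine its continuation under $\sigma$: the agenda setter proposes some policy $a^t$ (possibly drawn from a mixed action), after which voters accept or reject. If accepted, the round-$(t+1)$ default is $a^t$; if rejected, it remains $x^{t-1}$. In either event, the resulting round-$(t+1)$ subgame lies in $\mathcal{H}^{t+1}(a^t)$ or $\mathcal{H}^{t+1}(x^{t-1})$, and by \Cref{defi:NC}(a) its continuation outcome is either $g^\sigma_{T,x^0}(a^t\mid t+1)$ or $g^\sigma_{T,x^0}(x^{t-1}\mid t+1)$, respectively.

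Next, I would invoke \Cref{defi:NC}(a) once more to conclude that the continuation outcome of $\sigma$ at $h^t$ is deterministic and Markovian, hence equal to the fixed value $y$. Since this continuation outcome must coincide with the continuation outcome of whichever round-$(t+1)$ subgame is realized, we conclude that
\[
y \in \left\{ g^\sigma_{T,x^0}(a^t\mid t+1),\; g^\sigma_{T,x^0}(x^{t-1}\mid t+1)\right\} \subseteq G^\sigma_{T,x^0}(t+1),
\]
which gives the desired inclusion. (If the agenda setter's mixed action in round $t$ places positive probability on several proposals, non-capriciousness forces each of them to induce the same continuation outcome $y$, so the argument is unaffected by the particular choice of $a^t$.)

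There is no real obstacle here: once \Cref{defi:NC}(a) pins down continuation outcomes as functions of the current default and remaining horizon, the inclusion follows from the trivial observation that the round-$t$ continuation outcome is realized at the end of round $t$ through a round-$(t+1)$ subgame indexed by one of two possible defaults. The only care needed is in the bookkeeping when the agenda setter mixes, which non-capriciousness handles.
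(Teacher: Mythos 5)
Your proposal is correct and follows essentially the same argument as the paper: at any round-$t$ history, the deterministic continuation outcome guaranteed by \Cref{defi:NC}(a) must coincide with the continuation outcome of the round-$(t+1)$ subgame reached after the proposal either passes (default $a^t$) or fails (default $x^{t-1}$), so it lies in $G^\sigma_{T,x^0}(t+1)$. The handling of mixing via determinism of the induced outcome mapping is the same bookkeeping the paper relies on implicitly.
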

\begin{proof}
Let the round $t \in \{1, \dots, T-1\}$, default $x^{t-1}\in X$, and history $h^t \in \mathcal{H}^t(x^{t-1})$ be given. By definition, the continuation outcome at $h^t$ is $g^\sigma_{T,x^0}(x^{t-1}\mid t)$. If $\sigma$ specifies that some proposal $a^\sigma(h^t) \in X$ is made and passed with positive probability at $h^t$, then by construction we have $g^\sigma_{T,x^0}(x^{t-1}\mid t) = g^\sigma_{T,x^0}(a^\sigma(h^t)\mid t+1)$. If $\sigma$ specifies that no proposals pass with positive probability at $h^t$, then by construction we have $g^\sigma_{T,x^0}(x^{t-1}\mid t) = g^\sigma_{T,x^0}(x^{t-1} \mid t+1)$. The claim now follows immediately from the definition of $G^\sigma_{T,x^0}(t)$ and $G^\sigma_{T,x^0}(t+1)$.
\end{proof}

The final claim uses \Cref{Obs-NC1,Obs-NC2,Obs-NC3} to characterize outcomes at every history. \Cref{Lemma-NC-full2}(b) is directly implied by this claim. 

\begin{claim}\label{Obs-NC4}
There exists a collection $\{\hat{\phi}_{t}\}_{t=1}^{T}$ of selections $\hat{\phi}_t(\cdot) \in \Phi^{\text{or}}(\cdot) \cap M^\sigma(\cdot)$ such that the following hold:
\begin{enumerate}[noitemsep,label={\normalfont (\alph*)}]
 %
    \item For every $2 \leq t \leq T$, the continuation outcomes satisfy the following: 
    \begin{equation}
   \text{For all $x^{t-1} \in X$, } \  g^\sigma_{T,x^0}(x^{t-1} \mid t) = \left[ \hat{\phi}_{t} \circ \hat{\phi}_{t+1} \circ \cdots \circ \hat{\phi}_T \right] (x^{t-1}).\label{Equation-NC-CO}
    \end{equation}
    Analogously, the equilibrium outcome of the game is $g^\sigma_{T}(x^{0}) = \left[ \hat{\phi}_{1} \circ \hat{\phi}_{2} \circ \cdots \circ \hat{\phi}_T \right] (x^{0})$.
    \item For every $x \in X$ and $1 \leq t \leq T$, we have $\hat{\phi}_t(x) \sim_A \hat{\phi}_T(x)$.
\end{enumerate}
\end{claim}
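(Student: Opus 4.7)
The plan is to proceed by backward induction on $t$, descending from $t = T$ to $t = 1$, with Claims~\ref{Obs-NC1}--\ref{Obs-NC3} as the principal inputs. For the base case $t = T$, take $\hat{\phi}_T$ as provided by Claim~\ref{Obs-NC1}: property (a) at $t = T$ reduces to Claim~\ref{Obs-NC1}(d), and property (b) at $t = T$ is a tautology.

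For the inductive step, suppose $\hat{\phi}_{t+1}, \ldots, \hat{\phi}_T$ have been defined to satisfy both properties at all later rounds. Abbreviating $w^s_x \equiv g^\sigma_{T,x^0}(x \mid s)$, the inductive hypothesis for (a) yields $w^{t+1}_x = [\hat{\phi}_{t+1} \circ \cdots \circ \hat{\phi}_T](x)$ for every $x \in X$, so the composition identity at round $t$ amounts to constructing $\hat{\phi}_t$ with $\hat{\phi}_t(w^{t+1}_x) = w^t_x$ for all $x \in X$. I would define $\hat{\phi}_t$ piecewise: on the image $G^\sigma_{T,x^0}(t+1) = \{w^{t+1}_x : x \in X\}$, set $\hat{\phi}_t(w^{t+1}_x) := w^t_x$; on its complement, set $\hat{\phi}_t(x) := \hat{\phi}_T(x)$. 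On the complement, both the selection property $\hat{\phi}_t(x) \in \Phi^{\text{or}}(x) \cap M^\sigma(x)$ and the indifference condition (b) are immediate from Claim~\ref{Obs-NC1}, so the substantive work concerns only the image $G^\sigma_{T,x^0}(t+1)$.

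On the image, well-definedness when $w^{t+1}_y = w^{t+1}_{y'}$ follows from Claim~\ref{Obs-NC2}(b), which places $w^t_y$ and $w^t_{y'}$ in a common argmax set and thereby assigns them identical agenda-setter utilities; any canonical selection consistent with the non-capricious tie-breaking rule then pins down a single representative. The inclusion $w^t_y \in M^\sigma(w^{t+1}_y)$ is immediate from Claim~\ref{Obs-NC2}(b). The heart of the proof thus lies in establishing both $w^t_y \in \Phi^{\text{or}}(w^{t+1}_y)$ and $u_A(w^t_y) = u_A(\hat{\phi}_T(w^{t+1}_y))$, with the latter's ``$\leq$'' direction following immediately from Claim~\ref{Obs-NC1}(c) since $w^t_y \in M^\sigma(w^{t+1}_y)$.

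The main obstacle is the reverse inequality, which simultaneously delivers the $\Phi^{\text{or}}$-membership: I need to show that the agenda setter's round-$t$ optimization over the restricted set $M^\sigma(w^{t+1}_y) \cap G^\sigma_{T,x^0}(t+1)$ attains a value at least equal to her full one-round optimum over $M^\sigma(w^{t+1}_y)$. The plan is to argue that $G^\sigma_{T,x^0}(t+1)$ is rich enough: for any candidate improvement $z \in M^{\text{as}}(w^{t+1}_y)$, I would exhibit a round-$t$ proposal whose round-$(t+1)$ continuation outcome has $u_A$-value at least $u_A(z)$, by invoking inductive hypothesis (b) at round $t+1$---which gives $\hat{\phi}_{t+1} \sim_A \hat{\phi}_T$ pointwise---and routing through a policy whose iterates under the previously constructed selections preserve the agenda setter's payoff. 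Passage to closures and continuity of $u_A$ then extends the bound to all of $M^{\text{as}}(w^{t+1}_y)$, and non-capriciousness guarantees the constructed proposals are accepted in equilibrium, closing the induction.
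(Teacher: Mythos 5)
Your overall architecture is the same as the paper's: backward induction from $t=T$, defining $\hat{\phi}_t$ on $G^\sigma_{T,x^0}(t+1)$ through the actual continuation outcomes and extending it by $\hat{\phi}_T$ off that set, with the ``$\preccurlyeq_A$'' half of part (b) coming from round-$T$ optimality (\Cref{Obs-NC1}(c)). You also correctly isolate the crux: by \Cref{Obs-NC2}(b) the round-$t$ continuation outcome is an argmax of $u_A$ over the \emph{restricted} set $M^\sigma(w)\cap G^\sigma_{T,x^0}(t+1)$, where $w$ is the default's round-$(t+1)$ continuation outcome, and one must show this restricted problem attains the unrestricted one-round value $V^{\text{as}}_A(w)$, so that the outcome lands in $\Phi^{\text{or}}(w)$.

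Your plan for closing that crux, however, has a genuine gap. You propose to exhibit, for each candidate improvement $z\in M^{\text{as}}(w)$, a proposal whose round-$(t+1)$ continuation outcome has $u_A$-value at least $u_A(z)$, by ``routing through a policy whose iterates under the previously constructed selections preserve the agenda setter's payoff'' together with hypothesis (b). Payoff preservation along $\Phi^{\text{or}}$-iterates is automatic (each application weakly raises $u_A$), but it gives no control over the \emph{acceptance} requirement: by \Cref{Obs-NC2}(a) a proposal passes only if its continuation outcome lies in $M^\sigma(w)$, and since majority preference is intransitive, an iterate such as $[\hat{\phi}_{t+1}\circ\cdots\circ\hat{\phi}_T](a)$ with high $u_A$-value need not be even weakly majority-preferred to $w$. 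Your closing assertion that ``non-capriciousness guarantees the constructed proposals are accepted'' is precisely what is unproven---non-capriciousness disciplines tie-breaking, not which continuation outcomes clear $M^\sigma(w)$. The paper closes this hole by carrying an auxiliary inductive invariant, \eqref{Equation-NC-Feas}: every $w\in G^\sigma_{T,x^0}(t+1)$ admits a witness $z'\in G^\sigma_{T,x^0}(t+1)\cap M^\sigma(w)\cap\Phi^{\text{or}}(w)$ with $z'\succcurlyeq_A\hat{\phi}_T(w)$. The witness is produced by \Cref{Obs-NC3}: since $G^\sigma_{T,x^0}(t+1)\subseteq G^\sigma_{T,x^0}(t+2)$, the policy $w$ is itself a round-$(t+2)$ continuation outcome, so $\hat{\phi}_{t+1}(w)$ is a genuine round-$(t+1)$ continuation outcome---hence in $G^\sigma_{T,x^0}(t+1)$---and the previous inductive step already certifies $\hat{\phi}_{t+1}(w)\in\Phi^{\text{or}}(w)\cap M^\sigma(w)$. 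In other words, a single application of the previously constructed selection, not a chain of payoff-preserving iterates, is the witness; your sketch never uses \Cref{Obs-NC3} in this role, and without it (or an equivalent invariant) the required richness of $G^\sigma_{T,x^0}(t+1)$ inside $M^\sigma(w)$ is not established.
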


\begin{proof}
We prove (a) by backward induction. \Cref{Obs-NC1}(d) establishes the base ($t=T$) case of \eqref{Equation-NC-CO}. By letting $z \equiv \hat{\phi}(x^{T-1})$ and $z' \equiv \hat{\phi}(z)$ for any given $x^{T-1} \in X$, \Cref{Obs-NC1}(d) also establishes the base ($t=T$) case of the following property:
\begin{equation}
    \text{If $z \in G^\sigma_{T,x^0}(t)$, then  $\exists \ z' \in G^\sigma_{T,x^0}(t) \cap M^\sigma(z) \cap \Phi^{\text{or}}(z)$ such that $z'\prefto_A \hat{\phi}_T(z)$.} \label{Equation-NC-Feas}
    \end{equation}
For the inductive step, suppose for a given $\tau \in \{2,\dots, T-1\}$ that (i) there exist selections $\hat{\phi}_t(\cdot) \in \Phi^\text{or}(\cdot)$ satisfying \eqref{Equation-NC-CO} for $t = \tau+1$ and (ii) \eqref{Equation-NC-Feas} holds for $t = \tau +1$. 

We first assert that there exists a selection $\hat{\phi}_\tau(\cdot) \in \Phi^\text{or}(\cdot)$ satisfying \eqref{Equation-NC-CO} for $t=\tau$. Let $x^{\tau-1} \in X$ and $h^\tau \in \mathcal{H}^\tau (x^{\tau-1})$ be given. Let $z \equiv g^\sigma_{T,x^0}(x^{\tau-1}\mid \tau+1) \in G^\sigma_{T,x^0} (\tau+1)$ denote the continuation outcome if $x^{\tau-1}$ remains the default in the next round, $t = \tau+1$. By the inductive hypothesis that \eqref{Equation-NC-Feas} holds for $t = \tau +1$ and \Cref{Obs-NC2}(b), the continuation outcome at $h^\tau$, which is $g^\sigma_{T,x^0}(x^{\tau-1}\mid \tau)$, must satisfy $g^\sigma_{T,x^0}(x^{\tau-1}\mid \tau) = \hat{\phi}_\tau(z)$ for some $\hat{\phi}_\tau(z) \in \Phi^\text{or}(z)\cap M^\sigma(z)$ such that $\hat{\phi}_\tau(z) \prefto_A \hat{\phi}_T(z)$. Now, repeating this logic across all round-$\tau$ histories delivers, for all $x \in  G^\sigma_{T,x^0} (\tau+1)$, the existence of some $\hat{\phi}_\tau(x) \in \Phi^\text{or}(x)\cap M^\sigma(x)$ such that \eqref{Equation-NC-CO} holds for $t=\tau$ and $\hat{\phi}_\tau(x) \prefto_A \hat{\phi}_T(x)$. Since no policy in $X \backslash  G^\sigma_{T,x^0} (\tau+1)$ can be induced as a continuation outcome by any proposal at any round-$\tau$ history, we may arbitrarily assign $\hat{\phi}_\tau (x) \equiv \hat{\phi}_T(x) \in \Phi^{\text{or}}(x) \cap M^\sigma (x)$ for each $x \in X \backslash  G^\sigma_{T,x^0} (\tau+1)$. This results in the desired selection $\hat{\phi}_\tau(\cdot) \in \Phi^\text{or}(\cdot)$, completing the proof of the assertion.

Next, we assert that \eqref{Equation-NC-Feas} holds for $t = \tau$. Let $z \in G_{T,x^0}^\sigma (\tau)$ be given. \Cref{Obs-NC3} implies that $z \in G_{T,x^0}^\sigma (\tau+1)$, so that $z = g^\sigma_{T,x^0}(x\mid \tau+1)$ for some round-$(\tau+1)$ default $x \in X$. Let $z' \equiv g^\sigma_{T,x^0}(x \mid \tau)$ denote the continuation outcome if $x$ is the round-$\tau$ default. By the argument in the preceding paragraph, we have $z' = \hat{\phi}_\tau(z) \in M^\sigma(z) \cap \Phi^\text{or}(z)$ and thus $z' \prefto_A \hat{\phi}_T(z)$. As $z ' \in G_{T,x^0}^\sigma (\tau)$ by construction, the assertion is proved. 

This completes the inductive proof of part (a) for all rounds $t\in \{2,\ldots,T\}$. Repeating the first inductive step above once more establishes it for round $t=1$. 

To prove part (b), note that $\hat{\phi}_t(x) \prefto_A \hat{\phi}_T(x)$ for all $1\leq t \leq T$ and $x \in X$ by construction. Suppose, towards a contradiction, that there exists some $1\leq t\leq T$ and $x \in X$ such that $\hat{\phi}_t(x) \succ_A \hat{\phi}_T(x)$. Then consider any round-$T$ history $h^T(x)$ in which the default is $x^{T-1}=x$. Because $\hat{\phi}_t (x) \in M^\sigma (x)$ by construction, \Cref{Obs-NC1}(b) implies that the agenda setter has a strictly profitable deviation at $h^T$ by proposing $\hat{\phi}_t (x)$ instead of $\hat{\phi}_T(x)$, contradicting that $\sigma$ is an equilibrium.
\end{proof}

\subsubsection{Proof of \Cref{Lemma-NC-full} on p. \pageref{Lemma-NC-full}}\label{Section-ProofLemmaNCfull}

In this section, we use the same notation introduced at the beginning of \Cref{Section-ProofLemmaNCfull}. We prove each part of \Cref{Lemma-NC-full} in turn. 

\medskip\noindent\textbf{Proof of Part (a).} We first show that every unimprovable policy is a \hyperlink{NonCapricious}{Non-Capricious} equilibrium outcome. Let $x \in \mathcal{E}$ be given. \Cref{Lemma-General1}(a) implies that $x \in \Phi^\text{or}(x)$; hence, there exists a selection $\hat{\phi}(\cdot) \in \Phi^\text{or}(\cdot)$ such that $\hat{\phi}(x) = x$. \Cref{Lemma-NC-full2}(a) then implies that, for every $T \in \mathbb{N}$, there exists a \hyperlink{NonCapricious}{Non-Capricious} equilibrium $\sigma \in \Sigma^\text{NC}(x,T)$ with outcome $g^\sigma_T(x) = x$. Thus, $\mathcal{E} \subseteq \bigcup_{x^0 \in X} G_T(x^0)$ for every $T \in \mathbb{N}$. 

Next, we show that the equilibrium outcome sets are decreasing in the number of rounds. Let $T \in \mathbb{N}$, $x^0 \in X$, and $\sigma \in \Sigma^{\text{NC}}(x^0,T)$ be given. By \Cref{Obs-NC3} in \Cref{Section-ProofLemmaNCfull2}, we have $\{g^\sigma_T(x^0)\} = G^\sigma_{T,x^0}(1) \subseteq G^\sigma_{T,x^0}(2)$. As the continuation of $\sigma$ at any round-$2$ history (of the $T$-round game with initial default $x^0$) is a \hyperlink{NonCapricious}{Non-Capricious} in the corresponding $(T-1)$-round subgame, it follows that $ G^\sigma_{T,x^0}(2) \subseteq \bigcup_{y^0 \in X} G_{T-1}(y^0)$. It follows that
\[
\bigcup_{x^0 \in X} G_T(x^0) \subseteq \bigcup_{x^0 \in X} \bigcup_{\sigma \in \Sigma^{\text{NC}}(x^0,T)} G^\sigma_{T,x^0}(2) \subseteq \bigcup_{y^0 \in X} G_{T-1}(y^0),
\]
which completes the proof. 

\medskip\noindent\textbf{Proof of Part (b).} We begin by stating a useful variant of the \emph{uniform improvement lemma} (\Cref{lemma:uniform-improvement}) used in \Cref{appendix-theorem2-full} to prove \Cref{Theorem-epsilongrid}. For each $\delta >0$, let
\[
\Upsilon_\delta \equiv \left\{ x \in X : \min_{y \in \mathcal{E}}u_A(y) \geq u_A(x) + \delta  \right\}
\]
denote the set of policies that are $\delta$-dominated for the agenda setter by \emph{all} unimprovable policies $y \in \mathcal{E}$. Obviously, $\Upsilon_\delta \subseteq X \backslash \mathcal{E}$ for all $\delta>0$. As in \Cref{appendix-theorem2-full}, for each $x \in \policyspace$ and $\eta >0$, define
\begin{align*}
Q(x, \eta):= \big\{ y \in X \mid & u_A(y) \geq u_A(x) + \eta \text{ and}\\
&\text{$\exists$ majority $S \subseteq N$ such that } u_i(y) \geq u_i(x) + \eta \ \ \forall  i \in S \big\} \notag
\end{align*}
to be the set of policies that lead to a utility improvement of at least $\eta$ for some winning coalition. If $Q(x,\eta)\neq \emptyset$, then we say that $x$ is $\eta$-improvable.
\begin{lemma}\label{lemma:uniform-improvement-2}
For any collective choice problem $\CCP$ and every $\delta >0$, there exists $\eta_\delta >0$ such that $Q(x, \eta_\delta) \neq \emptyset$ for all $x \in \Upsilon_\delta$.
\end{lemma}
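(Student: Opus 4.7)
The plan is to adapt the proof of Lemma \ref{lemma:uniform-improvement} essentially verbatim, replacing the role of manipulability with the more elementary observation that $\Upsilon_\delta \subseteq X \setminus \mathcal{E}$. Concretely, I would first establish two preliminary facts: (i) $\Upsilon_\delta$ is compact, because it is the preimage $u_A^{-1}\left( (-\infty, \min_{y\in\mathcal{E}} u_A(y) - \delta ] \right)$ of a closed half-line under the continuous map $u_A$, and $X$ is compact (here $\min_{y \in \mathcal{E}} u_A(y)$ is well-defined by footnote~\ref{footnote-E-closed}); and (ii) $\Upsilon_\delta \cap \mathcal{E} = \emptyset$, since any $x \in \mathcal{E}$ satisfies $u_A(x) \geq \min_{y \in \mathcal{E}} u_A(y)$, contradicting the strict $\delta$-gap in the definition of $\Upsilon_\delta$.

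Fact (ii) means that every $x \in \Upsilon_\delta$ is improvable in the sense of Definition \ref{defi:Improv}: there exists $y \in X$ with $y \succ_A x$ and $y \succ_M x$, i.e., $u_A(y) > u_A(x)$ and $u_i(y) > u_i(x)$ for every $i$ in some majority coalition $S$. Continuity of the utility profile then implies $y \in Q(x, \eta)$ for all sufficiently small $\eta > 0$, so the map
\begin{align*}
\eta^*(x) \;:=\; \sup\{\eta \geq 0 : Q(x, \eta) \neq \emptyset\}
\end{align*}
satisfies $\eta^*(x) > 0$ for every $x \in \Upsilon_\delta$. Moreover, the supremum is attained, because the set over which it is taken is closed in $\eta$ (using continuity of the utilities together with compactness of $X$, which ensures that the defining weak inequalities are preserved under limits and that $Q(x, \cdot)$ is nonempty and compact-valued at the supremum).

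Next, I would show that $\eta^* : \Upsilon_\delta \to (0, \infty)$ is lower semi-continuous, by the same argument as in the proof of Lemma \ref{lemma:uniform-improvement}: given $x_n \to x^*$ in $\Upsilon_\delta$ and any $\epsilon \in (0, \eta^*(x^*))$, uniform continuity of each $u_i$ on the compact set $X$ gives an index $N_\epsilon$ beyond which a witness $y \in Q(x^*, \eta^*(x^*))$ remains in $Q(x_n, \eta^*(x^*) - \epsilon)$, so $\eta^*(x_n) \geq \eta^*(x^*) - \epsilon$, and sending $\epsilon \downarrow 0$ yields $\liminf_{n} \eta^*(x_n) \geq \eta^*(x^*)$.

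Finally, since $\Upsilon_\delta$ is compact and $\eta^*$ is lower semi-continuous and strictly positive on it, the minimum $\eta_\delta := \min_{x \in \Upsilon_\delta} \eta^*(x)$ is attained and strictly positive, and by construction $Q(x, \eta_\delta) \neq \emptyset$ for every $x \in \Upsilon_\delta$. I do not expect any real obstacle here: the only substantive change relative to Lemma \ref{lemma:uniform-improvement} is swapping the input ``manipulability forces improvability on $\Gamma_\delta$'' for ``the strict $\delta$-gap forces $\Upsilon_\delta$ off of $\mathcal{E}$ and hence onto improvable policies,'' and the remainder of the argument (attainment, lower semi-continuity, compactness minimization) is identical.
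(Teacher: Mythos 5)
Your proposal is correct and follows essentially the same route as the paper: establish that $\Upsilon_\delta$ is compact and disjoint from $\mathcal{E}$ (so every $x \in \Upsilon_\delta$ is improvable and $\eta^*(x)>0$), then reuse the attainment, lower semi-continuity, and compact-minimization argument from Lemma \ref{lemma:uniform-improvement} verbatim. The only omission is the trivial remark that the claim is vacuous when $\Upsilon_\delta = \emptyset$, which does not affect correctness.
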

\begin{proof}
Let $\delta >0$ be given. Suppose that $\Upsilon_\delta \neq \emptyset$, for otherwise the lemma is vacuously true. First, observe that $\Upsilon_\delta$ is compact because $u_A$ is continuous and $X$ is compact. Second, observe that for each $x \in \Upsilon_\delta$ there exists some $\eta_x >0$ such that $Q(x,\eta_x)\neq \emptyset$; this follows from the definition of improvability and the inclusion $\Upsilon_\delta\subseteq X \backslash \mathcal{E}$. Given these observations, the remainder of the proof is identical to the proof of \Cref{lemma:uniform-improvement} in \Cref{appendix-theorem2-full}.\footnote{The only difference is that here we appeal to the definition of improvability and the inclusion $\Upsilon_\delta\subseteq X \backslash \mathcal{E}$---rather than the manipulability of $\CCP$---to establish the second observation above.} 
\end{proof}

Now, let $\delta>0$ be given and let $\eta_\delta>0$ be as described in \cref{lemma:uniform-improvement-2}. By the definitions of  $V^\text{as}_A(\cdot)$ and $\Phi^\text{or}(\cdot)$, we have 
\begin{equation}
    u_A(z) - u_A(y) \geq V^\text{as}_A(y) - u_A(y) \geq \eta_\delta \ \ \text{ for all $y \in \Upsilon_\delta$ and $z \in \Phi^\text{or}(y)$}. \label{Equation-UnifImpFinal}
\end{equation}

Let $\Delta \equiv u^*_A - \min_{x \in X} u_A(x)$. We prove \Cref{Lemma-NC-full}(b) by showing that
\begin{equation}
    T \geq T_\delta \equiv \left\lceil \Delta/ \eta_\delta \right\rceil \ \ \implies \ \ \bigcup_{x^0 \in X} G_T(x^0) \subseteq X \backslash \Upsilon_\delta. \label{Equation-T-Ups}
\end{equation}
Towards a contradiction, suppose that there exists a default $x^0 \in X$, number of rounds $T \geq T_\delta$, and \hyperlink{NonCapricious}{Non-Capricious} equilibrium $\sigma \in \Sigma^{\text{NC}}(x^0, T)$ such that $g^\sigma_T(x^0) \in \Upsilon_\delta$. By \Cref{Obs-NC4}(a), $g^\sigma_T(x^0) = \left[ \hat{\phi}_1 \circ \cdots \circ \hat{\phi}_T\right] (x^0)$ and $g^\sigma_{T,x^0}(x^0 \mid t) = \left[ \hat{\phi}_t \circ \cdots \circ \hat{\phi}_T\right] (x^0)$ for some selections $\hat{\phi}_t(\cdot) \in \Phi^\text{or}(\cdot)$. Let $g^\sigma_{T,x^0}(x^0 \mid 1) \equiv g^\sigma_T (x^0)$.  Note that $g^\sigma_{T,x^0}(x^0 \mid t) \in \Phi^\text{or} ( g^\sigma_{T,x^0}(x^0 \mid t+1))$ for all $1 \leq t \leq T$. Then it follows that
\begin{align*}
V^\text{as}_A \left( g^\sigma_T(x^0) \right) - u_A(x^0) &\geq 
u_A \left( g^\sigma_T(x^0) \right) - u_A(x^0) + \eta_\delta \\
&= \sum_{t=1}^{T} \left[ u_A \left( g^\sigma_{T,x^0}(x^0 \mid t) \right) - u_A \left( g^\sigma_{T,x^0}(x^0 \mid t+1) \right) \right] + \eta_\delta\\
&\geq T_\delta \cdot \eta_\delta + \eta_\delta\\
& \geq \Delta + \eta_\delta,
\end{align*}
where the first line is by \eqref{Equation-UnifImpFinal}, the second line is an identity, the third line is by the hypothesis that $T \geq T_\delta$ and another application of \eqref{Equation-UnifImpFinal} to each term in the sum (noting that $g^\sigma_T (x^0) \in \Upsilon_\delta$ implies that $g^\sigma_{T,x^0}(x^0 \mid t) \in \Upsilon_\delta$ for all $1 \leq t \leq T+1$), 
and the final line is by definition of $T_\delta$. However, given that $\eta_\delta>0$, this inequality contradicts the definition of $\Delta$. We conclude that \eqref{Equation-T-Ups} holds, as desired.

\subsection{Proof of \cref{Theorem-Spatial-Manip} on p. \pageref{Theorem-Spatial-Manip}}

Our argument proceeds in several steps. First, in $3$ dimensions, we establish a connection between non-coplanarity of utility gradients and improvability; as the reader will see, this argument applies for general payoff functions. We use this step to prove \cref{Theorem-Spatial-Manip}(a) for $d\geq 3$, by doing the appropriate reduction to $3$ dimensions and noting that non-coplanarity of ideal points with Euclidean preferences (\hyperlink{DefinitionNoCoplanarity}{Non-Coplanarity}) implies that of utility gradients. Finally, we prove \cref{Theorem-Spatial-Manip}(b) directly.

\medskip\noindent\textbf{Step 1: A General Improvability Lemma for $3$ Dimensions.} We establish here, for general payoff functions, that if utility gradients are non-coplanar at policy $x\neq x_A^*$, policy $x$ must be improvable.

\begin{lemma}\label{Lemma-Spatial-Gen}
Suppose $X = \mathbb{R}^3$ and each player $i \in N\cup\{A\}$ has a strictly quasi-concave and continuously differentiable utility $v_i : X \to \mathbb{R}$ with unique maximizer $x^*_i$. Then any policy $x \neq x^*_A$ is improvable if
\begin{equation}
\text{no $4$ vectors among $\{\nabla v_i (x)\}_{i=1,\dots, n,A}$ are coplanar}. \label{Equation-Grad-NCP}
\end{equation}
\end{lemma}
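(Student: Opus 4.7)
The plan is to extend the two-step geometric construction outlined in the Euclidean sketch, replacing the ``constrained ideal points'' argument (which exploits the specific functional form of Euclidean preferences) by one based on orthogonal projections of gradients, which is what the hypothesis \eqref{Equation-Grad-NCP} directly concerns. Set $\nu := \nabla v_A(x)$, which is nonzero since $x \neq x^*_A$ and $x^*_A$ is the unique maximizer of $v_A$. Let $S := x + \nu^{\perp}$ denote the affine tangent plane to the agenda setter's indifference surface at $x$. Strict quasi-concavity of $v_A$ together with $C^1$-smoothness implies that the upper contour set $\{y : v_A(y) \geq v_A(x)\}$ is strictly convex and supported by $S$ at $x$, so $x$ is the unique maximizer of $v_A$ restricted to $S$; in particular, any displacement within $S$ strictly decreases $v_A$ but only by an $o(\|y-x\|)$ amount.

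For each player $i$, let $g_i := \nabla v_i(x) - \lambda_i \nu$ with $\lambda_i := (\nabla v_i(x) \cdot \nu)/\|\nu\|^2$ denote the orthogonal projection of $\nabla v_i(x)$ onto $S$; by construction, $g_A = 0$. The key algebraic consequence of \eqref{Equation-Grad-NCP} is twofold: (i) no three of $\{g_1,\ldots,g_n\}$ lie on a common line through the origin of $S$, because if $g_i, g_j, g_k$ did, then $\nabla v_i(x), \nabla v_j(x), \nabla v_k(x), \nu$ would all lie in the $2$-plane spanned by $\nu$ and that line; and (ii) at most one voter has $g_i = 0$, for otherwise two such voters together with the agenda setter and any other voter would yield four gradients confined to the $2$-plane spanned by $\nu$ and the remaining voter's projection. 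These are the gradient analogues of the two auxiliary claims used in the Euclidean sketch.

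Next, I would produce a direction $d' \in S$ via pigeonhole plus a tilt. Pick any line $L \subset S$ through the origin; by (i), at most two voters $i, j$ have $g_i, g_j \in L$, so at least $n-2$ voters have $g_\ell$ strictly in one of the two open half-planes bounded by $L$, and the pigeonhole principle places at least $(n-1)/2$ of them (since $n$ is odd) strictly on a single side. Let $d$ be the unit normal to $L$ pointing toward that side, so those $(n-1)/2$ voters satisfy $g_\ell \cdot d > 0$. By (ii), at least one of $i, j$ has $g \in L \setminus \{0\}$, say $g_i = \alpha u$ with $\alpha \neq 0$ and $u$ a unit vector along $L$. Setting $d' := d + \eta\,\mathrm{sgn}(\alpha)\,u$ for sufficiently small $\eta > 0$ preserves the $(n-1)/2$ strict inequalities by continuity while also giving $g_i \cdot d' = \eta |\alpha| > 0$, yielding a majority of $(n+1)/2$ voters $\ell$ with $\nabla v_\ell(x) \cdot d' = g_\ell \cdot d' > 0$ (using $d' \in S$ to cancel the $\lambda_\ell \nu$ contribution).

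Finally, I would complete the improvement as in the Euclidean sketch. Set $y := x + \epsilon d'$: for small $\epsilon > 0$, a first-order expansion gives $v_\ell(y) > v_\ell(x)$ for the chosen majority while $v_A(y) = v_A(x) - L(\epsilon)$ with $L(\epsilon) > 0$ and $L(\epsilon)/\epsilon \to 0$. Then set $z := y + \delta \nu$ with $\delta$ chosen of order $L(\epsilon)/\|\nu\|^2$ but large enough that the first-order gain $\delta \|\nu\|^2$ dominates $L(\epsilon)$ and the higher-order error, so $v_A(z) > v_A(x)$; since $\delta/\epsilon \to 0$, the extra perturbation from $y$ to $z$ cannot reverse the $\Theta(\epsilon)$ gains of the majority voters. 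Hence $z \succ_A x$ and $z \succ_M x$, so $x$ is improvable. The main obstacle is the tilting step in paragraph three: one must arrange $d \to d'$ so as to simultaneously upgrade an on-line voter without demoting any of the $(n-1)/2$ marginally-strict off-$L$ voters; this succeeds because fact (ii) guarantees a voter on $L$ with nonzero projection to exploit, while the off-$L$ strict inequalities are stable under sufficiently small perturbations.
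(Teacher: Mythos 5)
Your construction is the same as the paper's: project gradients onto the tangent plane $S$ of the agenda setter's indifference surface at $x$, use the non-coplanarity hypothesis \eqref{Equation-Grad-NCP} to bound how many projected gradients can be collinear, obtain a majority-improving direction inside $S$ by pigeonhole plus a small tilt, and then perturb off $S$ along $\nabla v_A(x)$ to convert the agenda setter's second-order loss into a strict gain (the paper executes this last step via a convex combination and strict quasi-concavity rather than your explicit $\delta$-vs-$L(\epsilon)$ bookkeeping, but both work).

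There is, however, one step that does not go through as written: with ``pick any line $L$,'' the assertion ``by (ii), at least one of $i,j$ has $g\in L\setminus\{0\}$'' is unjustified. Fact (ii) only says at most one voter has zero projection; it does not guarantee that an arbitrary $L$ contains a voter with a \emph{nonzero} projection. Indeed, a voter with $g_{i_0}=0$ lies on \emph{every} line through the origin, so $L$ may contain only that voter, and if the remaining $n-1$ voters split exactly $(n-1)/2$ on each side of $L$, your count stalls at $(n-1)/2$: the tilt has no on-line voter to upgrade (a zero projected gradient gives no first-order gain in $S$). For instance, with $n=3$, $g_1=0$, $g_2=(1,0)$, $g_3=(-1,\varepsilon)$ and $L$ the vertical axis, your recipe yields only one strict improver. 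The fix is exactly the paper's choice: take $L=\mathrm{span}(g_i)$ for some voter $i$ with $g_i\neq 0$ (such a voter exists by your fact (ii) since $n\geq 3$). Then fact (i) implies at most one \emph{other} voter has its projection on $L$, the pigeonhole over the remaining (at least) $n-2$ voters gives $(n-1)/2$ on one side, and tilting the normal direction toward $g_i$ adds voter $i$, producing the required majority of $(n+1)/2$; the rest of your argument then goes through.
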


We prove \Cref{Lemma-Spatial-Gen} here. 
Let the profiles $\{v_i\}_{i=1,\dots,n,A}$ and $\{x^*_i\}_{i=1,\dots,n,A}$ be as described above, and consider a policy $x \in \mathbb{R}^3\backslash\{x_A^*\}$ that satisfies \eqref{Equation-Grad-NCP} . Denote the plane tangent to the agenda setter's indifference surface at $x$ by $S \equiv \{y \in \mathbb{R}^3 : (y -x) \cdot \nabla v_A(x) =0 \}$. The tangent space of $S$ is denoted by $\mathcal{T}(S) \equiv \{ z \in \mathbb{R}^3 : z \cdot \nabla v_A(x) =0 \}$ and the orthogonal complement of $S$ by $S^\perp \equiv \{y \in \mathbb{R}^3 : y \cdot z = 0 \ \forall z \in \mathcal{T}(S) \}$. For each voter $i \in N$, denote the orthogonal projection of $\nabla v_i (x)$ onto $S$ by $\nabla_S v_i(x)\equiv \nabla v_i(x) - \left(\frac{\nabla v_i (x) \cdot \nabla v_A (x)}{\|\nabla v_A (x)\|^2}\right)\nabla v_A (x)$. By construction, $\nabla_S v_i(x) \in \mathcal{T}(S)$ and $\nabla_S v_i(x) \cdot y = \nabla v_i(x) \cdot y$ for all $y \in \mathcal{T}(S)$.

We now establish \Cref{Lemma-Spatial-Gen} through a sequence of claims, which parallel the geometric sketch for $3$-dimensional Euclidean preferences given in \Cref{subsection:spatial}. The first claim records useful implications of \eqref{Equation-Grad-NCP} for the voters' projected gradients.

\begin{claim}\label{Claim-Spatial1}
The following hold:
\begin{enumerate}[label={\normalfont (\alph*)},nolistsep]
\item There exists some voter $i \in N$ for whom $ \nabla_S u_i(x) \neq \mathbf{0}$.
\item Given any voter $i \in N$ for whom $ \nabla_S v_i(x) \neq \mathbf{0}$, there exists at most one other voter $j \in N \backslash\{i\}$ with a collinear projected gradient, viz., such that $\nabla_S v_j(x) = \alpha \cdot \nabla_S v_i(x)$ for some $\alpha \in \mathbb{R}$.
\end{enumerate}
\end{claim}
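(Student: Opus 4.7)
The plan is to prove both parts by contradiction, leveraging the non-coplanarity hypothesis \eqref{Equation-Grad-NCP} on the full gradients $\{\nabla v_i(x)\}_{i=1,\dots,n,A}$. The key structural fact I would use throughout is the orthogonal decomposition $\nabla v_i(x) = \nabla_S v_i(x) + \beta_i \nabla v_A(x)$ into a tangential part in $\mathcal{T}(S)$ and a normal part in the one-dimensional subspace $S^\perp = \mathrm{span}\{\nabla v_A(x)\}$. In particular, $\nabla v_A(x)$ itself has tangential part $\mathbf{0}$.

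For part (a), I would suppose for contradiction that $\nabla_S v_i(x) = \mathbf{0}$ for every voter $i \in N$. Then every gradient $\nabla v_i(x)$ reduces to a scalar multiple of $\nabla v_A(x)$, so the entire collection $\{\nabla v_i(x)\}_{i=1,\dots,n,A}$ lies on the single line $S^\perp$ through the origin. Since there are $n+1 \geq 4$ such vectors (relying on the standing assumption that $n$ is odd with $n \geq 3$, for which the hypothesis has bite in the first place), any four of them are collinear and hence coplanar, contradicting \eqref{Equation-Grad-NCP}.

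For part (b), given a voter $i$ with $\nabla_S v_i(x) \neq \mathbf{0}$, I would suppose for contradiction that two distinct other voters $j,k \in N \setminus \{i\}$ satisfy $\nabla_S v_j(x) = \alpha_j \nabla_S v_i(x)$ and $\nabla_S v_k(x) = \alpha_k \nabla_S v_i(x)$ for some scalars $\alpha_j, \alpha_k \in \mathbb{R}$. Applying the orthogonal decomposition, each of $\nabla v_i(x), \nabla v_j(x), \nabla v_k(x)$ is a linear combination of $\nabla_S v_i(x)$ and $\nabla v_A(x)$; the same is trivially true of $\nabla v_A(x)$ itself. Thus all four gradients lie in the subspace $P := \mathrm{span}\{\nabla_S v_i(x), \nabla v_A(x)\}$.

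The only step that deserves care is verifying that $P$ is genuinely two-dimensional, so that ``lying in $P$'' really means ``being coplanar.'' This is immediate: $\nabla_S v_i(x) \in \mathcal{T}(S)$ and $\nabla v_A(x) \in S^\perp$ are orthogonal and both nonzero, hence linearly independent. So the four gradients are coplanar, once again contradicting \eqref{Equation-Grad-NCP}. I do not anticipate any serious obstacle; the entire argument is a short exercise in linear algebra built on the orthogonal decomposition of gradients relative to $S$.
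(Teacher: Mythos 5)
Your proof is correct and follows essentially the same route as the paper's: for (a) you observe that vanishing projections force all $n+1\geq 4$ gradients onto the line $S^\perp=\mathrm{span}\{\nabla v_A(x)\}$, and for (b) you use the orthogonal decomposition of each gradient into tangential and normal parts to place the four relevant gradients in the two-dimensional subspace $\mathrm{span}\{\nabla_S v_i(x),\nabla v_A(x)\}$, contradicting \eqref{Equation-Grad-NCP} in both cases, exactly as in the paper (which phrases (b) as the projected gradients spanning dimension one plus the one-dimensional $S^\perp$). Your explicit remark that $n\geq 3$ is needed for the non-coplanarity hypothesis to have bite is a harmless clarification of an assumption the paper uses implicitly.
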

For intuition, observe that if preferences are Euclidean, \Cref{Claim-Spatial1} reduces to the assertion from Step 1 and \Cref{fig:Euclid-1} in \Cref{subsection:spatial} that at most two voter constrained ideal points, $y^*_i$ and $y^*_j$, are collinear with $x$. For Euclidean preferences, the gradient at $x$ is $\nabla v^\text{E}_i (x) = x^*_i - x$ and the $S$-projected gradient at $x$ is $\nabla_S v^\text{E}_i (x) = y^*_i - x$. Thus, the collinearity of $\nabla_S v^\text{E}_i (x)$ and $\nabla_S v^\text{E}_j (x)$ is equivalent to the collinearity of the vectors $\{x,y^*_i, y^*_j\}$.

\begin{proof}[Proof of \Cref{Claim-Spatial1}]
For part (a), suppose that $ \nabla_S u_i(x) = \mathbf{0}$ for all $i \in N$. By definition, $ \nabla_S v_i(x) = \mathbf{0}$ if and only if $\nabla v_i(x) \perp S$. Thus, the supposition implies that all players' gradients are collinear, as the orthogonal complement $S^\perp$ has dimension $1$; \eqref{Equation-Grad-NCP} is then violated. By contraposition, \eqref{Equation-Grad-NCP} implies that (a) holds. 

For part (b), consider a voter $i \in N$ for whom $ \nabla_S u_i(x) \neq \mathbf{0}$. Suppose that there exist two distinct voters $j,k \in N \backslash\{i\}$ such that the set $\{\nabla_S v_i(x) , \nabla_S v_j(x) , \nabla_S v_k(x) \}$ is collinear. Observe that this set of vectors is also trivially collinear with $\nabla_S v_A(x) = \mathbf{0}$. Hence, the set $\text{span}\left( \{\nabla_S v_i(x) , \nabla_S v_j(x) , \nabla_S v_k(x) ,  \nabla_S v_A(x) \}\right)$ has dimension $1$. Note that $\nabla v_\nu(x) = \nabla_S v_\nu(x) +\nabla_{S^\perp} v_\nu(x) $ for all $\nu \in \{i,j,k,A\}$ by definition of orthogonal projection. Because each $\nabla_{S^\perp} v_\nu(x) \in S^\perp$ and $S^\perp$ has dimension $1$ by construction, it follows that the set $\text{span}\left( \{\nabla v_i(x) , \nabla v_j(x) , \nabla v_k(x) ,  \nabla v_A(x) \}\right)$ has dimension $2$, implying that \eqref{Equation-Grad-NCP} is violated.
\end{proof}

The next claim uses \Cref{Claim-Spatial1} to establish the existence of an alternative policy in $S$ that a majority of voters strictly prefer to $x$.

\begin{claim}\label{Claim-Spatial3}
There exists some $y \in S$ such that $y\succ_M x$. 
\end{claim}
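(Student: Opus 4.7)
The plan is to construct $y$ as a small perturbation $y = x + \epsilon z$ along a carefully chosen direction $z \in \mathcal{T}(S) \setminus \{\mathbf{0}\}$. For each voter $i$, a first-order Taylor expansion yields
\[
v_i(x + \epsilon z) - v_i(x) = \epsilon\, \nabla v_i(x) \cdot z + o(\epsilon) = \epsilon\, \nabla_S v_i(x) \cdot z + o(\epsilon),
\]
where the second equality uses $z \in \mathcal{T}(S)$ together with the definition of the orthogonal projection $\nabla_S v_i(x)$. Hence, for all sufficiently small $\epsilon > 0$, voter $i$ strictly prefers $y$ to $x$ whenever $\nabla_S v_i(x) \cdot z > 0$. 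Since $x \in S$ and $z \in \mathcal{T}(S)$ automatically give $y \in S$, the task reduces to exhibiting $z \in \mathcal{T}(S)$ such that $\nabla_S v_i(x) \cdot z > 0$ for a strict majority of voters.

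Following the geometric sketch in \Cref{subsection:spatial}, I fix a voter $k$ with $\nabla_S v_k(x) \neq \mathbf{0}$, which exists by \Cref{Claim-Spatial1}(a), and let $\{w_1, w_2\}$ be an orthonormal basis of $\mathcal{T}(S) \cong \mathbb{R}^2$ with $w_1$ a positive unit multiple of $\nabla_S v_k(x)$. By \Cref{Claim-Spatial1}(b), at most one other voter $\ell$ has projected gradient collinear with $w_1$, so the remaining $n - 2$ voters have projected gradients of the form $\nabla_S v_m(x) = a_m w_1 + b_m w_2$ with $b_m \neq 0$. Since $n - 2$ is odd, the pigeonhole principle produces a coalition $C$ of at least $(n-1)/2$ of these voters whose $b_m$'s share a common sign; after possibly replacing $w_2$ by $-w_2$, I may assume $b_m > 0$ for every $m \in C$.

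I then take $z = w_2 + \delta w_1$, with $\delta > 0$ chosen small enough that $a_m \delta + b_m > 0$ for every $m \in C$, which is possible because $C$ is finite and each $b_m > 0$. For this $z$: $\nabla_S v_k(x) \cdot z = \|\nabla_S v_k(x)\|\, \delta > 0$, so voter $k$ is in the strict-preference coalition, and for each $m \in C$, $\nabla_S v_m(x) \cdot z = a_m \delta + b_m > 0$, so voter $m$ is too. Since voter $k$ lies outside $C$ by construction, this gives at least $1 + (n-1)/2 = (n+1)/2$ voters, which is a strict majority, so $y = x + \epsilon z \in S$ satisfies $y \succ_M x$ for every sufficiently small $\epsilon > 0$.

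The only real subtlety is the order of quantifiers: $\delta$ is selected in terms of the finitely many pairs $(a_m, b_m)_{m \in C}$, and then $\epsilon$ is selected in terms of $\delta$ and the Taylor remainders of the $(n+1)/2$ voters in the coalition, uniformizing over this finite coalition. I do not anticipate any further obstacle, as \Cref{Claim-Spatial1} already encodes the combinatorial structure---at most two voter projected gradients on any line through the origin in $\mathcal{T}(S)$---that the above pigeonhole plus perturbation argument exploits.
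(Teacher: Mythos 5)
Your proof is correct and follows essentially the same route as the paper's: fix a voter with nonzero projected gradient, use \Cref{Claim-Spatial1}(b) to discard at most one collinear voter, pigeonhole the remaining voters by the sign of their component orthogonal to that gradient (your $w_2$ plays the role of the paper's $\omega$), and then tilt the direction slightly toward the distinguished voter's gradient (your $z = w_2 + \delta w_1$ is the paper's $\rho^k$) before taking $\epsilon$ small in a first-order Taylor argument. The only cosmetic slip is that when no collinear voter $\ell$ exists there are $n-1$ rather than $n-2$ remaining voters, but the pigeonhole bound of $(n-1)/2$ still holds, so the majority count $(n+1)/2$ is unaffected.
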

The argument mirrors that from Step 1 and the left-hand panel of \Cref{fig:Euclid-2} in \Cref{subsection:spatial}: we (i) fix some voter $i \in N$ whose projected gradient $\nabla_S v_i(x)$ is nonzero and therefore defines a line in $S$ that contains $x$, (ii) partition the other voters into sets according to whether their projected gradients point ``above'' or ``below'' that line, and (iii) construct a new policy $y \in S$ that strictly benefits voter $i$ and all voters on one side of the line. For Euclidean preferences, $\nabla_S v^\text{E}_j (x) = y^*_j - x$ for all voters $j$, and so part (ii) here is equivalent to partitioning voters based on their constrained ideal points $y^*_j$ lying ``above'' or ``below'' the line.
\begin{proof}[Proof of \Cref{Claim-Spatial3}]
Consider a voter $i$ for whom $\nabla_S v_i(x) \neq \mathbf{0}$; such a voter exists by \Cref{Claim-Spatial1}(a). We denote the set of other voters whose $S$-projected gradients at policy $x$ are collinear to $i$'s by $C_i \equiv \{j \in N \backslash \{i\} :   \nabla_S v_j(x) = \alpha \cdot  \nabla_S v_i(x) \ \exists \alpha \in \mathbb{R} \}$. We then define $N' \equiv N \backslash C_i$. Observe that $i \in N'$ by construction and that $|N'| \geq n-1$ by \Cref{Claim-Spatial1}(b).

Now, let $\omega \in \mathcal{T}(S)$ satisfy $\omega \cdot \nabla_S v_i(x) = 0$ be given. Define the following sets of voters:
\begin{align*}
    N'_+ \equiv \{j \in N' : \nabla_S v_j(x) \cdot \omega >0\} \ \  \text{ and } \ \ N'_- &\equiv \{j \in N' : \nabla_S v_j(x) \cdot \omega <0\}.
\end{align*}
By construction, $N'_+ \cap \{i\} = N'_- \cap \{i\} = \emptyset$ and $N' = N'_+ \cup N'_- \cup \{i\}$, viz., $\left\{ N'_+, N'_-, \{i\}\right\}$ forms a partition of $N'$. It follows that $|N'_+| + |N'_-| \geq n-2$, which in turn implies that $\max\{|N'_+|, |N'_-|\} \geq \frac{n-1}{2}$ because $n-2$ is an odd number. We suppose, without loss of generality, that $|N'_+| \geq \frac{n-1}{2}$. Therefore, we have
\begin{equation}
    |N'_+ \cup\{i\}| \geq \frac{n+1}{2}. \label{Equation-Nplus}
\end{equation} 
We assert that there exists some $\rho \in \mathbb{R}^3$ such that
\begin{equation}
    \rho \in \mathcal{T}(S) \  \text{ and } \  \nabla_S v_j (x) \cdot \rho >0 \ \text{ for all $j \in N'_+ \cup \{i\}$}. \label{Equation-rho-perturb}
\end{equation}
To this end, define the sequence $\{\rho^k\} \subset \mathbb{R}^3$ by $\rho^k \equiv \frac{1}{k} \nabla_S v_i(x) + \frac{k-1}{k} \omega$. It is clear that $\rho^k \in \mathcal{T}(S)$ for all $k \in \mathbb{N}$, as $\mathcal{T}(S)$ is a convex set containing both $ \nabla_S v_i(x)$ and $\omega$ by construction. It is also clear that $\nabla_S v_i (x) \cdot \rho^k = \|\nabla_S v_i (x)\|^2 / 2 >0$ for all $k \in \mathbb{N}$ by construction. So, let $j \in N'_+$ be given. As $ \nabla_S v_j(x) \cdot \omega >0$ by construction, there exists some $K_j \in \mathbb{N}$ such that $\nabla_S v_j (x) \cdot \rho^k>0$ for all $k \geq K_j$. Defining $K \equiv \max_{j \in N'_+} K_j$ and letting $\rho \equiv \rho^k$ for any $k \geq K$ then establishes \eqref{Equation-rho-perturb}, as desired.

We now use \eqref{Equation-Nplus} and \eqref{Equation-rho-perturb} to prove the claim. Let $\rho \in \mathbb{R}^3$ satisfy \eqref{Equation-rho-perturb}. For each voter $j \in N'_+ \cup \{i\}$, we have that
\begin{align*}
    v_j (x + \epsilon \rho) &= v_j (x) + \epsilon \nabla v_j(x) \cdot \rho + \smallO(\epsilon) = v_j (x) + \epsilon \nabla_S v_j(x) \cdot \rho + \smallO(\epsilon),
\end{align*}
where the first equality is by Taylor's Theorem and the second holds because, by the definition of the $S$-projected gradient $\nabla_S v_j(x)$, we have $\nabla v_j(x) \cdot \rho' = \nabla_S v_j(x) \cdot \rho'$ for all $\rho' \in \mathcal{T}(S)$. \eqref{Equation-rho-perturb} then implies that there exists some $\epsilon>0$ such that $v_j(x + \epsilon \rho) > v_j (x)$ for all $j \in N'_+ \cup \{i\}$. Letting $y \equiv x + \epsilon \rho$, \eqref{Equation-Nplus} then implies that $y \succ_M x$. As $\rho \in \mathcal{T}(S)$, it follows that $y \in S$.
\end{proof}

The final claim establishes that any policy $y \in S$ for which $y \succ_M x$ can be perturbed to some $z \notin S$ such that both $z \succ_M x$ and $z\succ_A x$; this claim formalizes the argument sketched in Step 2 and the right-hand panel of \Cref{fig:Euclid-2}. 

\begin{claim}\label{Claim-Spatial4}
For any $y \in S$ such that $y \succ_M x$, there exists $z \notin S$ such that $z \succ_M x$ and $z \succ_A x$.
\end{claim}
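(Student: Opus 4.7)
The plan has two steps: first reduce to the case where $y$ is close to $x$ by a convexity argument, then perturb $y$ off of $S$ along $\nabla v_A(x)$.

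For the reduction, I would set $y_t := (1-t)y + tx \in S$ for $t \in [0,1)$; this lies in $S$ because $S$ is an affine plane containing both $x$ and $y$. Fix a strict majority $M \subseteq N$ with $v_j(y) > v_j(x)$ for every $j \in M$; strict quasi-concavity of $v_j$ then gives $v_j(y_t) > \min\{v_j(y), v_j(x)\} = v_j(x)$, so $y_t \succ_M x$ for every $t \in [0,1)$. Sending $t \to 1$ brings $y_t$ arbitrarily close to $x$ while preserving strict majority preference. Henceforth I assume without loss of generality that $y = x + \delta\rho$ with $\rho \in \mathcal{T}(S)$ of unit length and $\delta > 0$ as small as I need.

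For the perturbation, I would define $z(\epsilon) := y + \epsilon\,\nabla v_A(x)$ for $\epsilon > 0$. The hypothesis \eqref{Equation-Grad-NCP} forces $\nabla v_A(x) \neq 0$, and since $\nabla v_A(x) \in S^\perp$ by construction of $S$, we have $z(\epsilon) \notin S$ automatically. Continuity of each $v_j$ at $y$ gives $v_j(z(\epsilon)) > v_j(x)$ for all $j \in M$ for small enough $\epsilon$, so $z(\epsilon) \succ_M x$. For the agenda setter, a first-order expansion yields
\begin{align*}
v_A(z(\epsilon)) - v_A(x) = \bigl[v_A(y) - v_A(x)\bigr] + \epsilon\,\nabla v_A(y)\cdot\nabla v_A(x) + o(\epsilon).
\end{align*}
Because $\nabla v_A(x)\cdot\rho = 0$, the bracket is $o(\delta)$; meanwhile $\nabla v_A(y)\cdot\nabla v_A(x)\to \|\nabla v_A(x)\|^2 > 0$ as $\delta \to 0$ by continuity of $\nabla v_A$.

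The main obstacle is coordinating $\epsilon$ and $\delta$. The majority margin $v_j(y) - v_j(x) = \delta\,\nabla v_j(x)\cdot\rho + o(\delta)$ is itself of order $\delta$, so $\epsilon$ cannot be much larger than $\delta$ without destroying the majority condition; yet $\epsilon$ must be large enough that the linear gain in $v_A$ dominates the vanishing-slope loss. Setting $\epsilon := \alpha\delta$ for a fixed sufficiently small $\alpha > 0$ threads the needle: the voter margin stays positive because a $\Theta(\delta)$ gain absorbs an $O(\alpha\delta)$ perturbation, and the agenda setter's net change becomes $\alpha\delta\,\|\nabla v_A(x)\|^2 + o(\delta)$, which is positive for $\delta$ small. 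The resulting $z := y + \alpha\delta\,\nabla v_A(x)$ then satisfies $z \notin S$, $z \succ_M x$, and $z \succ_A x$, as required.
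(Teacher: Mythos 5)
Your proposal uses the same two ingredients as the paper's proof---a move along the segment toward $x$ controlled by strict quasi-concavity, and a perturbation off $S$ in the direction $\nabla v_A(x)$---but in the reverse order, and that reversal is where a genuine gap appears. Your coordination step asserts that each majority member's margin is $\Theta(\delta)$, i.e., that $\nabla v_j(x)\cdot\rho>0$ for every $j\in M$, and this is never justified. From $v_j(y)>v_j(x)$ and quasi-concavity one only gets $\nabla v_j(x)\cdot(y-x)\geq 0$. The strict inequality is in fact true whenever $\nabla v_j(x)\neq\mathbf{0}$ (continuity puts a ball $B_r(y)$ inside the upper contour set at level $v_j(x)$, which by the quasi-concavity gradient inequality lies in the half-space $\{z:\nabla v_j(x)\cdot(z-x)\geq 0\}$, forcing $\nabla v_j(x)\cdot(y-x)\geq \tfrac{r}{2}\|\nabla v_j(x)\|>0$), and it is immediate for Euclidean preferences; but under the stated hypotheses of \Cref{Lemma-Spatial-Gen} (strict quasi-concavity and $C^1$, which do not imply pseudo-concavity) a majority voter can have $\nabla v_j(x)=\mathbf{0}$ at a non-ideal point (e.g.\ $v_j = h(-\|\cdot-x^*_j\|^2)$ with $h'$ vanishing at the relevant level), in which case the margin is $o(\delta)$ of uncontrolled order and your fixed-$\alpha$, first-order-expansion scheme cannot conclude $z\succ_j x$. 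A smaller point: \eqref{Equation-Grad-NCP} does not force $\nabla v_A(x)\neq\mathbf{0}$ (the zero vector together with three linearly independent voter gradients is not coplanar), so your parenthetical justification is wrong; this nonvanishing is needed for $S$ to be a plane and for $z\notin S$, and is automatic in the Euclidean application.

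The paper's proof avoids the rate-matching entirely by reversing your two steps: it first perturbs at the \emph{given} $y$, setting $\zeta := y+\epsilon\nabla v_A(x)$ with $\epsilon$ small enough that $\zeta\succ_M x$ (only continuity is needed, because the margins at $y$ are fixed positive numbers), and then contracts toward $x$ via $z(\beta):=\beta\zeta+(1-\beta)x$. Strict quasi-concavity gives $z(\beta)\succ_M x$ for all $\beta\in(0,1)$ with no quantitative comparison at all, while the agenda setter gains to first order because $\nabla v_A(x)\cdot(\zeta-x)=\epsilon\|\nabla v_A(x)\|^2>0$ is a fixed positive number, so Taylor's theorem at $x$ closes the argument. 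To repair your write-up, either adopt that ordering (your Step 1 then becomes unnecessary), or keep your ordering but prove the strict positivity $\nabla v_j(x)\cdot\rho>0$ via the half-space argument above and explicitly handle (or assume away, as the Euclidean case allows) vanishing gradients for majority voters and for the agenda setter.
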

\begin{proof}
As $y \succ_M x$ and voters' preferences are continuous, there exists an $\epsilon >0$ such that the policy $\zeta \equiv y + \epsilon \nabla v_A(x)$ satisfies $\zeta \succ_M x$. For each $\beta \in (0,1)$, define $z(\beta) \equiv \beta \zeta + (1-\beta) x$. The strict convexity of voters' preferences then implies that $z(\beta) \succ_M x$ for all $\beta \in (0,1)$. 

We assert that there exists some $\overline{\beta} \in (0,1)$ such that $z(\beta) \succ_A x$ for all $\beta \in (0, \overline{\beta})$. Let $\rho \equiv \zeta - x$. As $v_A$ is continuously differentiable, its directional derivative at policy $x$ in direction $\rho$ is given by $\nabla v_A(x) \cdot \rho$. We have the following:
\begin{align*}
    \nabla v_A(x) \cdot \rho &= \nabla v_A(x) \cdot \left( y + \epsilon \nabla v_A(x) -x\right)  
    = \epsilon \|\nabla v_A(x)\|^2 >0 
\end{align*}
where the first equality is an identity, and the second follows from rearranging terms and noting that $\nabla v_A(x) \neq \mathbf{0}$ and $\nabla v_A(x) \perp \left( y-x\right)$. Thus, we have $\nabla v_A(x) \cdot \rho>0$. Now, because $z(\beta) - x = \beta \rho $ by construction, Taylor's Theorem implies that $v_A\left( z(\beta)\right) = v_A(x) + \beta \nabla v_A(x) \cdot \rho + \smallO(\beta)$. It follows that $v_A(z(\beta)) > v_A(x)$ for all sufficiently small $\beta >0$, as desired. 

Now let $z \equiv z(\beta)$ for any $\beta \in (0,\overline\beta )$. It follows that  $z \succ_A x$ and $z\succ_M x$ by construction and $z \notin S$ because $\nabla v_A(x) \neq \mathbf{0}$ is normal to $S$.
\end{proof}
 \Cref{Claim-Spatial3,Claim-Spatial4} together complete the proof of \Cref{Lemma-Spatial-Gen}. \qedsymbol

\paragraph{Step 2: Proof of \cref{Theorem-Spatial-Manip}(a).}
We now consider the case of Euclidean preferences: $d\geq 3$, $X = \mathbb{R}^d$, player $i$ has utility function $u_i (x) = - \frac{1}{2}\|x-x^*_i\|^2$ for each $i \in N \cup\{A\}$. Suppose that the ideal point profile $(x^*_i)_{i=1,\dots, n,A} \in\mathbb{R}^{d(n+1)}$ satisfies \hyperlink{DefinitionNoCoplanarity}{Non-Coplanarity}. Let an arbitrary $x \neq x^*_A$ be given; we show below that $x$ is improvable. 

If $d=3$, the result follows immediately from \Cref{Lemma-Spatial-Gen} by observing that $\nabla u_i (x) = x^*_i - x$, so that \hyperlink{DefinitionNoCoplanarity}{Non-Coplanarity} directly implies \eqref{Equation-Grad-NCP}. So suppose that $d>3$. In this case, we may indirectly apply \Cref{Lemma-Spatial-Gen} by restricting attention to a suitable $3$-dimensional subspace of $\mathbb{R}^d$. Let $a,b,c \in \{1,\dots, d\}$ denote $3$ distinct policy dimensions for which the projections $[x]_{abc}$ and $[x^*_A]_{abc}$ satisfy $[x]_{abc} \neq [x^*_A]_{abc}$. Let $[x]_{-abc} \in \mathbb{R}^{d-3}$ denote the $(d-3)$-dimensional projection of $x$ corresponding to deletion of the indices $a,b,c$ (so that $x$ is given by the concatenation of  $[x]_{abc}$ and $[x]_{-abc}$). Define $X \left( [x]_{-abc}\right) \equiv \{y \in \mathbb{R}^d : [y]_{-abc} = [x]_{-abc}\}$ to be the set of policies $y \in \mathbb{R}^d$ that differ from $x \in \mathbb{R}^d$ only in dimensions $a,b,c$. Observe that $X \left( [x]_{-abc}\right)$ is a $3$-dimensional affine subspace of $\mathbb{R}^d$ by construction; with a slight abuse of notation, we identify it with $\mathbb{R}^3$ and identify a generic element $y$ with its projection $[y]_{abc} \in \mathbb{R}^3$. Finally, for each player $i \in N \cup\{A\}$, we define the utility function $v_i : \mathbb{R}^3 \to \mathbb{R}$ by $v_i(\cdot) \equiv u_i(\cdot,[\hat{x}]_{-abc})$, viz., $v_i$ is the restriction of $u_i$ to $X \left( [x]_{-abc}\right)$. 

We now apply \Cref{Lemma-Spatial-Gen} to the utility profile $(v_i)_{i=1,\dots,n,A}$, which represents $3$-dimensional Euclidean preferences with the ideal point profile $([x^*_i]_{abc})_{i=1,\dots,n,A}\in\mathbb{R}^{3(n+1)}$. Observe that $\nabla v_i([x]_{abc}) = [x^*_i]_{abc} - [x]_{abc} \in \mathbb{R}^3$, so that \hyperlink{DefinitionNoCoplanarity}{Non-Coplanarity} of the $d$-dimensional ideal points implies that these $3$-dimensional gradients satisfy \eqref{Equation-Grad-NCP}. We thus conclude from \Cref{Lemma-Spatial-Gen} that there exists some $[y]_{abc} \in \mathbb{R}^3$ such that 
\begin{equation}
    v_A\left([y]_{abc}  \right) > v_A \left( [x]_{abc} \right) \  \text{ and } \  |\left\{i \in N : v_i\left([y]_{abc}  \right) > v_i \left( [x]_{abc} \right) \right\}| \geq \frac{n+1}{2}. \label{Equation-Improve-3}
\end{equation}

To conclude the proof, we let $y \in X \left( [x]_{-abc}\right) \subseteq \mathbb{R}^d$ denote the concatenation of $[y]_{abc}$ and $[x]_{-abc}$, viz., $y \equiv \left( [y]_{abc}, [x]_{-abc}\right)$. By definition of the $v_i$ functions, \eqref{Equation-Improve-3} implies that
\[
u_A\left(y \right) > u_A \left(x \right) \  \text{ and } \  |\left\{i \in N : u_i\left(y \right) > u_i \left( y \right) \right\}| \geq \frac{n+1}{2},
\]
which is equivalent to $y \succ_A x$ and $y \succ_M x$. It follows that $x$ is improvable, as desired. \qedsymbol

\paragraph{Step 3: Proof of \cref{Theorem-Spatial-Manip}(b).} Let $d \geq 3$ be given. For any $3$ distinct policy dimensions $a,b,c \in \{1,\dots, d\}$ and any $4$ distinct players $i,j,k,\ell \in \{1,\dots,n, A\}$, we define the set $C^{(ijk\ell)}_{[abc]} \subseteq \mathbb{R}^{d(n+1)}$ by 
\[
C^{(ijk\ell)}_{[abc]} \equiv \left\{ \left(x^*_\nu \right)_{\nu=1,\dots, n,A} \in \mathbb{R}^{d(n+1)} : [x^*_i]_{abc}, [x^*_j]_{abc}, [x^*_k]_{abc}, \text{ and } [x^*_\ell]_{abc} \text{ are coplanar in $\mathbb{R}^3$} \right\}.
\]
In words, $C^{(ijk\ell)}_{[abc]}$ collects all profiles of ideal points for which \hyperlink{DefinitionNoCoplanarity}{Non-Coplanarity} is violated (at least) for players $i,j,k,\ell$ in the subspace spanned by dimensions $a,b,c$. Observe that, by \Cref{Definition-NoCoplanarity}, taking the union over all such $a,b,c$ and $i,j,k,\ell$ yields exactly the set of ideal point profiles that violate \hyperlink{DefinitionNoCoplanarity}{Non-Coplanarity}. That is, the following holds:
\begin{align}
    C &\equiv \left\{ (x^*_i)_{i=1,\dots,n,A} \in \mathbb{R}^{d(n+1)} : (x^*_i)_{i=1,\dots,n,A} \text{ violates \hyperlink{DefinitionNoCoplanarity}{Non-Coplanarity}} \right\} \label{Equation-Coplanar-Set} \\
    & = \bigcup_{a,b,c \in \{1,\dots,d\}} \bigcup_{i,j,k,\ell \in N\cup\{A\}} C^{(ijk\ell)}_{[abc]}. \notag
\end{align}

We show that $C$ is closed and has zero Lebesgue measure by showing that each $C^{(ijk\ell)}_{[abc]}$ also has these properties (because the union in \eqref{Equation-Coplanar-Set} is finite). To this end, we first claim that 
\begin{equation}
    Z \equiv \left\{ (x_i)_{i=1}^4 \in \mathbb{R}^{3 \times 4} : x_i \in \mathbb{R}^3 \ \forall i =1,2,3,4 \text{ and } x_1,x_2,x_3,x_4 \text{ are  coplanar in $\mathbb{R}^3$} \right\} \label{Equation-Zset}
\end{equation}
is closed and has zero Lebesgue measure. To see why, define $f : \mathbb{R}^{3\times4} \to \mathbb{R}$ by $f(x_1, x_2, x_3, x_4) \equiv \left[ (x_2-x_1) \times (x_3-x_1) \right] \cdot (x_4 - x_1)$,
where $y \times z \in \mathbb{R}^3$ denotes the cross product between vectors $y,z \in \mathbb{R}^3$. By construction,  $\{x_1,x_2,x_3,x_4\} $ are coplanar if and only if $f(x_1, x_2, x_3, x_4) =0 $; therefore, $Z = \left\{ (x_i)_{i=1}^4 \in \mathbb{R}^{3 \times 4} : f(x_1, x_2, x_3, x_4) = 0\right\}$. Now observe, also by construction, that $f$ is a non-constant polynomial function. Therefore, $Z$ is closed and has zero Lebesgue measure, being the set of zeros of a non-constant polynomial function.

We use this claim to establish that $C^{(ijk\ell)}_{[abc]}$ is closed and has zero Lebesgue measure. If $d=3$, this is an immediate consequence of the above. So suppose that $d>3$. Observe that whether a profile $(x^*_i)_{i=1,\dots,n,A} \in \mathbb{R}^{d(n+1)}$ is an element of $C^{(ijk\ell)}_{[abc]}$ is determined exclusively by the collection of projections $\{[x^*_i]_{abc}, [x^*_j]_{abc}, [x^*_k]_{abc}, [x^*_\ell]_{abc}\}$. Hence, 
\begin{align*}
C^{(ijk\ell)}_{[abc]} &=  K \times \mathbb{R}^{d(n+1)-12}, \text{ where $K \subseteq \mathbb{R}^{3\times 4}$ is defined by} \\
K &\equiv\left\{ \left([x^*_\nu]_{abc} \right)_{\nu=i,\ell,j,k} \in \mathbb{R}^{3\times 4} : [x^*_i]_{abc}, [x^*_j]_{abc}, [x^*_k]_{abc}, [x^*_\ell]_{abc} \text{ are coplanar in $\mathbb{R}^3$} \right\}.
\end{align*}
Observe that $K$ is equivalent (modulo relabeling of indices) to $Z$ in \eqref{Equation-Zset}, and therefore is closed and has zero Lebesgue measure in $\mathbb{R}^{12}$. Hence, $C^{(ijk\ell)}_{[abc]}$ is also closed and has zero Lebesgue measure in $\mathbb{R}^{d(n+1)}$.

The above establishes that $C$ is closed and has zero Lebesgue measure. Therefore, its complementary set $NC \equiv \mathbb{R}^{d(n+1)} \backslash C$ is open-dense and has full Lebesgue measure (as any open full-measure set is dense).

\subsection{Failure of Manipulability in Two-Dimensional Spatial Politics}\label{Section-Supplementary2D}
Using a three-voter example, we illustrate the assertion from \Cref{subsection:spatial} that, when there are $d=2$ policy dimensions, manipulability fails whenever the agenda setter's ideal point lies outside the convex hull of voter ideal points; this analysis straightforwardly extends to a general (odd) number of voters, provided that no $3$ of their ideal points are collinear (which is generically satisfied). 

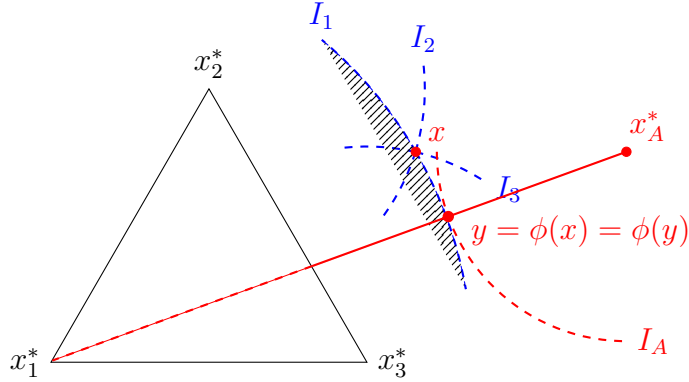
\begin{figure}[h!]
\centering
\begin{tikzpicture}[scale=.7]
\draw (0,0)coordinate (1) -- (0:6) coordinate (3) -- ++ (+120:6) coordinate (2) --cycle;
\draw (1)node[left]{$x_1^*$} (2)node[above]{$x_2^*$} (3)node[right]{$x_3^*$} ;



\draw[blue, thick,dashed] (30:8) arc
[
start angle=30,
end angle=50,
x radius=8,
y radius=8
]

(30:8) arc
[
start angle=30,
end angle=10,
x radius=8,
y radius=8
]
; 

\draw[blue] (50:8) node[above]{$I_1$};

\draw[blue, thick,dashed] (2)++(-17:4.1) arc
[
start angle=-17,
end angle=7,
x radius=4.1,
y radius=4.1
]

(2)++(-17:4.1) arc
[
start angle=-17,
end angle=-37,
x radius=4.1,
y radius=4.1
]
;

\draw[blue] (2)++(7:4.1) node[above]{$I_2$};

\draw[blue, thick,dashed] (3)++(77:4.1) arc
[
start angle=77,
end angle=57,
x radius=4.1,
y radius=4.1
]

(3)++(77:4.1) arc
[
start angle=77,
end angle=97,
x radius=4.1,
y radius=4.1
]
;

\draw[blue] (3)++(57:4.1) node[right,yshift=-3]{$I_3$};


\draw[blue,pattern=north east lines,dashed] (10:8) arc
[
start angle=10,
end angle=50,
x radius=8,
y radius=8
]
;


\filldraw[red] (30:8) circle (2.5pt)node[above,xshift=8]{$x$};

\filldraw[red] (30:8)++(0:4) circle (2.5pt)node[above,xshift=8]{$x^*_A$};

\draw[red,dashed,thick](30:8)++(0:4)++(180:3.6) arc
[
start angle=180,
end angle=270,
x radius=3.6,
y radius=3.6
]
;

\draw[red] (30:8)++(0:4)++(270:3.6) node[right]{$I_A$};

\filldraw[red,thick] (30:8)++(0:4)--++(200:3.6)circle(2.5pt)node[below,xshift=50,yshift=5]{$y=\phi(x)=\phi(y)$} --++(200:2.75);

\filldraw[red,thick,dashed] (30:8)++(0:4)++(200:3.6)++(200:2.75) -- ++(200:5.25);

\end{tikzpicture}
\caption{A failure of manipulability in the two-dimensional spatial model.}
      \label{fig:two-dim-spatial}
\end{figure}
Consider the situation depicted in \Cref{fig:two-dim-spatial} where $(x_i^*)_{i=1,2,3,A}$ depicts the profile of ideal points. We first observe that all policies on the line segment between $x_1^*$ and $x_A^*$ and outside the interior of the convex hull of voter ideal points---this is the solid red line---are unimprovable. To see why, note that for any such policy---such as policy $y$ in the figure---a majority of voters favor another policy, say $z$, to $y$ only if voter $1$ also favors $z$ to $y$.\footnote{To put it differently, voters $2$ and $3$ cannot favor $z$ to $y$ without voter $1$ also doing so.} Because voter $1$'s indifference curve passing through $y$ is tangent to the agenda setter's indifference curve passing through $y$, there is no policy that the agenda setter and voter $1$ prefer to $y$. Thus, $y$ is unimprovable. 

Necessarily, this eliminates any prospect for a dictatorial power result starting from \emph{any} default option: beginning with a default option of $y$ implies that it is the unique \hyperlink{NonCapricious}{Non-Capricious} equilibrium outcome, since the agenda setter's unique favorite improvement at this policy is $y$ itself.\footnote{Formally, the agenda setter's one-round improvement correspondence $\Phi^\text{or}$ (as defined in \Cref{Equation-PhiOR} on p. \pageref{Equation-PhiOR}) satisfies $\Phi^\text{or}(y) =\{y\}$ and the above assertion follows from \Cref{Lemma-NC-full2}(b) on p. \pageref{Lemma-NC-full2}. With a slight abuse of notation, in \Cref{fig:two-dim-spatial} we let $\phi(\cdot)$ denote the unique element of $\Phi^\text{or}(\cdot)$ at points where this correspondence is singleton-valued, and refer to this policy as the agenda setter's unique favorite improvement.}  Interestingly, it also prevents the agenda setter from fully exploiting real-time agenda control even from improvable default options (off this line segment). For example, suppose $x$ is the initial default option. The agenda setter's unique favorite improvement from $x$ is the unimprovable policy $y$, which implies that regardless of the number of rounds, the unique \hyperlink{NonCapricious}{Non-Capricious} equilibrium outcome is $y$. Of course, this logic does not merely apply to the policies $x$ and $y$, but is more general: there exists an open set of policies such that if the initial default option belongs to this open set, the unique \hyperlink{NonCapricious}{Non-Capricious} equilibrium outcome is bounded away from the agenda setter's ideal point. To put it differently, even if the set of unimprovable policies is measure-$0$ in $\Re^2$, strategic forces may compel negotiations to reach the set of unimprovable policies, contravening a dictatorial power result.

\subsection{Proof of \Cref{Theorem-ThreeRounds} on p. \pageref{Theorem-ThreeRounds}}

\paragraph{Step 1: Equilibrium Outcomes for General Voting Rules.} We first introduce notation that extends that from \cref{appendix:thm3-proof} to general voting rules. Given any voting rule $\mathcal{D}$ and $x \in X$, denote the \emph{weak} $\mathcal{D}$-acceptance set $M^\text{w}_\mathcal{D}(x) \equiv \{y \in X : y\prefto_i x \ \forall i \in D, \ \exists D \in \mathcal{D}\}$, the \emph{strict} $\mathcal{D}$-acceptance set $M^\text{s}_\mathcal{D}(x) \equiv \{y \in X : y \succ_i x \ \forall i \in D, \ \exists D \in \mathcal{D}\}$, and the \emph{almost-strict} $\mathcal{D}$-acceptance set  $M^\text{as}_\mathcal{D}(x) \equiv \text{cl} \left[ M^\text{s}_\mathcal{D}(x)\right] \cup\{x\}$. Define the agenda setter's \emph{favorite almost-strict $\mathcal{D}$-improvement} value function $V^\text{as}_A(\cdot \mid \mathcal{D}) : X \to \mathbb{R}$ by $V_A^{\text{as}}(x\mid\mathcal{D}) \equiv \max_{y \in M^{\text{as}}(x)} u_A(y)$, and her \emph{one-round $\mathcal{D}$-improvement} correspondence $\Phi^\text{or}_\mathcal{D} : X \rightrightarrows X$ by
\begin{equation}
\Phi^\text{or}_\mathcal{D}(x) \equiv \left\{ y \in X : y \in M^\text{w}_\mathcal{D}(x) \text{ and } u_A(y) \geq V^{\text{as}}_A(x\mid\mathcal{D}) \right\}. 
\end{equation}
We denote the set of \hyperlink{NonCapricious}{Non-Capricious} equilibria in the $T$-round game with initial default $x^0$ and voting rule $\mathcal{D}$ by $\Sigma_\mathcal{D}(x^0,T)$. Given any $\sigma \in \Sigma_\mathcal{D}(x^0,T)$,  we denote the equilibrium outcome by $g^\sigma_{T}(x^0\mid \mathcal{D})$. 

The following generalizes \Cref{Lemma-NC-full2}(b) in  \cref{appendix:thm3-proof} to arbitrary voting rules:

\begin{lemma}\label{Lemma-NC-D}
For any collective choice problem $\CCP$ and voting rule $\mathcal{D}$, the following holds:
\begin{quote}
    For any $x^0 \in X$, $T \in \mathbb{N}$, and \hyperlink{NonCapricious}{Non-Capricious} equilibrium $\sigma \in \Sigma_\mathcal{D}(x^0,T)$, there exists a collection $\{\hat{\phi}_t (\cdot) \}_{\tau=1}^T$ of selections $\hat{\phi}_t (\cdot) \in \Phi^{\text{or}}_\mathcal{D}(\cdot)$ such that the equilibrium outcome is given by 
    \[
    g_T^\sigma(x^0\mid\mathcal{D}) = \left[\hat{\phi}_1 \circ \hat{\phi}_{2} \circ \cdots \circ \hat{\phi}_T \right](x^0)
    \]
    and, for every $x \in X$, we have $\hat{\phi}_t (x) \sim_A \hat{\phi}_{T} (x)$ for all $1 \leq t \leq T$.
\end{quote}
\end{lemma}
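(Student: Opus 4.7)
The plan is to follow the proof of \Cref{Lemma-NC-full2}(b) from \Cref{Section-ProofLemmaNCfull2} essentially verbatim, substituting the voting rule $\mathcal{D}$ for simple majority everywhere. Inspection of that earlier argument reveals that it nowhere exploits structural features of majority rule: it relies only on the fact that, at any given history, the set of proposals that pass is determined by some acceptance correspondence that is compatible with the unanimity-like bounds $M^\text{s}(\cdot) \subseteq M^\sigma(\cdot) \subseteq M^\text{w}(\cdot)$. Replacing these with $M^\text{s}_{\mathcal{D}}(\cdot) \subseteq M^\sigma_{\mathcal{D}}(\cdot) \subseteq M^\text{w}_{\mathcal{D}}(\cdot)$ leaves every step intact.

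First, I would state and prove the $\mathcal{D}$-analog of \Cref{Lemma-General1}: namely that $\mathcal{D}$-unimprovable policies are exactly fixed points of $\Phi^\text{or}_{\mathcal{D}}$, and a policy $y$ is a \hyperlink{NonCapricious}{Non-Capricious} equilibrium outcome of the one-round game with default $x$ (under voting rule $\mathcal{D}$) if and only if $y \in \Phi^\text{or}_{\mathcal{D}}(x)$. The ``if'' direction is established by exhibiting the pure-strategy profile in which each voter $i \in N$ accepts a proposal $z$ exactly when $z \succ_i x$ or ($z \sim_i x$ and $z = y$); the ``only if'' direction combines the sequential rationality of voters in some winning coalition with the agenda setter's ability to obtain any policy in $M^\text{s}_{\mathcal{D}}(x)$, using the closure construction underlying $M^\text{as}_{\mathcal{D}}$.

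Next, I would establish the $\mathcal{D}$-analogs of \Cref{Obs-NC1,Obs-NC2,Obs-NC3,Obs-NC4}. In \Cref{Obs-NC1}, non-capriciousness of voter tie-breaking in the terminal round (\Cref{defi:NC}(b), applied with continuation outcomes equal to $y$ or $x^{T-1}$) produces an acceptance correspondence $M^\sigma_{\mathcal{D}}$ satisfying the $\mathcal{D}$-bounds, and the continuation outcome $\hat{\phi}_T(x^{T-1})$ maximizes $u_A$ on $M^\sigma_{\mathcal{D}}(x^{T-1})$ and lies in $\Phi^\text{or}_{\mathcal{D}}(x^{T-1}) \cap M^\sigma_{\mathcal{D}}(x^{T-1})$. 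In \Cref{Obs-NC2}, non-capriciousness transports this acceptance correspondence to every earlier round, since \Cref{defi:NC}(b) demands that a voter's tie-breaking depends only on the pair of continuation outcomes, not on the calendar or on the history. \Cref{Obs-NC3} is a purely combinatorial observation about reachable continuation outcomes and requires no modification.

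The backward-induction argument of \Cref{Obs-NC4} then produces, for any given non-capricious equilibrium $\sigma$, selections $\hat{\phi}_t(\cdot) \in \Phi^\text{or}_{\mathcal{D}}(\cdot) \cap M^\sigma_{\mathcal{D}}(\cdot)$ such that $g^\sigma_{T,x^0}(x^{t-1}\mid t) = [\hat{\phi}_t \circ \cdots \circ \hat{\phi}_T](x^{t-1})$ for all round-$t$ defaults $x^{t-1}$, with $\hat{\phi}_t(x)$ extended arbitrarily (e.g., to $\hat{\phi}_T(x)$) on policies $x$ never arising as continuation outcomes. The agenda-setter indifference claim $\hat{\phi}_t(x) \sim_A \hat{\phi}_T(x)$ follows from the same contradiction argument as before: any strict preference $\hat{\phi}_t(x) \succ_A \hat{\phi}_T(x)$ at some $x$ would furnish a profitable round-$T$ deviation at any terminal history with default $x$, since $\hat{\phi}_t(x)\in M^\sigma_{\mathcal{D}}(x)$ by construction. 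I do not expect any genuine obstacle beyond bookkeeping; the only mildly delicate point is confirming that the definition of $M^\text{as}_{\mathcal{D}}$ via closure supplies the right ``approximate openness'' in the ``only if'' part of the $\mathcal{D}$-analog of \Cref{Lemma-General1}, which relies solely on continuity of preferences and carries over unchanged.
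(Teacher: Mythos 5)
Your proposal is correct and matches the paper's approach exactly: the paper states that the proof of this lemma is identical to that of \Cref{Lemma-NC-full2}(b) modulo the notational substitution of $M^\text{s}_\mathcal{D}, M^\text{w}_\mathcal{D}, M^\text{as}_\mathcal{D}, \Phi^\text{or}_\mathcal{D}$ for their majority-rule counterparts, and omits it. Your walk-through of the $\mathcal{D}$-analogs of \Cref{Lemma-General1} and \Cref{Obs-NC1,Obs-NC2,Obs-NC3,Obs-NC4} simply spells out the verification the paper leaves implicit.
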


The proof of \Cref{Lemma-NC-D} is identical to that of \Cref{Lemma-NC-full2}(b) modulo the notational adaptation described above, and hence omitted. 

\medskip\noindent\textbf{Step 2: Properties of Distribution Problems.} We now characterize the agenda setter's favorite policies and one-round $\mathcal{D}$-improvement operator in \hyperlink{DistributionProblem}{Distribution Problem}s. Throughout our analysis in this Step, we restrict attention to \hyperlink{DistributionProblem}{Distribution Problem}s, assume that \hyperlink{TI}{Thin Individual Indifference} holds, and consider a veto-proof voting rule $\mathcal D$. We denote the set of weakly Pareto efficient policies by $P\equiv\{x\in X:\nexists y\text{ such that }\forall i\in N\cup\{A\},\, y\succ_i x \}$. For any policy $x $, we define its \emph{support} by $\text{supp}(x) \equiv \{ i \in N : u_i(x) > \underline{u}_i\}$, viz., the set of voters for whom $x$ is not a least-preferred policy. The following claim demonstrates that the agenda setter's favorite policies are precisely those that are weakly Pareto efficient and leave all voters with minimal utility:

\begin{claim}\label{Lemma-DP-Xstar}
$X^*_A = \{x \in P :\text{supp}(x) = \emptyset \}$.
\end{claim}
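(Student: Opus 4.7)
The plan is to prove both inclusions by directly invoking the Scarcity and Transferability axioms of \hyperlink{DistributionProblem}{Distribution Problem}s, with the agenda setter playing the role of the ``favored'' player. The claim is essentially a compact restatement of what these axioms imply about the structure of $u_A$-maximizers; \hyperlink{TI}{Thin Individual Indifference} plays no role in the argument.

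For the inclusion $X^*_A \subseteq \{x \in P : \text{supp}(x) = \emptyset\}$, I would fix $x \in X^*_A$ and argue as follows. First, $x \in P$: if some $y$ strictly Pareto-dominated $x$, then in particular $u_A(y) > u_A(x)$, contradicting $x \in X^*_A$. Second, $\text{supp}(x) = \emptyset$: if there existed some voter $i \in N$ with $u_i(x) > \min_{z \in X} u_i(z)$, then \hyperlink{Bullet:Transferability}{Transferability} (applied to player $i$) would yield a policy $y$ with $u_j(y) > u_j(x)$ for all $j \neq i$; taking $j = A$ again contradicts $x \in X^*_A$.

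For the reverse inclusion, I would fix $x \in P$ with $\text{supp}(x) = \emptyset$ and suppose, towards a contradiction, that $u_A(x) < \max_{z \in X} u_A(z)$. Applying \hyperlink{Bullet:Scarcity}{Scarcity} with $i = A$ gives two possibilities. Either (i) some $j \neq A$ satisfies $u_j(x) > \min_{z \in X} u_j(z)$, but then $j \in \text{supp}(x)$, contradicting $\text{supp}(x) = \emptyset$; or (ii) some $y$ satisfies $u_k(y) > u_k(x)$ for every player $k$, contradicting $x \in P$. Both branches fail, so $x \in X^*_A$.

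Since each step is a one-line application of the relevant axiom, there is no real obstacle: the only care required is to check that the cases in Scarcity and Transferability line up with the definitions of $P$ and $\text{supp}(\cdot)$ (in particular, that Scarcity's ``$j \neq i$'' with $i = A$ picks out a voter in $N$, so that $\text{supp}(x) = \emptyset$ is the exactly the condition that rules out case (i)).
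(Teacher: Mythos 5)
Your proof is correct and follows essentially the same route as the paper: Scarcity applied with $i=A$ gives the inclusion $\{x \in P : \text{supp}(x)=\emptyset\} \subseteq X^*_A$, and weak Pareto efficiency plus Transferability (applied to a voter with surplus) gives the reverse inclusion; the paper merely phrases both directions by contraposition rather than by direct contradiction. Your observation that Thin Individual Indifference is not needed here is also consistent with the paper's argument.
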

\begin{proof}
For any $x \notin X^*_A$, \hyperlink{Bullet:Scarcity}{Scarcity} implies that  $x \notin P$ or $\text{supp}(x) \neq \emptyset$. By contraposition, it follows that $ \{x \in P :\text{supp}(x) = \emptyset \}\subseteq X_A^*$. For the opposite inclusion, consider a policy $y \notin \{x \in P :\text{supp}(x) = \emptyset \}$; we establish that $y\notin X_A^*$. If $y \notin P$, then by definition there exists a strongly Pareto dominating $z \in X$, and therefore, $y \notin X^*_A$. If $\text{supp}(y)\neq \emptyset$, then by definition there exists some voter $i \in N$ such that $u_i(y) > \underline{u}_i$. \hyperlink{Bullet:Transferability}{Transferability} then implies that there exists some $z \in X$ such that $z \succ_j x$ for all players $j \neq i$, including $j = A$; hence, $y \notin X^*_A$. By contraposition, it follows that $X^*_A \subseteq \{x \in P :\text{supp}(x) = \emptyset \}$.
\end{proof}

The next claim characterizes $\Phi^\text{or}_\mathcal{D}$. Given any policies $x,y\in X$, we let $L(y \mid x) \equiv \{i \in N: y \prec_i x \}$ denote the set of voters who are \emph{losers} if the implemented policy changes from $x$ to $y$. We say that voter $i \in N$ is \emph{minimized} at $x\in X$ if $i \notin \text{supp}(x)$. 
\begin{claim}\label{Lemma-PhiOR-DP}
For every $x \in X$ and $y \in \Phi^\text{or}(x)$, the following hold:
\begin{enumerate}[label={\normalfont (\alph*)},noitemsep]
\item $y$ is weakly Pareto efficient: $y \in P$.
\item Losers are minimized: $L(y\mid x) = \text{supp}(x) \backslash \text{supp}(y)$.
\item Minimized voters remain minimized: $\text{supp}(y) \subseteq \text{supp}(x)$.
\item Minimal winning coalition: $\nexists D \in \mathcal{D}$ and $i \in \text{supp}(y)$ such that $y \prefto_j x$ $ \forall j \in D$ and $D \backslash\{i\} \in \mathcal{D}$.
\end{enumerate}
\end{claim}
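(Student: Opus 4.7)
The plan is to prove all four parts by contradiction, in each case constructing from $y$ a policy $z \in M^{\text{as}}_\mathcal{D}(x)$ with $u_A(z) > u_A(y)$, which contradicts that $y \in \Phi^{\text{or}}_\mathcal{D}(x)$ achieves the maximum of $u_A$ on $M^{\text{as}}_\mathcal{D}(x)$. First, I would fix a witnessing winning coalition $D' \in \mathcal{D}$ with $y \prefto_j x$ for every $j \in D'$, which exists because $y \in M^{\text{w}}_\mathcal{D}(x)$.

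Parts (a), (b), and (d) follow by straightforward applications of Transferability. For (a), if $y$ were strongly Pareto-dominated by some $z$, then $z \succ_j y \prefto_j x$ for $j \in D'$ would give $z \succ_j x$, placing $z$ in $M^{\text{s}}_\mathcal{D}(x)$ with $u_A(z) > u_A(y)$. For (b), the inclusion $\text{supp}(x) \backslash \text{supp}(y) \subseteq L(y \mid x)$ is immediate from $u_i(x) > \underline{u}_i = u_i(y)$; conversely, $i \in L(y \mid x)$ forces $i \notin D'$, and if in addition $u_i(y) > \underline{u}_i$, Transferability at $y$ with respect to $i$ produces $z$ with $z \succ_j y$ for all $j \neq i$, so $z \succ_j x$ for every $j \in D'$ and $z \in M^{\text{s}}_\mathcal{D}(x)$. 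For (d), if $i \in D \cap \text{supp}(y)$ satisfies the hypothesis, the same Transferability at $y$ with respect to $i$ yields $z \succ_j x$ for every $j \in D \backslash \{i\}$, and since $D \backslash \{i\} \in \mathcal{D}$, we have $z \in M^{\text{s}}_\mathcal{D}(x)$. All three contradictions then come from $u_A(z) > u_A(y)$.

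Part (c) is more subtle. I would suppose $i \in \text{supp}(y) \backslash \text{supp}(x)$, so that $u_i(y) > \underline{u}_i = u_i(x)$ and $y \succ_i x$ strictly, and apply Transferability at $y$ with respect to $i$ to obtain $z$ with $z \succ_j y$ for all $j \neq i$ and hence $u_A(z) > u_A(y)$. If $i \notin D'$, or if $i \in D'$ and $u_i(z) > \underline{u}_i$, then $z \succ_j x$ for every $j \in D'$ and $z \in M^{\text{s}}_\mathcal{D}(x)$, as before. The knife-edge subcase is $i \in D'$ together with $u_i(z) = \underline{u}_i$, so that $z \sim_i x$ while $z \succ_j x$ strictly for every $j \in D' \backslash \{i\}$. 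Here I would invoke Thin Individual Indifference: since $I_i(z) \backslash \{z\}$ has empty interior, every neighborhood of $z$ contains points $z'$ with $u_i(z') > \underline{u}_i$, and continuity of each $u_j$ lets me choose such $z'$ close enough to $z$ to preserve $u_j(z') > u_j(x)$ for every $j \in D' \backslash \{i\}$. Such $z'$ lie in $M^{\text{s}}_\mathcal{D}(x)$, so $z \in \text{cl}\left[M^{\text{s}}_\mathcal{D}(x)\right] \subseteq M^{\text{as}}_\mathcal{D}(x)$, again contradicting the maximality of $y$.

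The main obstacle is the knife-edge subcase in (c): when Transferability happens to drive voter $i$ all the way down to her minimum utility at $z$, the policy $z$ itself need not lie in $M^{\text{s}}_\mathcal{D}(x)$, and membership in the almost-strict acceptance set must be recovered via closure. This is precisely the step where Thin Individual Indifference does work beyond Transferability alone, letting me perturb $z$ to lift voter $i$'s payoff off her minimum without disturbing the strict preferences of the remaining members of $D'$ over $x$.
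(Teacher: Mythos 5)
Your proposal is correct and follows essentially the same route as the paper: fix a witnessing winning coalition $D'$, obtain (a), (b), and (d) from Transferability-based contradictions, and isolate exactly the same knife-edge subcase in (c), resolved by a Thin-Individual-Indifference perturbation (your only cosmetic difference is concluding via $z\in\text{cl}\left[M^\text{s}_\mathcal{D}(x)\right]\subseteq M^\text{as}_\mathcal{D}(x)$, whereas the paper perturbs to some $z'\in M^\text{s}_\mathcal{D}(x)$ with $z'\succ_A y$ and uses $z'$ directly; both are valid given the definition of $\Phi^\text{or}_\mathcal{D}$ through $V^\text{as}_A$). One small precision is needed in that subcase: you should invoke Thin Individual Indifference at $x$ rather than at $z$ --- i.e., apply emptiness of the interior of $I_i(x)\backslash\{x\}$ to a ball around $z$ of radius less than $d(z,x)$, noting $I_i(z)=I_i(x)$ since $z\sim_i x$ --- because emptiness of the interior of $I_i(z)\backslash\{z\}$ by itself does not rule out that the only point of a neighborhood lying off the indifference set is $z$ itself (e.g., if $z$ were isolated); with that one-line fix, which is exactly the paper's construction, your argument goes through.
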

\begin{proof}
Let $x \in X$ and $y \in \Phi^\text{or}_\mathcal{D}(x)$ be given. We prove each point by contradiction. For parts (a)-(c), take as given a winning coalition $D' \in \mathcal{D}$ such that $y \prefto_i x$ for all $i \in D'$ (existence of which is guaranteed because $\Phi^\text{or}_\mathcal{D}(x) \subseteq M^\text{w}_\mathcal{D}(x)$).

For (a), suppose $y\notin P$: then there exists a policy $z$ such that $z\succ_i y$ for every player $i$. Hence, $z \in M^\text{s}_\mathcal{D}(x)$ (as $z \succ_i y\prefto_i x$ for all $i \in D'$) and $z \succ_A y$, contradicting that $y \in \Phi^\text{or}_\mathcal{D}(x)$.

For (b), it suffices to establish the inclusion $L(y\mid x) \subseteq \text{supp}(x) \backslash \text{supp}(y)$, as the opposite inclusion is tautological. Observe that by definition, $L(y\mid x)\subseteq \text{supp}(x)$. Suppose towards a contradiction that there exists some voter $i \in L(y\mid x) \cap \text{supp}(y)$. It then follows, by definition of $\text{supp}(y)$, that  $u_i(y) > \underline{u}_i$. Transferability implies that there exists a policy $z$ such that $z \succ_j y$ for all players $j \neq i$, including $j = A$ and all $j \in D'$. Hence, $z \in M^\text{s}_\mathcal{D}(x)$ and $z \succ_A y$, contradicting that $y\in \Phi^\text{or}_\mathcal{D}(x)$.

For (c), suppose there exists a voter $i \in \text{supp}(y) \backslash \text{supp}(x)$. As $u_i(y) > \underline{u}_i$, Transferability implies that there exists some $z \in X$ such that $z \succ_j y$ for all players $j \neq i$, including $j = A$ and all $j \in D'\backslash\{i\}$. As $u_i(z) \geq u_i(x) = \underline{u}_i$ by definition, there are two cases. First, if $z\succ_i x$, then $z \in M^\text{s}_\mathcal{D}(x)$ and $z \succ_A y$, contradicting that $y\in \Phi^\text{or}_\mathcal{D}(x)$. Second, if $z \sim_i x$, then since $z \neq x$ (as $z \succ_A y \prefto_A x$), we have $z \in I_i(x) \backslash\{x\}$ and $d(z,x)>0$. For any $\epsilon \in (0, d(z,x))$,  we have $x  \notin  B_\epsilon(z)$ and hence $B_\epsilon(z)\backslash \left[ I_i(x) \backslash \{x\} \right] = B_\epsilon(z)\backslash I_i(x)$. \hyperlink{TI}{Thin Individual Indifference} then implies that, for every such $\epsilon>0$, there exists some  $z' \in B_\epsilon(z) \backslash I_i(x)$. As $I_i(x) = \{x'  \in X : u_i(x') = \underline{u}_i\}$, any such $z'$ satisfies $z' \succ_i x$. Moreover, by continuity of players' preferences, there exists a small enough $\epsilon>0$ and corresponding $z'\in B_\epsilon(z)\backslash I_i(x)$ such that $z' \succ_j y$ for all $j \neq i$. Then $z' \in M^\text{s}_\mathcal{D}(x)$ (as $z'\succ_j y \prefto_j x$ for all $j\in D'\backslash\{i\}$ and $z' \succ_i x$) and $z'  \succ_A y$, contradicting that $y\in \Phi^\text{or}_\mathcal{D}(x)$.

For (d), suppose that such $D \in \mathcal{D}$ and $i \in \text{supp}(y)$ exist. Transferability implies that there exists some $z \in X$ such that $z \succ_j y$ for all $j \neq i$, including all $j \in D \backslash\{i\}$ and $j =A$. As $D \backslash\{i\} \in \mathcal{D}$, this implies that $z \in M^\text{s}(x)$ (as $z \succ_j y \prefto_j x$ for all $j \in D \backslash\{i\}$) and $z \succ_A y$, contradicting that $y\in \Phi^\text{or}_\mathcal{D}(x)$.
\end{proof}

\medskip\noindent\textbf{Step 3: Main Argument for \Cref{Theorem-ThreeRounds}.} By virtue of \Cref{Lemma-NC-D}, the following lemma implies  \Cref{Theorem-ThreeRounds}.

\begin{lemma}\label{Lemma-DP-Final}
Suppose $\CCP$ is a \hyperlink{DistributionProblem}{Distribution Problem} satisfying \hyperlink{TI}{Thin Individual Indifference}. 
\begin{enumerate}[label={\normalfont (\alph*)}]
\item If $\mathcal{D}$ is a quota rule with quota $q <n$, then for any collection $\{\hat{\phi}_t\}_{t=1}^T$ of selections $\hat{\phi}_t(\cdot) \in \Phi^\text{or}_\mathcal{D}(\cdot)$, the following holds:
\[
\text{If $T \geq \left\lceil \frac{n}{n-q}\right\rceil$, then $\left[ \hat{\phi}_1 \circ \cdots \circ \hat{\phi}_T\right](x) \in X^*_A$ for all $x \in X$.}
\]
\item If $\mathcal{D}$ is a veto-proof voting rule, then for any collection $\{\hat{\phi}_t\}_{t=1}^T$ of selections $\hat{\phi}_t(\cdot) \in \Phi^\text{or}_\mathcal{D}(\cdot)$, the following holds:
\[
\text{If $T \geq n$, then $\left[ \hat{\phi}_1 \circ \cdots \circ \hat{\phi}_T\right](x) \in X^*_A$ for all $x \in X$.}
\]
\end{enumerate}
\end{lemma}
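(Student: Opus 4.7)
} My plan is to establish a \emph{support-contraction} bound on the one-round improvement operator $\Phi^\text{or}_\mathcal{D}$, iterate it along the composition, and conclude via \cref{Lemma-DP-Xstar}. The key claim is: for every $x \in X$ and $y \in \Phi^\text{or}_\mathcal{D}(x)$,
\begin{itemize}
    \item[(i)] (Quota $q < n$) $|\mathrm{supp}(y)| \leq \max\{|\mathrm{supp}(x)| - (n-q), 0\}$;
    \item[(ii)] (Veto-proof) $|\mathrm{supp}(y)| \leq \max\{|\mathrm{supp}(x)| - 1, 0\}$.
\end{itemize}

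To prove this claim, I first combine parts (b) and (c) of \cref{Lemma-PhiOR-DP} to identify the set $W := \{j \in N : y \succeq_j x\}$ of weak supporters. Part (b) yields $N \setminus W = L(y\mid x) = \mathrm{supp}(x) \setminus \mathrm{supp}(y)$ (using (c) for $\mathrm{supp}(y) \subseteq \mathrm{supp}(x)$), so $|W| = n - |\mathrm{supp}(x)| + |\mathrm{supp}(y)|$ and, crucially, $\mathrm{supp}(y) \subseteq W$. Since $y \in M^{\mathrm{w}}_\mathcal{D}(x)$, $W$ contains some winning coalition. For the quota case, I argue by contraposition: if $\mathrm{supp}(y) \neq \emptyset$ and $|W| \geq q+1$, then taking $D := W$ and any $i \in \mathrm{supp}(y) \subseteq D$ gives $D \in \mathcal{D}$ and $|D \setminus \{i\}| \geq q$, so $D \setminus \{i\} \in \mathcal{D}$—contradicting (d). Hence, either $\mathrm{supp}(y) = \emptyset$ or $|W| = q$; in the latter case the support-count identity gives $|\mathrm{supp}(y)| = |\mathrm{supp}(x)| - (n-q)$, and when this would be nonpositive the first alternative must hold. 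For the veto-proof case, if $\mathrm{supp}(y) = \mathrm{supp}(x) \neq \emptyset$, then $W = N$; picking $i \in \mathrm{supp}(y)$ and invoking veto-proofness to find $D \in \mathcal{D}$ with $i \notin D$ again gives $D \subseteq W$ with $D \setminus \{i\} = D \in \mathcal{D}$, contradicting (d). Thus $\mathrm{supp}(y) \subsetneq \mathrm{supp}(x)$ whenever $\mathrm{supp}(x) \neq \emptyset$.

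Given the support-contraction bound, the iteration is routine. Writing $y_{T+1} := x$ and $y_t := \hat{\phi}_t(y_{t+1})$, so that $y_1$ equals the composition in the statement, a straightforward induction on $t$ (downward from $T+1$) gives $|\mathrm{supp}(y_t)| \leq \max\{n - (T-t+1)(n-q), 0\}$ in case (a) and $|\mathrm{supp}(y_t)| \leq \max\{n - (T-t+1), 0\}$ in case (b). Setting $t = 1$ and using the hypothesis that $T \geq \lceil n/(n-q) \rceil$ (respectively, $T \geq n$) yields $\mathrm{supp}(y_1) = \emptyset$. Since $y_1 \in P$ by \cref{Lemma-PhiOR-DP}(a), \cref{Lemma-DP-Xstar} delivers $y_1 \in X^*_A$, completing the proof.

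The main obstacle is selecting the right coalition $D$ and voter $i$ to activate the minimal-winning-coalition condition (d). In particular, one must verify that $\mathrm{supp}(y) \subseteq W$ (so that the arithmetic relating $|W|$ to $|\mathrm{supp}(x)|$ and $|\mathrm{supp}(y)|$ is tight), and one must handle the degenerate possibility $y = x$: this is automatically precluded by the contradiction argument whenever $\mathrm{supp}(x) \neq \emptyset$, so that the bound correctly forces the support to contract at the asserted rate even if the selection is somewhere stationary.
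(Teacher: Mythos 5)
Your proposal is correct and follows essentially the same route as the paper's proof: it uses the structural properties of $\Phi^\text{or}_\mathcal{D}$ on Distribution Problems (parts (a)--(d) of \cref{Lemma-PhiOR-DP}) to show that the support shrinks by at least $n-q$ (resp.\ at least one voter) per application, iterates this bound, and concludes via \cref{Lemma-DP-Xstar}. The only cosmetic difference is that you derive a one-sided support-contraction inequality through the weak-supporter set $W$, whereas the paper pins down the loser set exactly; the inequality is all that the lemma requires.
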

\begin{proof}
We begin by establishing part (a). Let the quota rule $\mathcal{D}$ with quota $q<n$, number of rounds $T \geq \lceil n/(n-q)\rceil$, and $\Phi^\text{or}_\mathcal{D}$-selections $\{\hat{\phi}\}_{t=1}^T$ be given. Let $x \in X$ be given and define $z_t \equiv \left[ \hat{\phi}_t \circ \cdots \circ \hat{\phi}_T\right](x)$ for all $t \in \{1,\dots,T\}$, with $z_{T+1} \equiv x$. We must show that $z_1 \in X^*_A$. \Cref{Lemma-PhiOR-DP}(a) implies that $z_t \in P$ for all $t$. Thus, by \Cref{Lemma-DP-Xstar}, it suffices to show that $\text{supp}(z_1) = \emptyset$. 

To that  end, we claim that for every $t\in \{1,\ldots,T\}$, 
\begin{align}\label{Equation-ThreeRounds}
   \text{supp}(z_t) \subseteq \text{supp}(z_{t+1}) \ \ \text{ and } \ \ |\text{supp}(z_{t+1}) \backslash \text{supp}(z_{t})| = \min\{n-q, |\text{supp}(z_{t+1})|\}.
\end{align}
Observe that \eqref{Equation-ThreeRounds} implies that for every $t \in \{1,\dots,T\}$, 
\begin{align*}
|\text{supp}(z_{t})| &= |\text{supp}(z_{t+1})| - \min\{n-q, |\text{supp}(z_{t+1})|\} \\
&= \max\{|\text{supp}(z_{t+1})| - (n-q), 0\} \\
&= \max\{|\text{supp}(x)| - (T+1-t)(n-q),0\},
\end{align*}
where the first and second lines are identities and the third line follows from iteratively applying the preceding lines. This implies that $|\text{supp}(z_{1})| = 0$ if and only if $T \geq \text{supp}(x)/(n-q)$. As $n \geq \text{supp}(x)$ and $T \geq \lceil n/(n-q)\rceil$ by assumption, it follows that $\text{supp}(z_1)=\emptyset$. 

Therefore, it suffices to prove \eqref{Equation-ThreeRounds}. We do so by appealing to \Cref{Lemma-PhiOR-DP}(b)-(d), noting that $z_t = \hat{\phi}_t(z_{t+1}) \in \Phi^\text{or}_\mathcal{D}(z_{t+1})$ for all $t$ by construction. First, \Cref{Lemma-PhiOR-DP}(c) directly implies that $\text{supp}(z_t) \subseteq \text{supp}(z_{t+1})$. Next, \Cref{Lemma-PhiOR-DP}(b) implies that $L(z_t\mid z_{t+1}) = \text{supp}(z_{t+1})\backslash \text{supp}(z_{t})$. If $|\text{supp}(z_{t+1})|=0$, this proves the claim. So, assume that $|\text{supp}(z_{t+1})|>0$. We assert that $|L(z_t\mid z_{t+1})| =\min\{n-q, |\text{supp}(z_{t+1})|\}$. That $|L(z_t\mid z_{t+1})| \leq |\text{supp}(z_{t+1})|$ follows from $L(z_t\mid z_{t+1}) = \text{supp}(z_{t+1})\backslash \text{supp}(z_{t})$. That $|L(z_t\mid z_{t+1})| \leq n-q$ follows from: (i) $D \in \mathcal{D}$ if and only if $|D|\geq q$ and (ii) $L(z_t\mid z_{t+1}) \subseteq N\backslash D$ for any $D \in \mathcal{D}$ such that $z_t \prefto_i z_{t+1}$ for all $i \in D$. Hence, $|L(z_t\mid z_{t+1})| \leq \min\{n-q, |\text{supp}(z_{t+1})|\}$. Suppose, towards a contradiction, that $|L(z_t\mid z_{t+1})| < \min\{n-q, |\text{supp}(z_{t+1})|\}$. Because $|L(z_t\mid z_{t+1})| < |\text{supp}(z_{t+1})|$ and $L(z_t\mid z_{t+1}) \subseteq \text{supp}(z_{t+1})$, there exists some voter $i \in \text{supp}(z_{t+1}) \backslash L(z_t \mid z_{t+1}) = \text{supp}(z_{t})$. Because $|L(z_t\mid z_{t+1})| < n-q$, the set of voter $D \equiv N \backslash L(z_t\mid z_{t+1})$ satisfies $|D| > q$, implying $D \in \mathcal{D}$ and $D\backslash\{i\} \in \mathcal{D}$. Moreover, $z_t \prefto_j z_{t+1}$ for all $j \in D$ by construction. Therefore, \Cref{Lemma-PhiOR-DP}(d) implies  the contradiction that $z_t  \notin \Phi^\text{or}_\mathcal{D}(z_{t+1})$, as desired. 

This concludes the proof of the claim and thus part (a). The proof of part (b) is very similar, so we provide only a sketch. For a general veto-proof voting rule $\mathcal{D}$, the key claim is that $\text{supp}(z_t) \subseteq \text{supp}(z_{t+1})$ and $|\text{supp}(z_{t+1}) \backslash \text{supp}(z_{t})| \geq \min\{1,\text{supp}(z_{t+1}) \}$, which implies that $n$ rounds suffices by appeals to \Cref{Lemma-DP-Xstar} and \Cref{Lemma-PhiOR-DP}(a), coupled with calculations similar to those above. That $\text{supp}(z_t) \subseteq \text{supp}(z_{t+1})$ again follows directly from \Cref{Lemma-PhiOR-DP}(c). To  show that  $|\text{supp}(z_{t+1}) \backslash \text{supp}(z_{t})| \geq \min\{1,\text{supp}(z_{t+1}) \}$, it suffices to consider the case in which $\text{supp}(z_{t+1})\neq \emptyset$. \Cref{Lemma-PhiOR-DP}(b) implies that $L(z_t\mid z_{t+1}) = \text{supp}(z_{t+1})\backslash \text{supp}(z_{t})$. Suppose towards a contradiction that $|\text{supp}(z_{t+1}) \backslash \text{supp}(z_{t})| = 0$, which implies that (i) $\text{supp}(z_t) = \text{supp}(z_{t+1}) \neq\emptyset$ and (ii) $z_t \prefto_i z_{t+1}$ for all voters $i \in N \in \mathcal{D}$. By (i), there exists some $k \in \text{supp}(z_t)$. By (ii) and that there exists some $D \in \mathcal{D}$ with $i \notin D$ (as $\mathcal{D}$ is veto-proof), \Cref{Lemma-PhiOR-DP}(d) implies the desired contradiction $z_t  \notin \Phi^\text{or}_\mathcal{D}(z_{t+1})$, proving part (b).
\end{proof}

\subsection{Proof of \Cref{Theorem-Horizon} on p. \pageref{Theorem-Horizon}}

For a subset of policies $Y \subseteq X$, the following definitions are standard: $Y$ is \emph{internally stable} if there do not exist distinct $x,y \in Y$ such that $y \succ_M x$ and $y \succ_A x$. $Y$ is \emph{externally stable} if, for every $x \notin Y$, there exists some $y \in Y$ such that $y \succ_M x$ and $y \succ_A x$. $Y$ is \emph{stable} if it is both internally and externally stable. As shown by \cite{diermeier2012characterization}, there exists a unique stable set in the present setting, which we denote by $V$. Recall that $E = \{x \in X : x=\phi(x) \}$ denotes the set of unimprovable policies. Observe that $E \subseteq V$ since excluding any unimprovable policy would contradict the external stability of $V$.

Recall from \Cref{subsection:finite} that $M(x) := \{y \in X : y \succ_M x \text{ or } y=x\}$. Define the agenda setter's \emph{favorite stable improvement} $\psi(\cdot; V) :  X \to V$ by
\begin{equation}
 \{\psi(x; V)\} := \argmax_{y \in M(x) \bigcap V } u_A(y)\label{eqn:psi-def}
\end{equation}
which is well-defined because $V$ is externally stable. By definition, for every policy $x$, $\phi(x)\prefto_A \psi(x;V)$ and $\phi(x) = \psi(x;V)$ if and only if $\phi(x) \in V$. 

We recall the following characterizations of equilibrium outcomes and payoffs:
\begin{itemize}[noitemsep]
    \item \cref{lemma:structural} shows that in the $T$-round game with $T<\infty$, an initial default $x^0$ leads to the unique equilibrium outcome $\phi^T(x^0)$. The agenda setter's equilibrium payoff is denoted $U_T(x^0) \equiv u_A(\phi^T(x^0))$.
    \item \citet{diermeier2012characterization} and \citet{anesi2014bargaining} show that in the $T$-round game with $T=\infty$, an initial default $x^0$ leads to the unique MPE outcome $\psi(x^0; V)$.\footnote{Specifically, Propositions 1 and 2 in \citet{anesi2014bargaining}, specialized to the present setting with a single proposer and \hyperlink{GFA}{Generic Finite Alternatives}, imply the above characterization for \emph{some} stable set; as noted above, Lemmas 1-3 in \citet{diermeier2012characterization} show that the stable set $V$ exists and is unique in the present setting. Theorem 1 in \citet{diermeier2012characterization} provides an analogous characterization of MPE outcomes in the context of \posscite{diermeier2011legislative} infinite-horizon model (with discounting and no termination rule).} The agenda setter's equilibrium payoff is denoted $U_\infty(x^0) \equiv u_A(\psi(x^0; V))$.
\end{itemize}
Observe that these characterizations, and the fact that $\phi^{t+1}(\cdot) \prefto_A \phi^t(\cdot)$ for all $t$, immediately yields the initial claim in \Cref{Theorem-Horizon}. They also permit us to equivalently rephrase statements (a) and (b) from \Cref{Theorem-Horizon} as:
\begin{itemize}[noitemsep]
    \item[(a)] \emph{There exist $x^0 \in X$ such that $\phi^2 (x^0) \succ_A \phi(x^0) \succ_A \psi(x^0; V)$.}
    \item[(b)] \emph{For all $x^0 \in X$, $\phi^2 (x^0) = \phi (x^0) =   \psi(x^0; V)$.}
\end{itemize}
Thus, it suffices to show that statements (a) and (b) above are mutually exclusive and exhaustive. Mutual exclusivity is obvious; the argument below establishes exhaustiveness. 

Let $R := \{x \in X : \phi(x) \in E\}$ denote the set of \emph{at-most-once-improvable} policies. These are the default policies at which the agenda setter does not benefit from having more than a single round of proposals in the finite-horizon game because she obtains $\phi(x)$ with a single round, and $\phi(x)$ is unimprovable; by contrast, if $x\notin R$, the agenda setter strictly prefers having two or more rounds in the finite horizon to a single round.

We show that the following identity holds:
\begin{align}\label{Equation-DefinitionR}
    R=\{x\in X:\phi(x)=\psi(x;V)\text{ and }\phi^2(x)=\psi(\phi(x);V)\}.
\end{align}
To see why \eqref{Equation-DefinitionR} is true, suppose that $x$ is an element of the set on the RHS. Observe that $\phi(x)=\psi(x;V)$ implies that $\phi(x)\in V$. Because $V$ is internally stable, it then follows that $\psi(\phi(x);V)=\phi(x)$. Therefore, $\phi^2(x)=\psi(\phi(x);V)=\phi(x)$, which implies that $\phi(x)\in E$, and therefore, $x\in R$. Proceeding in the other direction, suppose $x\in R$. By definition of $R$, $\phi(x)\in E$, which implies that $\phi^2(x)=\phi(x)$. Because $E\subseteq V$, it also follows that $\phi(x)\in V$ and therefore, $\phi(x)=\psi(x;V)$. As $V$ is internally stable, $\psi(\phi(x);V)=\phi(x)$. Therefore,  $\phi^2(x)=\psi(\phi(x);V)$.

\Cref{Equation-DefinitionR} is at the core of our argument and we use it to show that Statements (a) and (b) reduce to the two exhaustive cases of $R\subsetneq X$ and $R=X$. Because of its relative simplicity, we begin with the latter. 

If $R=X$, then it follows from the definition of $R$ that for every policy $x$, $\phi^2(x)=\phi(x)$, as $\phi(x)$ is unimprovable; it also follows from \eqref{Equation-DefinitionR} that $\phi(x)=\psi(x;V)$. Therefore, (b) necessarily holds. 

If $R\subsetneq X$, then there is a policy $y\notin R$. It follows from the definition of $R$ that $\phi(y)\notin E$, and hence $\phi^2(y)\succ_A\phi(y)$. Moreover, \eqref{Equation-DefinitionR} establishes that either $\phi(y)\succ_A \psi(y;V)$ or $\phi^2(y)\succ_A \psi(\phi(y);V)$. In the former case, Statement (a) is established for $x^0=y$. In the latter case, \eqref{Equation-DefinitionR} implies that $\phi(y)\notin R$, and therefore, $\phi^2(y)\notin E$. Hence, $\phi^3(y)\succ_A \phi^2(y)\succ_A\psi(\phi(y);V)$. Statement (a) is now established for $x^0=\phi(y)$.
\end{document}